\tikzset{
semi/.style={
  semicircle,
  draw,
  minimum size=2em
  }
}
\tikzset{cross/.style={cross out, draw=black, minimum size=2*(#1-\pgflinewidth), inner sep=0pt, outer sep=0pt},
cross/.default={6pt}}
\definecolor{ttffqq}{rgb}{0.2,1,0}
\definecolor{rouge}{rgb}{229,10,10}
\definecolor{fuqqzz}{rgb}{0.9568627450980393,0,0.6}
\definecolor{qqwuqq}{rgb}{0,0.39215686274509803,0}
\definecolor{ccqqqq}{rgb}{0.8,0,0}
\definecolor{qqqqff}{rgb}{0,0,1}
\definecolor{ududff}{rgb}{0.30196078431372547,0.30196078431372547,1}
\newcommand{\fonction}[5]{#1:\begin{array}{l|rcl}
 & #2 & \longrightarrow & #3 \\
    & #4 & \longmapsto & #5 \end{array}}
\newcommand{\p}{{\bar{p}}}
\newcommand{\m}{{\bar{m}}}
\renewcommand{\i}{{\bar{i}}}
\renewcommand{\j}{{\bar{j}}}
\newcommand{\x}{{\bar{x}}}
\renewcommand{\a}{{\bar{a}}}
\renewcommand{\b}{{\bar{b}}}
\newcommand{\n}{\bar{n}}
\renewcommand{\Box}{\square}
\newcommand{\Diam}{\Diamond}
\newcommand{\B}{{\bar{B}}}
\newcommand{\D}{{\bar{D}}}
\newcommand{\Q}{{\mc{Q}}}
\newcommand{\R}{{\mc{R}}}
\newcommand{\C}{{\overrightarrow{C}}}
\newcommand{\wh}{\widehat}
\newcommand{\mb}{\mathbb}
\newcommand{\mc}{\mathcal}
\def\restriction#1#2{\mathchoice
             {\setbox1\hbox{${\displaystyle #1}_{\scriptstyle #2}$}
              \restrictionaux{#1}{#2}}
              {\setbox1\hbox{${\textstyle #1}_{\scriptstyle #2}$}
              \restrictionaux{#1}{#2}}
              {\setbox1\hbox{${\scriptstyle #1}_{\scriptscriptstyle #2}$}
              \restrictionaux{#1}{#2}}
              {\setbox1\hbox{${\scriptscriptstyle #1}_{\scriptscriptstyle #2}$}
              \restrictionaux{#1}{#2}}}
\def\restrictionaux#1#2{{#1\,\smash{\vrule height .8\ht1 depth .85\dp1}}_{\,#2}} 
\definecolor{ttffqq}{rgb}{0.2,1,0}
\definecolor{rouge}{rgb}{229,10,10}
\definecolor{fuqqzz}{rgb}{0.9568627450980393,0,0.6}
\definecolor{qqwuqq}{rgb}{0,0.39215686274509803,0}
\definecolor{ccqqqq}{rgb}{0.8,0,0}
\definecolor{qqqqff}{rgb}{0,0,1}
\definecolor{ududff}{rgb}{0.30196078431372547,0.30196078431372547,1}
\newtheorem{theorem}{Theorem}
\newtheorem{proposition}[theorem]{Proposition}
\newtheorem{corollary}[theorem]{Corollary}
\newtheorem{lemma}[theorem]{Lemma}
\theoremstyle{remark}
\newtheorem{remark}[theorem]{Remark}
\theoremstyle{definition}
\newtheorem{definition}[theorem]{Definition}
\newtheorem{example}[theorem]{Example}
\title{Duality for the existential fragment of first-order logic on words with numerical predicates of a fixed arity}
\author{
  Mehdi Zaidi \\
  PhD Student \\
  Laboratoire Jean Alexandre Dieudonné (LJAD) \\
  Nice 
}
\begin{document}
\maketitle

\pgfplotsset{compat=1.15}

\begin{abstract}
This article fits in the area of research that investigates the application of topological duality methods to problems that appear in theoretical computer science. 
One of the eventual goals of this approach is to derive results in computational complexity theory by studying appropriate topological objects which characterise them. 
The link which relates these two seemingly separated fields is logic, more precisely a subdomain of finite model theory known as logic on words. 
It allows for a description of complexity classes as certain families of languages, possibly non-regular, on a finite alphabet. Very few is known about the duality theory relative to fragments of first-order logic on words which lie outside of the scope of regular languages. The contribution of our work is a detailed study of such a fragment. Fixing an integer $k \geq 1$, we consider the Boolean algebra $\mc B \Sigma_1[\mc N^{u}_k]$. It corresponds to the fragment of logic on words consisting in Boolean combinations of sentences defined by using a block of at most $k$ existential quantifiers, letter predicates and uniform numerical predicates of arity $l \in \{1,...,k\}$. We give a detailed study of the dual space of this Boolean algebra, for any $k \geq 1$, and provide several characterisations of its points. In the particular case where $k=1$, we are able to construct a family of ultrafilter equations which characterise the Boolean algebra $\mc B \Sigma_1[\mc N^{u}_1]$. We use topological methods in order to prove that these equations are sound and complete with respect to the Boolean algebra we mentioned.
\end{abstract}

\keywords{General topology \and Stone duality \and Logic on words \and Descriptive complexity theory \and Distributive lattices \and Boolean algebra \and Vietoris hyperspaces \and Finite colourings \and Ultrafilter equations}

This article lies at the intersection of formal language theory and duality theoretic methods.
The contribution of this article is a deepening of the knowledge currently available on existential quantification for logic on words. The consequences of applying one layer of existential quantifier to Boolean algebras of languages defined by formulas with free first-order variables, and their counterpart at the level of topological recognisers have already been well studied in \cite{GPR}. 
On the algebraic side, we apply one layer of existential quantifier, while on the topological side, we take the Vietoris hyperspace.  Yet, unlike in the case of finite recognisers where a minimisation algorithm is available, there exists, at the moment, no well-known procedure which would allow to directly derive the minimal topological recognizer out of a given topological recogniser.
Indeed, the minimal topological recognizer corresponds to the dual space of the Boolean algebra in question, and very few concrete computations of dual spaces for fragments of logic on words which lie outside of the regular case are available. This work provides a thorough study in the case where the Boolean algebra we quantify over consists of exactly every quantifier-free formula, and we add one layer of quantification. In the case $k=1$, a characterisation of the dual space of this Boolean algebra had already been discovered in an unpublished paper of Gehrke, Krebs and Pin. This allowed for a description of a similar fragment of first-order logic on words in \cite{GKP} in terms of ultrafilter equations, that is, a family of pairs of ultrafilter on words which are sufficient to characterise it.
The idea is the following. Fix a finite alphabet $A$. A Boolean algebra of languages $\mc B$ is a subalgebra of $\mc P(A^*)$, and thus the canonical embedding provides a continuous quotient map $\pi:\beta(A^*) \twoheadrightarrow \mc S(\mc B)$. By definition, $\pi$ sends an ultrafilter $\gamma \in \beta(A^{*})$ to $\{L \in \mc B \colon L \in \gamma \}$. Therefore, an equivalent way to say that $\mc B$ satisfies the ultrafilter equation $\gamma_1 \leftrightarrow \gamma_2$, for a pair $(\gamma_1,\gamma_2) \in \beta(A^{*})^2$, is to say that $\pi(\gamma_1) = \pi(\gamma_2)$. This means that a first object one needs to study in details in order to exhibit a basis of ultrafilter equations for $\mc B$ is its dual space. 

\underline{Overview of the article:}
Fix a finite alphabet $A$. Fixing an integer $k \geq 1$, we consider the Boolean algebra $\mc B \Sigma_1[\mc N^{u}_k]$. It corresponds to the fragment of first-order logic on words consisting in Boolean combinations of sentences defined by using a block of at most $k$ existential quantifiers, letter predicates and uniform numerical predicates of arity $l \in \{1,...,k\}$. In section \ref{sect:notations}, we introduce all of the material required in order to lead our study. We give a presentation of duality theory for Boolean algebras, and the relationship between modal algebras and Vietoris hyperspaces. Then, we give an introduction to logic on words. Finally, we give more details on the notion of ultrafilter equations that we mentioned previously. In section \ref{sect:dual}, we study the Boolean algebra $\mc B \Sigma_1[\mc N^{u}_k]$ and its dual space, that we denote by $X_k$. Our contribution is a complete study of this dual space, as it should be noted that a characterisation of $X_k$ was only known in the case $k=1$. We provide several characterisations of $X_k$. First, by exploiting the duality between modal algebras and Vietoris hyperspaces, we identify $X_k$ as a certain subspace of the Vietoris hyperspaces on $\beta(\mb N^k)$ to the power $A^{k}$. Second, we follow an approach that, broadly speaking, relies on us viewing the elements of $X_k$ as ``generalized words". Fixing a finite colourings of $\mb N^k$, we explain how it is sometimes possible to define an actual finite word which corresponds to an element in $\mc V(\beta(\mb N^k))^{A^{k}}$, with respect to this finite colouring. We prove that $X_k$ consists exactly in the elements of $\mc V(\beta(\mb N^k))^{A^{k}}$ that verify this property for every single finite colouring of $\mb N^k$. In the case $k=1$, we prove that it is possible to simplify this characterisation even more. In section \ref{sect:ultraequa} and \ref{section:equasig1}, we treat the question of ultrafilter equations for $\mc B \Sigma_1[\mc N^{u}_1]$. A family of ultrafilter equations for $\mc B \Sigma_1[\mc N_{0},\mc N^{u}_1]$, the fragment obtained from $\mc B \Sigma_1[\mc N^{u}_{1}]$ by adding nullary numerical predicates, has already been introduced in \cite{GKP}. We prove that it is possible to describe a general family of ultrafilter equations which can be used to describe any of these fragments.
Our approach is based on the idea that it is possible to reformulate the ultrafilter equations in terms of a certain condition over some finite colourings of $\mb N$. By doing so, we greatly reduce the amount of combinatorics required in \cite{GKP} to prove the soundness and completeness. We use this approach to find a basis of ultrafilter equations for $\mc B \Sigma_1[\mc N^{u}_1]$, and check their soundness and completeness.

\section{Notations and preliminaries}\label{sect:notations}

This section introduces the notions of duality theory and formal language theory necessary in the following sections. The reader interested in a deeper treatment can refer to \cite{Stone}, \cite{John} and \cite{Strau}.

\subsection{Duality theory for Boolean algebras}

\begin{definition}\label{def:BA}[Boolean algebras]
A \emph{Boolean algebra} is a bounded distributive lattice such that every elements admits a complement.
For any Boolean algebras $\mc B$, $\mc B'$, a Boolean algebra homomorphism is a map $f:\mc B \to \mc B'$ such that, for any $b_1,b_2 \in \mc B$, 
$f(b_1\wedge_{\mc B} b_2)=f(b_1)\wedge_{\mc B'} f(b_2)$, $f(b_1\vee_{\mc B} b_2)=f(b_1)\vee_{\mc B'} f(b_2)$
and $f(\neg b_1)=\neg f(b_1)$. \newline We denote by $\mathbf{Bool}$ the category for which the objects are Boolean algebras and morphisms are Boolean algebra homomorphisms.
\end{definition} 

The main example of a Boolean algebra is the powerset algebra of a set $S$, $\mc P(S)$, considered with set-theoretic union, intersection, and complement. An important observation is the following: in order to reconstruct the Boolean algebra $\mc P(S)$, the data of the singletons $\{s\}$, for every $s \in S$, is actually sufficient. Abstracting this situation to any Boolean algebra leads to introducing the notion of ultrafilter. 

\begin{definition} 
Fix $\mathcal{B}$ a Boolean algebra. A \emph{filter} $\mc F$ of $\mathcal{B}$ is a non-empty subset of $\mathcal{B}$ satisfying the following properties.
\begin{itemize}
\item For every $b$ in $\mc F$, and every $b'$ in $\mathcal{B}$ such that $b \leq b'$, we have that $b'$ is in $\mc F$.
\item For every $b$ and $b'$ in $\mc F$, $b \wedge b'$ is in $\mc F$.
\end{itemize}
If $\mc F$ is a filter that is not equal to $\mathcal{B}$, we say that $\mc F$ is \emph{proper}. 
A \emph{filter basis} $\mc S$ of $\mc B$ is a non-empty family of elements of $\mc B$ such that, for every $b_1, b_2 \in \mc S$, there exists a non-bottom element $c \in \mc S$ such that $c \leq b_1 \wedge b_2$.  Every filter is completely determined by any of its filter bases.
\end{definition}

We denote by $Filt(\mc B)$ the set of filters of $\mc B$. For any set $S$, we use notation $Filt(S)$ to refer to the set of filters of the powerset algebra $\mc P(S)$, and we abusively refer to those as the filters on $S$.

\begin{example}\label{example:pcpetcofini}
We provide the two following examples of filters, which will be often used throughout the article. 
\begin{itemize}
\item For any Boolean algebra $\mc B$, and for any element $b \in \mc B$, the set \[{\uparrow}b := \{b' \in \mc B \colon b \leq b'\}\] is a filter in $\mc B$ that we refer to as the \emph{principal filter} containing $b$. 
\item For any set $S$, the set of all cofinite subsets of $S$ \[Cof(S) := \{T \subseteq S  \colon  T^{c} \text{  is finite} \}\]  is a filter that we refer to as the \emph{Fréchet filter}.
\end{itemize}
\end{example}

The fundamental idea behind Stone's construction is that, by equipping the set of all ultrafilters of $\mathcal{B}$ with an appropriate topology, it is possible to recover the Boolean algebra $\mathcal{B}$.

\begin{definition}\label{def:ultrafilterBA}
An \emph{ultrafilter} $\gamma$ of $\mathcal{B}$ is a proper filter of $\mathcal{B}$ such that, for every $b$ in $\mathcal{B}$, $b$ or $\neg b$ is in $\gamma$. We denote by $\mathcal{S}(\mathcal{B})$ the space of all ultrafilters of $\mathcal{B}$, endowed with the topology generated by the sets of the form 
\[ \widehat{b} := \{\gamma \in \mathcal{S}(\mathcal{B}) \colon b \in \gamma \},\]
for every $b$ in $\mathcal{B}$. We refer to $\mc S(\mc B)$ as the \emph{dual space} of $\mc B$.
\end{definition}

\begin{definition}[Boolean spaces] \label{def:Stonespace}
A \emph{Boolean space} is a topological space that is compact, Hausdorff, and that possesses a basis of clopen subsets. We denote by $Clop(X)$ the Boolean algebra of clopen subsets of any Boolean space $X$, equipped with union, intersection and complement of subsets. We denote by \textbf{BStone} the category for which the objects are Boolean spaces and the arrows are continuous maps.
\end{definition}

\begin{theorem}[Stone duality for Boolean algebras, \cite{Stone} Theorem 67]\label{thrm:Stoneduality}
There is a contravariant equivalence of categories between the category of Boolean algebras and the category of Boolean spaces.
\end{theorem}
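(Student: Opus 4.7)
The plan is to exhibit the contravariant equivalence explicitly by constructing the two functors, proving each yields an object of the right type, and then producing natural isomorphisms between the two round-trip compositions and the identity functors. First, I would define $\mc S : \mathbf{Bool} \to \mathbf{BStone}$ as in Definition \ref{def:ultrafilterBA}, acting on a homomorphism $f : \mc B \to \mc B'$ by $\mc S(f)(\gamma) := f\inv(\gamma)$; since $\mc S(f)\inv(\wh b) = \wh{f(b)}$, continuity is automatic from the definition of the basis. In the other direction I would define $\mr{Clop} : \mathbf{BStone} \to \mathbf{Bool}$, sending a continuous map $g : X \to Y$ to the inverse-image map $g\inv : \mr{Clop}(Y) \to \mr{Clop}(X)$, which is easily seen to preserve all Boolean operations. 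Functoriality is then just the identities $(g \circ h)\inv = h\inv \circ g\inv$ and $(f \circ f')\inv = f'^{\,-1} \circ f\inv$ for preimages.

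Next I would verify that $\mc S(\mc B)$ really is a Boolean space. The sets $\wh b$ form a basis of clopens because $\wh{\neg b} = \mc S(\mc B) \setminus \wh b$ and $\wh{b_1 \wedge b_2} = \wh{b_1} \cap \wh{b_2}$. Hausdorffness follows from the fact that two distinct ultrafilters $\gamma \neq \gamma'$ are separated by some $b \in \gamma \setminus \gamma'$, so $\gamma \in \wh b$ and $\gamma' \in \wh{\neg b}$. Compactness is the main non-trivial step and relies on the Boolean prime ideal theorem: if $\{\wh{b_i}\}_{i \in I}$ has the finite intersection property, then $\{b_i\}$ generates a proper filter of $\mc B$, which by Zorn's lemma extends to an ultrafilter $\gamma$, witnessing $\gamma \in \bigcap_i \wh{b_i}$.

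The equivalence is then witnessed by two natural transformations. Define $\eta_{\mc B} : \mc B \to \mr{Clop}(\mc S(\mc B))$ by $\eta_{\mc B}(b) := \wh b$; this is a Boolean algebra homomorphism by the basis identities above, it is injective again by the prime ideal theorem (two distinct elements are separated by some ultrafilter), and it is surjective because any clopen $U \subseteq \mc S(\mc B)$ is a finite union of basic clopens $\wh{b_i}$ by compactness, hence equals $\wh{\bigvee_i b_i}$. Define $\varepsilon_X : X \to \mc S(\mr{Clop}(X))$ by $\varepsilon_X(x) := \{U \in \mr{Clop}(X) \colon x \in U\}$; this is well-defined as an ultrafilter, injective because $X$ is Hausdorff with a clopen basis, surjective by compactness (an ultrafilter of clopens has the finite intersection property, so meets at some point), continuous because $\varepsilon_X\inv(\wh U) = U$, and hence a homeomorphism as a continuous bijection between compact Hausdorff spaces. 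Naturality of $\eta$ and $\varepsilon$ in $\mc B$ and $X$ respectively is a direct unwinding of the definitions.

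The main obstacle is the compactness of $\mc S(\mc B)$ together with the injectivity of $\eta_{\mc B}$, both of which ultimately rest on the Boolean prime ideal theorem: without it one cannot guarantee enough ultrafilters to separate elements of $\mc B$ or to certify the finite intersection property. Once that ingredient is in hand, the remaining verifications are essentially bookkeeping, since the duality between the operations on $\mc B$ and the basic clopens of $\mc S(\mc B)$ is built into the definition of the topology.
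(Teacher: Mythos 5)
The paper does not prove this theorem; it is quoted as a classical result with a citation to the literature. Your argument is the standard and correct proof of Stone duality: the functors $\mc S$ and $\mr{Clop}$, the verification that $\mc S(\mc B)$ is compact Hausdorff with clopen basis, and the unit and counit isomorphisms are all set up as in the textbook treatment, with the Boolean prime ideal theorem correctly identified as the only substantive input (for compactness, for the injectivity of $b \mapsto \wh b$, and for the surjectivity of $\varepsilon_X$). Nothing is missing beyond routine verifications you explicitly flag as bookkeeping, so there is no gap to report.
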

In particular, for any Boolean algebra $\mathcal{B}$, the map $(\widehat{\cdot}): \mathcal{B} \to Clop(\mathcal{S}(\mathcal{B}))$ which sends any element $b$ in $\mathcal{B}$ to the clopen subset $\wh{b}$
is a Boolean isomorphism.

\subsection{The dual space of the powerset algebra}\label{subsec:remainder}

\subsubsection*{General setting}
We focus on the duality theory in the particular case where the Boolean algebra we consider in Theorem \ref{thrm:Stoneduality} is the powerset algebra of a given set. Several additional observations can be made which will be useful in the next chapters, since most of the Boolean algebras we consider are of this type.

The forgetful functor from compact Hausdorff topological spaces into all topological spaces has a left adjoint, that we denote by $\beta$, which sends a general topological space to a compact Hausdorff topological space called its \emph{\v Cech-Stone compactification}. 

The \v Cech-Stone compactification of a space $S$ can be defined by the following universal property: for any compact Hausdorff space $X$ and any continuous map $f: S \to X$, there exists a unique continuous map $g: \beta(S) \to X$ such that the following diagram commutes.

\[\begin{tikzcd}
S\arrow[r,hook,"\iota_S"]\arrow[dr,"f"']& \beta(S)\arrow[d,dashrightarrow,"g"]\\& X
\end{tikzcd}\]

In particular, for any map $f: S \to T$ there exists a unique continuous map 
\[\beta f: \beta(S) \to \beta(T)\] which extends $f$, and it is defined as follows: for any $\alpha \in \beta(S)$, 
\[ \beta f(\alpha) = \{ P \subseteq T \colon f^{-1}(P) \in \alpha \}.\]

For any set $S$, the Stone dual of the powerset algebra $\mc P(S)$ corresponds to the \emph{\v Cech-Stone compactification} of the discrete space $(S,\tau_{disc})$ (see \cite{John} III, 2.1).
Endowing $S$ with the discrete topology, $S$ can be embedded as a dense subspace of $\beta(S)$ by considering the injective map $\iota_{S}: S \to \beta(S)$ which sends any $s$ in $S$ to the principal ultrafilter ${\uparrow}\{s\}$ (we use the abusive notation ${\uparrow}s$).
We denote by $\beta(S)\setminus S$ the closed subset of $\beta(S)$ of all the free ultrafilters, that is, the ultrafilters that are not principal. We refer to this as \emph{the remainder} of $\beta(S)$, and we often use the notation $^{*}S := \beta(S) \setminus S$. 

\begin{lemma}\label{lemma:cofiniteremainder}
For any set $S$, and any ultrafilter $\alpha \in \beta(S)$, we have that $\alpha \in {^{*}S}$ if, and only if, $\alpha$ contains all cofinite sets.
\end{lemma}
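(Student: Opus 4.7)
The plan is to prove both directions by directly unwinding the definitions of an ultrafilter and of the principal ultrafilters ${\uparrow}\{s\} = \iota_S(s)$ that identify $S$ as a subspace of $\beta(S)$. The only ingredient beyond Definition \ref{def:ultrafilterBA} that I will invoke is the standard observation that, in any Boolean algebra, an ultrafilter $\alpha$ containing a finite join $b_1 \vee \cdots \vee b_n$ must contain at least one of the $b_i$. This is a short induction: if all $\neg b_i$ lay in $\alpha$, then $\neg b_1 \wedge \cdots \wedge \neg b_n = \neg(b_1 \vee \cdots \vee b_n)$ would also lie in $\alpha$, contradicting properness since $b_1 \vee \cdots \vee b_n \in \alpha$.

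For the direction ``$\alpha \in {}^{*}S \Rightarrow Cof(S) \subseteq \alpha$'', I would argue by contrapositive. Suppose $T \subseteq S$ is cofinite, with $T^c = \{s_1,\ldots,s_n\}$, and that $T \notin \alpha$. Then $T^c = \{s_1\} \cup \cdots \cup \{s_n\} \in \alpha$ by the defining property of an ultrafilter, so by the auxiliary fact above some singleton $\{s_i\}$ lies in $\alpha$. But then ${\uparrow}\{s_i\} \subseteq \alpha$, and since ${\uparrow}\{s_i\}$ is already a maximal proper filter (being the principal ultrafilter $\iota_S(s_i)$), we get $\alpha = \iota_S(s_i)$, contradicting $\alpha \in {}^{*}S = \beta(S)\setminus S$.

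For the direction ``$Cof(S) \subseteq \alpha \Rightarrow \alpha \in {}^{*}S$'', I would again contrapose: if $\alpha \notin {}^{*}S$, then $\alpha = \iota_S(s) = {\uparrow}\{s\}$ for some $s \in S$, so $\{s\} \in \alpha$. Since $S \setminus \{s\}$ is cofinite, the hypothesis gives $S \setminus \{s\} \in \alpha$ as well, and the filter axiom of closure under binary meet then yields $\emptyset = \{s\} \cap (S \setminus \{s\}) \in \alpha$, contradicting the properness of $\alpha$.

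Neither implication is deep; there is no real obstacle. The only point to be slightly careful about is the auxiliary ``finite join'' lemma, but this is immediate from Definition \ref{def:ultrafilterBA} together with De Morgan in any Boolean algebra, and once it is in place the rest is a direct bookkeeping with the filter axioms.
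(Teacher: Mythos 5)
Your proof is correct and follows essentially the same route as the paper: both reduce the statement to the dichotomy that an ultrafilter missing a cofinite set must contain its finite complement and hence be principal, and conversely that a principal ultrafilter ${\uparrow}\{s\}$ excludes the cofinite set $S\setminus\{s\}$. The only difference is that you spell out the standard fact that an ultrafilter containing a finite set is principal (via the prime-filter property), which the paper simply cites as known.
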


\begin{proof}
Fix a set $S$ and an ultrafilter $\alpha \in \beta(S)$.
We prove the negation of this equivalence, that is, that $\alpha$ is a principal ultrafilter if, and only if, there exists a cofinite set which does not belong to $\alpha$.
For the left-to-right implication, suppose that $\alpha$ is of the form ${\uparrow}s$ for some $s \in S$. Then the set $Q_s := S \setminus \{s\}$ is cofinite, and does not belong to $\alpha$.
For the right-to-left implication, assume that there is a cofinite subset $Q$ of $S$ which does not belong to $\alpha$. Then, since $\alpha$ is an ultrafilter, $Q^{c}$, which is finite, does belong to $\alpha$. Since an ultrafilter which contains a finite set is necessarily principal, we conclude that $\alpha$ is principal.
\end{proof}

Let us say a few things about clopen subsets of $\beta(S)$. First, since $(\wh{\cdot}) : \mc P(S) \to Clop(\beta(S))$ is bijective, every clopen subset $K$ of $\beta(S)$ is of the form $\wh{Q}$, for some $Q \subseteq S$. We can also prove (\cite{Walker}, Proposition 3.13) that every clopen subset of the remainder is of the form
\[^{*}Q:=\wh{Q}\setminus Q := \{\alpha \in {^{*}S} \colon Q \in \alpha \},\] for some infinite subset $Q$ of $S$. 

\begin{lemma}\label{lemma:Booleanrulesremainder}
For any set $S$, if $Q_1$ and $Q_2$ are two infinite subsets of $S$, then the following statements hold. \newline
(1): $^{*} Q_1\subseteq {^{*}} Q_2$ if, and only if, $Q_1 \setminus Q_2$ is finite. \newline
(2): $^{*} Q_1 = {^{*}} Q_2$ if, and only if, the symmetric difference $Q_1\Delta Q_2$ is finite. \newline
(3): $^{*} Q_1 \cap {^{*}} Q_2$  is non-empty if, and only if, $Q_1 \cap Q_2$ is infinite.
\end{lemma}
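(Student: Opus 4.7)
The plan is to reduce all three statements to the interplay between Lemma \ref{lemma:cofiniteremainder} (ultrafilters in the remainder are exactly those containing $Cof(S)$) and the standard ultrafilter-extension principle: any family with the finite intersection property extends to an ultrafilter by Zorn's lemma. The overall strategy is to prove (1), deduce (2) as an immediate corollary, and handle (3) separately by the same extension technique.

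For (1), I would first handle the easy implication ($\Leftarrow$). Given $\alpha \in {^{*}}Q_1$, the set $(Q_1 \setminus Q_2)^{c} = Q_1^{c} \cup Q_2$ is cofinite, hence in $\alpha$ by Lemma \ref{lemma:cofiniteremainder}; intersecting with $Q_1 \in \alpha$ forces $Q_1 \cap Q_2 \in \alpha$, and thus $Q_2 \in \alpha$. For the converse ($\Rightarrow$), I would argue contrapositively: assuming $Q_1 \setminus Q_2$ is infinite, I want to manufacture a witness $\alpha \in {^{*}}Q_1 \setminus {^{*}}Q_2$. The natural candidate is an ultrafilter extension of the family $\{Q_1 \setminus Q_2\} \cup Cof(S)$. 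I would check the finite intersection property by noting that any finite intersection of cofinite subsets is cofinite, and intersecting a cofinite set with the infinite set $Q_1 \setminus Q_2$ yields an infinite (in particular nonempty) set. The resulting $\alpha$ lies in ${^{*}}S$ by Lemma \ref{lemma:cofiniteremainder}, contains $Q_1$ (as a superset of $Q_1 \setminus Q_2$), and excludes $Q_2$ since $Q_2^{c} \supseteq Q_1 \setminus Q_2 \in \alpha$.

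Statement (2) then follows immediately by applying (1) in both directions: $Q_1 \Delta Q_2 = (Q_1 \setminus Q_2) \cup (Q_2 \setminus Q_1)$ is finite precisely when both differences are finite, which by (1) is equivalent to ${^{*}}Q_1 \subseteq {^{*}}Q_2$ and ${^{*}}Q_2 \subseteq {^{*}}Q_1$ simultaneously.

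For (3), the direction ($\Rightarrow$) is direct: any $\alpha \in {^{*}}Q_1 \cap {^{*}}Q_2$ is a free ultrafilter containing both $Q_1$ and $Q_2$, hence $Q_1 \cap Q_2 \in \alpha$, and by Lemma \ref{lemma:cofiniteremainder} such an $\alpha$ cannot contain any finite set, so $Q_1 \cap Q_2$ must be infinite. For ($\Leftarrow$), I would repeat the extension argument with the basis $\{Q_1 \cap Q_2\} \cup Cof(S)$, whose finite intersections are again infinite because $Q_1 \cap Q_2$ is infinite and cofinite sets only remove finitely many elements. The resulting ultrafilter lies in ${^{*}}Q_1 \cap {^{*}}Q_2$. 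The only minor subtlety throughout is the bookkeeping of the FIP check, but this is routine; there is no substantial obstacle since Lemma \ref{lemma:cofiniteremainder} has already isolated the essential dichotomy between principal and free ultrafilters.
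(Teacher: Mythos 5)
Your proof is correct and complete. Note that the paper does not actually prove this lemma: it simply cites Walker, Proposition 3.14, so there is no internal argument to compare against. Your route is the standard self-contained one, and it is exactly the right way to make the lemma depend only on material already established in the paper, namely Lemma \ref{lemma:cofiniteremainder} together with the ultrafilter extension principle. All the key checks are in order: in (1)($\Leftarrow$) the identity $Q_1 \cap (Q_1^{c} \cup Q_2) = Q_1 \cap Q_2$ combined with upward closure does give $Q_2 \in \alpha$; in the two extension arguments the finite intersection property holds because a cofinite set meets any infinite set in an infinite set, and the resulting ultrafilter is free by Lemma \ref{lemma:cofiniteremainder}; and (2) is indeed an immediate corollary of (1) applied in both directions. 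The one point worth writing out explicitly in (3)($\Rightarrow$) is why a free ultrafilter cannot contain a finite set --- but the paper's own proof of Lemma \ref{lemma:cofiniteremainder} already records that an ultrafilter containing a finite set is principal, so you may simply invoke that. In short, your proposal supplies a proof where the paper defers to an external reference, and it does so correctly.
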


\begin{proof}
See \cite{Walker}, Proposition 3.14.
\end{proof}

\subsubsection*{Modal algebra and the Vietoris functor}\label{section:modalalgebra}

In \cite{Vietoris}, Vietoris introduced a generalization of the notion of
Hausdorff metrics on any compact Hausdorff space: the so-called Vietoris hyperspace of a topological space. 

\begin{definition}[Vietoris hyperspace of a Boolean space \cite{Vietoris}]\label{def:vietoris}
For any Boolean space $X$, we denote by $\mc V(X)$ the set of closed subsets of $X$. We endow it with the topology generated by the sets of the form 

\[\Box K := \{C \in \mathcal{V}(X) \colon C \subseteq K\} \text{  and  } \Diam  K := \{C \in \mathcal{V}(X) \colon C \cap K \neq \emptyset\}  ,\]
for every clopen subset $K$ of $X$, and we refer to this topological space as the \emph{Vietoris hyperspace} of $X$.  
\end{definition}

We make a few remarks on this construction. First, observe that for any clopen subset $K$ of $X$, 
\[\Diam K = (\Box K^{c})^{c},\] therefore elements of the form $\Diam K$ can be replaced by elements of the form $(\Box K')^{c}$, where $K' = K^{c}$. Therefore, we may also define $\mc V(X)$ by taking $\{\Box K, (\Box K)^{c} \colon K \in Clop(X) \}$ as a basis.
We also note that $\Box$ is meet-preserving, while $\Diam$ is join preserving.
In particular, the family of the clopen subsets of the form
\[ \langle K,K_1,...,K_n  \rangle := \Box K \cap \bigcap_{i=1}^{n} \Diam K_i,\]
where $n \geq 1$ and $K, K_1,...,K_n$ are clopen subsets of $X$ provides a basis for the Vietoris topology.

\begin{remark}\label{remark:inclusionQcaa}
Fix a finite sequence of clopen subsets $K, K_1,...,K_n \subseteq X$.
A simple, yet important, observation is that, for every clopen $K' \subseteq X$ such that $K \subseteq K'$, we have \[\langle K,K_1,...,K_n \rangle \subseteq \mc \langle K',K_1,...,K_n \rangle.\]
The same way, for every finite sequence of clopen subsets $K'_1,...,K'_n \subseteq X$ such that, for every $i \in \{1,...,n\}$, $K_i \subseteq K'_i$, we have \[\langle K,K_1,...,K_n \rangle \subseteq \mc  \langle K,K'_1,...,K'_n \rangle.\]
\end{remark}

Because $X$ is a Boolean space, the Vietoris hyperspace of $X$ is also a Boolean space, see \cite{Ernest}, Theorem 4.9. Also, note that any clopen of $\mc V(X)$ is a compact space and thus can be written as a finite union of clopens of the form $\langle K,K_1,...,K_n  \rangle$, where $n \in \mb N$ and $K, K_1,...,K_n$ are clopen subsets of $X$.

\begin{remark}\label{remark:PUfilter}
For any Boolean algebra $\mc B$, any set $S$ and any map $f: S \to \mc V(\mc S(\mc B))$ the universal property of Cech-Stone compactification states that $f$ admits a unique continuous extension $g : \beta(S) \to \mc V(\mc S(\mc B))$, defined by sending any ultrafilter $\alpha \in \beta(S)$ to
\[g(\alpha) := \bigcap_{b \in \mc B \atop \{s \in S \colon f(s) \subseteq \wh{b} \} \in \alpha }\wh{b}.\]
This remark will be useful in order to prove Proposition \ref{prop:carcImultrafilter} 
\end{remark}

Considering $\mc B$, the Boolean algebra dual to $X$, we can understand the Vietoris hyperspace on $X$ by equipping the set of filters of $\mc B$ with an appropriate topology. Since our approach relies on a more topological understanding of problems, we chose to conduct most of our reasoning in terms of closed subsets. However, it should be noted that it is only a matter of preference, and that one could formulate all of the results from section \ref{sect:dual} involving closed subsets in terms of filters instead.

\begin{proposition}\label{prop:correspfilterclosed}
Let $\mc B$ be a Boolean algebra and $X = \mc S(\mc B)$ its dual space.
The Vietoris hyperspace of $X$ is homeomorphic to the space of filters of $\mathcal{B}$, \[\mathcal{V}(X) \simeq Filt(\mathcal{B}),\] where $Filt(\mathcal{B})$ is endowed with the topology generated by the clopen sets of the form $[b]$ and $[b]^{c}$, where for every $b \in \mathcal{B}$, \[ [b]:= \{\mc F \in Filt(\mathcal{B}) \colon b \in \mc F \}.\] 
\end{proposition}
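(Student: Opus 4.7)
The plan is to exhibit an explicit bijection and check it respects the subbasic clopens on each side. Define
\[\Phi : Filt(\mc B) \longrightarrow \mc V(X), \quad \mc F \longmapsto \bigcap_{b \in \mc F} \wh{b},\]
and
\[\Psi : \mc V(X) \longrightarrow Filt(\mc B), \quad C \longmapsto \{ b \in \mc B \colon C \subseteq \wh{b} \}.\]
The first step is routine: $\Phi(\mc F)$ is an intersection of closed sets hence closed, and $\Psi(C)$ is easily seen to be upwards closed and closed under finite meets (since $(\wh{\cdot})$ preserves these operations), so each map is well defined.

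Next I would prove that $\Phi$ and $\Psi$ are mutually inverse. For $\Psi \circ \Phi = \mr{id}$, the inclusion $\mc F \subseteq \Psi(\Phi(\mc F))$ is immediate. For the other inclusion, suppose $C := \Phi(\mc F) \subseteq \wh{b}$. Equivalently, the clopens $\{\wh{c} \colon c \in \mc F\}$ together with $\wh{\neg b}$ have empty intersection in the compact space $X$, so by compactness there exist $c_1, \ldots, c_n \in \mc F$ with $\wh{c_1} \cap \cdots \cap \wh{c_n} \cap \wh{\neg b} = \emptyset$. Using that $(\wh{\cdot})$ is a Boolean isomorphism, this yields $c_1 \wedge \cdots \wedge c_n \leq b$, and since $\mc F$ is a filter, $b \in \mc F$. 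For $\Phi \circ \Psi = \mr{id}$, the inclusion $C \subseteq \Phi(\Psi(C))$ is clear. Conversely, if $\gamma \notin C$, the clopen basis of $X$ provides some $b \in \mc B$ with $\gamma \in \wh{b}$ and $\wh{b} \cap C = \emptyset$, so $C \subseteq \wh{\neg b}$, i.e., $\neg b \in \Psi(C)$, and therefore $\gamma \notin \Phi(\Psi(C))$.

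It remains to verify that $\Phi$ is a homeomorphism. By Definition \ref{def:vietoris} together with the observation that $\Diam \wh{b} = (\Box \wh{\neg b})^c$, a subbasis of clopens for $\mc V(X)$ is given by the sets $\Box \wh{b}$ and $(\Box \wh{b})^c$ for $b \in \mc B$. A direct computation shows
\[\Phi^{-1}(\Box \wh{b}) = \{ \mc F \in Filt(\mc B) \colon \Phi(\mc F) \subseteq \wh{b} \} = \{ \mc F \colon b \in \mc F \} = [b],\]
where the middle equality uses the already proved identity $\Psi \circ \Phi = \mr{id}$. Taking complements gives $\Phi^{-1}((\Box \wh{b})^c) = [b]^c$. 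Hence $\Phi$ pulls a subbasis of $\mc V(X)$ back to the prescribed subbasis of $Filt(\mc B)$, and symmetrically $\Psi$ does the same in the other direction, so $\Phi$ is a homeomorphism.

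The only step I would expect to require care is the compactness argument inside $\Psi \circ \Phi = \mr{id}$: one must remember to invoke the Boolean isomorphism $(\wh{\cdot}) : \mc B \to Clop(X)$ of Theorem \ref{thrm:Stoneduality} to translate the finite topological covering back into an inequality in $\mc B$. Everything else is a bookkeeping exercise once the bijection is set up.
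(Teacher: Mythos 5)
Your proof is correct, and it matches the paper exactly insofar as the paper offers anything to match: Proposition \ref{prop:correspfilterclosed} is stated without proof, but the correspondence the paper records immediately afterwards, namely $\mc F \mapsto C_{\mc F} = \{\gamma \colon \mc F \subseteq \gamma\}$ and $C \mapsto \mc F_{C} = \{b \colon C \subseteq \wh{b}\}$, is precisely your pair $(\Phi,\Psi)$. The compactness argument for $\Psi \circ \Phi = \mr{id}$ and the identification $\Phi^{-1}(\Box\wh{b}) = [b]$ are exactly the points that need checking, and you handle both correctly.
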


For any set $S$, by applying Proposition \ref{prop:correspfilterclosed} to the Boolean algebra $\mc P(S)$, we obtain the following result.

\begin{corollary}\label{cor:correspfilterclosed}
For any set $S$, the Vietoris hyperspace of the \v Cech-Stone compactification of $S$ is homeomorphic to the space of filters on $S$, \[\mathcal{V}(\beta(S)) \simeq Filt(S).\] 
\end{corollary}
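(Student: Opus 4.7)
This corollary is essentially an instantiation of Proposition \ref{prop:correspfilterclosed}, so my plan is to simply apply it to the right Boolean algebra and then unpack the notation. The plan is to take $\mc B = \mc P(S)$ and recall that its dual space $\mc S(\mc P(S))$ coincides with the \v Cech-Stone compactification $\beta(S)$, as recalled in subsection \ref{subsec:remainder} (and in \cite{John} III, 2.1). Applying the proposition then yields the homeomorphism $\mc V(\beta(S)) \simeq Filt(\mc P(S))$.

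To conclude, I would invoke the notational convention introduced just after the definition of $Filt(\mc B)$, namely that $Filt(S)$ abusively denotes $Filt(\mc P(S))$, the set of filters on the underlying set $S$. Under this convention, the displayed homeomorphism becomes $\mc V(\beta(S)) \simeq Filt(S)$, as claimed. It is also worth noting explicitly that the basic clopens $[Q]$ for $Q \subseteq S$, given by $\{\mc F \in Filt(S) \colon Q \in \mc F\}$, correspond under this homeomorphism to the subbasic Vietoris clopens $\Box \wh{Q}$ of $\mc V(\beta(S))$, which identifies the topology on the filter side with the Vietoris topology on the closed-set side.

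There is no real obstacle: the entire content of the statement has already been absorbed into the more general Proposition \ref{prop:correspfilterclosed}, and only the identification $\mc S(\mc P(S)) = \beta(S)$ together with the notational convention is needed to specialise it to this corollary.
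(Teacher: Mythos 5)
Your proposal is correct and matches the paper's own derivation exactly: the corollary is obtained by applying Proposition \ref{prop:correspfilterclosed} to $\mc B = \mc P(S)$, using the identification of $\mc S(\mc P(S))$ with $\beta(S)$ and the notational convention $Filt(S) = Filt(\mc P(S))$. The extra remark matching the clopens $[Q]$ with the Vietoris sets $\Box\wh{Q}$ is a harmless and accurate elaboration.
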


For any Boolean algebra $\mc B$, we denote by $C_{\mc F}$ the closed subset of $\mc S(\mc B)$ corresponding to a filter $\mc F \in Filt(\mc B)$ under this correspondence, and reciprocally, we denote by $\mc F_{C}$ the filter corresponding to a closed subset $C$. 
In the particular case where $\mc B = \mc P(S)$ for some set $S$, we have that, for any $C \in \mc V(X)$,
\[\mc F_{C} = \{Q \subseteq S \colon \forall \alpha \in C, Q \in \alpha \} = \bigcap_{\alpha \in C} \alpha\]
and for any $\mc F \in Filt(S)$,
\[C_{\mc F} =  \{ \alpha \in \beta(S) \colon \mc F \subseteq \alpha  \}.\]

\begin{example}\label{example:filtersclosedsets}
Fix a set $S$. We provide different instances of the correspondence introduced in Corollary \ref{cor:correspfilterclosed}.

\begin{itemize}
\item Fix a subset $Q$ of $S$. The filter of $S$ corresponding to $\wh{Q}$, is the principal filter containing $Q$, since
\[\mc F_{\wh{Q}} = \{P \subseteq S \colon \wh{Q} \subseteq \wh{P} \} = \{P \subseteq S \colon Q \subseteq P \} = {\uparrow} Q.\]
The closed subset of $\beta(S)$ corresponding to ${\uparrow} Q$ is $\wh{Q}$, since
\[C_{{\uparrow} Q} = \bigcap_{P \in {\uparrow} Q} \wh{P} = \bigcap_{Q \subseteq P} \wh{P} = \wh{Q}.\]

\item The closed subset of $\beta(S)$ corresponding to $Cof(S)$, the filter of all of the cofinite subsets of $S$ is the remainder of $S$, and vice-versa. 
Indeed, by Proposition \ref{prop:correspfilterclosed}, we have
\[C_{Cof(S)}:= \{\alpha \in \beta(S) \colon \forall Q \in Cof(S), Q \in \alpha \}\] 
which is equal to $^{*}S$ by Lemma \ref{lemma:cofiniteremainder}, and 
\[\mc F_{^{*}S} = \{Q \subseteq S \colon ^{*}S \subseteq \wh{Q} \} = \{Q \subseteq S \colon ^{*}S \subseteq {^{*}Q} \} \]
which is equal to $Cof(S)$ by Lemma \ref{lemma:Booleanrulesremainder} (1). 
\end{itemize}
\end{example}

The Vietoris construction can be seen as a functor $\mc V:  \textbf{BStone} \to  \textbf{BStone}$ on the category of Boolean spaces and continuous functions. Indeed, if $f: X \to Y$ is continuous, then so is the continuous map $\mc V(f): \mc V(X) \to \mc V(Y)$ which sends a closet subset $C$ of $X$ to $f(C)$. 
We would like to complete the following commutative diagram.

\[\begin{tikzcd}BA\arrow[loop left,"?"] \arrow[r, bend left=50, "\mc S"]& \mathcal{S}tone\arrow[loop right,"\mathcal{V}"] \arrow[l, bend left=50, "Clop"] \end{tikzcd}\]

In order to do so, we need to define a functor $M: \textbf{Bool} \to \textbf{Bool}$ which could be seen as the dual of $\mc V:  \textbf{BStone} \to  \textbf{BStone}$. This functor sends any Boolean algebra to what is called its corresponding formal \emph{modal algebra}. We could summarize modal algebra by saying that, just as Boolean algebras are models of classical logic, modal algebras provide models of propositional modal logic. 
The reader intested in a complete introduction to the framework of modal logic and its uses can refer to \cite{BlackburnModalLogic}, and to \cite{Yde}. More specific results about the relationships between Vietoris topology, modal logic and coalgebras are also available.

\begin{definition}[Modal algebra]
For any Boolean algebra $\mc B$, we denote by $M \mc B$ the free Boolean algebra over the set of formal generators
$\{\Box b \colon b \in \mc B \}$, with the following relations: $\Box 1 = 1$ and for every $b_1, b_2$ in $\mc B$, $\Box(b_1 \wedge b_2) = \Box b_1 \wedge \Box b_2$.
\end{definition}

What this means concretely is that $M \mc B$ can be characterized as the Boolean algebra expansion of $\mc B$ with the property that, for any meet-preserving function between Boolean algebras $h: \mc B \to \mc B'$, there is a unique Boolean algebra homomorphism $\bar{h}: \mc B \to \mc B'$ which extends $h$.
We could have defined $M \mc B$ in a similar fashion by introducing the generators $\Diam b$, for any $b \in \mc B$, with the relations $\Diam 0 = 0$ and for any $b_1, b_2$ in $\mc B$, $\Diam(b_1 \vee b_2) = \Diam b_1 \vee \Diam b_2$. For any $b \in \mc B$, the relation $\Box b = \neg (\Diam \neg b)$ holds. 
It is already fairly transparent that this mirrors the topological structure provided by the Vietoris hyperspace, on the algebraic level.

We finally explain how one can use the Vietoris construction in order to understand the dual space of the modal algebra built on a Boolean algebra $\mc B$.

\begin{proposition}[\cite{John}, Proposition 4.6]\label{prop:dualofMB}
For any Boolean algebra $\mc B$, the dual space of $M\mc B$ is homeomorphic to the Vietoris hyperspace of the dual space of $\mc B$.
\end{proposition}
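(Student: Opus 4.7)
The plan is to combine Proposition \ref{prop:correspfilterclosed} (which identifies $\mc V(\mc S(\mc B))$ with $Filt(\mc B)$) with a direct identification of $\mc S(M\mc B)$ with $Filt(\mc B)$ coming from the universal property of $M\mc B$. So the goal reduces to exhibiting a homeomorphism $\Phi: \mc S(M\mc B) \to Filt(\mc B)$.

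First, I would define $\Phi(\mu) := \{b \in \mc B \colon \Box b \in \mu\}$ for any $\mu \in \mc S(M\mc B)$. The defining relations of $M\mc B$ make this a filter: since $\Box 1 = 1 \in \mu$, it contains $1$; since $\Box(b_1\wedge b_2)=\Box b_1 \wedge \Box b_2$, closure under binary meet follows from $\mu$ being a filter; and upward closure follows from monotonicity of $\Box$, itself a consequence of the meet-preservation relation (if $b \leq b'$ then $b \wedge b' = b$ gives $\Box b \leq \Box b'$).

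Next I would produce the inverse map $\Psi: Filt(\mc B) \to \mc S(M\mc B)$. Given $\mc F \in Filt(\mc B)$, the characteristic function $\chi_{\mc F}: \mc B \to \{0,1\}$ is meet-preserving and sends $1$ to $1$. By the universal property of the modal-algebra construction spelled out just before the proposition, $\chi_{\mc F}$ extends uniquely to a Boolean algebra homomorphism $\overline{\chi}_{\mc F}: M\mc B \to \{0,1\}$, whose kernel complement $\Psi(\mc F) := \overline{\chi}_{\mc F}^{-1}(1)$ is an ultrafilter of $M\mc B$. One checks immediately $\Phi \circ \Psi = \mr{id}_{Filt(\mc B)}$, using the computation $\Box b \in \Psi(\mc F) \iff \overline{\chi}_{\mc F}(\Box b) = 1 \iff \chi_{\mc F}(b) = 1 \iff b \in \mc F$. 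For $\Psi \circ \Phi = \mr{id}_{\mc S(M\mc B)}$, I would use that $M\mc B$ is generated as a Boolean algebra by $\{\Box b \colon b \in \mc B\}$, so two ultrafilters agreeing on all generators coincide, and both $\mu$ and $\Psi(\Phi(\mu))$ agree on $\Box b$ by construction.

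Finally, for the topological part, continuity of $\Phi$ follows from $\Phi^{-1}([b]) = \widehat{\Box b}$, which is clopen in $\mc S(M\mc B)$ by Stone duality (Theorem \ref{thrm:Stoneduality}), and the sets $[b]$ generate the topology on $Filt(\mc B)$ by Proposition \ref{prop:correspfilterclosed}. Since $\mc S(M\mc B)$ is compact and $Filt(\mc B) \simeq \mc V(\mc S(\mc B))$ is Hausdorff (being a Boolean space), a continuous bijection between them is automatically a homeomorphism. Composing with the homeomorphism of Proposition \ref{prop:correspfilterclosed} yields $\mc S(M\mc B) \simeq \mc V(\mc S(\mc B))$. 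The main point requiring care is the application of the universal property: one must check that the function $\chi_{\mc F}$ really is meet-preserving (which uses that $\mc F$ is a filter and in particular upward closed, so $\chi_{\mc F}(b_1 \wedge b_2) = 1$ iff $b_1 \wedge b_2 \in \mc F$ iff both $b_1, b_2 \in \mc F$), and that its extension to $M\mc B$ is indeed a Boolean algebra homomorphism, i.e.\ corresponds to an ultrafilter rather than merely to a filter of $M\mc B$.
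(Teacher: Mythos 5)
Your argument is correct. Note, however, that the paper does not prove this proposition at all: it is stated as a cited result (\cite{John}, Proposition 4.6), so there is no in-paper proof to compare against. What you give is essentially the standard argument from the literature: identify $\mc S(M\mc B)$ with $Filt(\mc B)$ via $\mu \mapsto \{b \colon \Box b \in \mu\}$ and then invoke Proposition \ref{prop:correspfilterclosed}. All the key steps check out: the defining relations $\Box 1 = 1$ and $\Box(b_1 \wedge b_2) = \Box b_1 \wedge \Box b_2$ do yield that $\Phi(\mu)$ is a filter (monotonicity of $\Box$ being derived exactly as you say); $\chi_{\mc F}$ is meet- and top-preserving precisely because $\mc F$ is upward closed and closed under meets; the extension $\overline{\chi}_{\mc F}$ is a Boolean homomorphism into the two-element algebra, so its preimage of $1$ is an ultrafilter; and two ultrafilters of $M\mc B$ agreeing on the generators $\Box b$ coincide because the equalizer of the corresponding homomorphisms into $\{0,1\}$ is a subalgebra containing a generating set. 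Two small points worth making explicit: first, $\Phi(\mu)$ may be the improper filter $\mc B$ (since $\Box 0 \neq 0$ in $M\mc B$), which is consistent only because $Filt(\mc B)$ as used in Proposition \ref{prop:correspfilterclosed} includes the improper filter, corresponding to the empty closed set $\emptyset \in \mc V(\mc S(\mc B))$; second, continuity requires the preimages of both $[b]$ and $[b]^{c}$ to be open, but this is immediate from your computation $\Phi^{-1}([b]) = \widehat{\Box b}$ being clopen. The compact-to-Hausdorff argument then finishes the proof as you state.
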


An important notion we will require in section \ref{sect:dual} is the notion of content of a closed subset of $\beta(S)$. Basically, it consists in only looking at the points in the closed subset which correspond to principal ultrafilters.

\begin{definition}\label{def:contentclosedset}
For any set $S$, the content of a closed subset $C$ of $\beta(S)$ is \[Cont(C):= C \cap S.\]
\end{definition}

Note that the content of a closed subset of $\beta(S)$ may very well be empty in general.

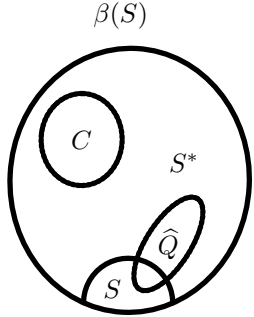
\begin{figure}[H]
\centering
\begin{tikzpicture}[line cap=round,line join=round,>=triangle 45,x=1cm,y=1cm,scale=0.56]
\clip(-9.78,-5.83) rectangle (9.78,5.83);
\draw [rotate around={86.9872124958215:(-1.29,-0.42)},line width=2pt](-1.29,-0.42) ellipse (3.1370270898428045cm and 2.8402709311626646cm);;
\draw[color=black] (80.18315820241701,80.72249301359534);
\draw [shift={(-1.3261689314691283,-3.330114168793185)},line width=2pt]  plot[domain=-0.003611629564258756:3.1379810240255344,variable=\t]({1*1.0591486560175205*cos(\t r)+0*1.0591486560175205*sin(\t r)},{0*1.0591486560175205*cos(\t r)+1*1.0591486560175205*sin(\t r)});
\node at (0,0) {$S^{*}$};
\node at (-1.7,-3) {$S$};
\node at (-1.5,3.5) {$\beta (S)$};
\draw [rotate around={88.89829388479865:(-2.43,0.55)},line width=2pt,dash pattern=on 1pt off 1pt] (-2.43,0.55) ellipse (1.0930491512347127cm and 0.9613825705799576cm)node {$C$};
\draw [rotate around={57.26477372789242:(-0.37,-1.93)},line width=2pt,dash pattern=on 1pt off 1pt] (-0.37,-1.93) ellipse (1.2775360417408788cm and 0.5242121115988744cm)node {$\wh{Q}$};
\end{tikzpicture}
\caption[The space of ultrafilters $\beta(S)$]{The space of ultrafilters $\beta(S)$. Clopen subsets must contain elements of $S$, whilst closed subsets do not need to.}
\end{figure}

For the reader who would rather prefer to reason in terms of filters, the corresponding definition is the following. The content of a filter $\mc F \in Filt(S)$, can be defined as the set of elements of $S$ appearing in every set of the filter: that is, \[Cont(\mc F) := \bigcap \mc F = \bigcap_{Q \in \mc F}Q.\]

\begin{example}
We compute the content of the closed subsets introduced in Example \ref{example:filtersclosedsets}. Fix a set $S$.
\begin{itemize}
\item For any $Q \subseteq S$, the content of the closed subset $\wh{Q}$ of $\beta(S)$ is $\wh{Q} \cap S = Q \cap S=Q$. 
\item The content of the remainder $^{*}S$ is the empty set.
This is an example of a closed subset of $\beta(S)$ with an empty content, but which is not the empty set.
More generally, for any $Q \subseteq S$, the content of the closed subset ${^{*}}Q$ of $\beta(S)$ is
$Q$.
\end{itemize}
\end{example}

\subsection{Logic on words}\label{section:logiconwords}

Throughout the rest of the article, we fix a finite alphabet $A$, and we refer to the elements of $A^{*}$, the free monoid over $A$, as the \emph{finite words} on $A$. 
We denote by $|w|$ the \emph{length} of a finite word $w = w_0...w_{|w|-1} \in A^{*}$, where for every $i \in \{0,...,|w|-1\}$, $w_i$ is in $A$. Finally, we denote by $|w|_{a}$ the number of occurrences of the letter $a$ in the word $w$. A \emph{language} $L$ is a subset of $A^{*}$, and since $\mc P(A^{*})$ is a Boolean algebra for union, intersection and complement, these operations are naturally defined on languages. 

Logic on words stems from the following idea: one way to think about a word $w$ is as a relational structure over $\{0,...,|w|-1\}$, equipped with a unary predicate $a(\cdot)$, for every $a \in A$, which allows us to tell whether the letter at a given position of $w$ is an $a$. 

\subsubsection*{Syntax}

We start by introducing the notion of numerical predicate, which will be the building blocks in order to define the formulas of logic on words. We then introduce the notion of uniformity, which, roughly speaking, will allow us to make the distinction between numerical predicates such that their interpretation takes into account the length of the words we will consider, and the ones that do not. In the next chapters, we mainly focus our study on uniform numerical predicates.

\begin{definition}[Numerical predicates]\label{numpred}
For any $k \geq 0$, a \emph{$k$-ary numerical predicate} is a map \[R^k : \mb{N}_{>0} \to \mathcal{P}(\mb{N}^k)\] such that, for all $n \geq 1$, $R^{k}(n) \subseteq \{0,...,n-1\}^k$. It is said to be \emph{uniform} if there exists a subset $Q \subseteq \mb{N}^k$ such that, for all $n \geq 1$, \[R^{k}(n) = Q \cap \{0,...,n-1\}^k.\]
\end{definition}

\begin{example}\label{example:predicates}
We now give a few examples of numerical predicates.
\begin{itemize}
\item The unary numerical predicate \[\fonction{prime}{\mb{N}_{>0}}{\mathcal{P}(\mb{N})}{n}{\{i \in \{0,...,n-1\} \colon \text{$i$ is prime} \}} \] is uniform, since for any $n \geq 1$, $prime(n)= \mathcal{P} \cap \{0,...,n-1\}$, where $\mathcal{P}$ is the set of all prime numbers. \newline
\item The binary numerical predicate \[\fonction{\leq}{\mb{N}_{>0}}{\mathcal{P}(\mb{N}^2)}{n}{\{(i,j) \in \{0,...,n-1\}^2 \colon i \leq j \}} \] is uniform, since for any $n \geq 1$, ${\leq}(n)= \mathcal{I} \cap \{0,...,n-1\}^2$, where $\mathcal{I}$ is the set of all of the couples of positive integers $(i,j)$ such that $i \leq j$. \newline
\item The unary numerical predicate \[\fonction{end}{\mb{N}_{>0}}{\mathcal{P}(\mb N)}{n}{\{n-1\} }  \] is a unary non-uniform numerical predicate. 
Indeed, there exists no subset $P \subseteq \mb N$ such that, for all $n \geq 1$, $P \cap \{0,...,n-1\}= \{n-1\}$.
\end{itemize}
\end{example}

Following the terminology introduced in \cite{Strau}, we now define the formulas of logic on words. 

\begin{definition}\label{formule}[Syntax of first-order logic on words.] We denote \emph{first-order variables} by $x,x_1,x_2$, etc.  We consider formulas that are recursively built from the following atomic blocks. \begin{itemize}
\item \emph{Letter predicates}: for every letter $a \in A$, a \emph{letter predicate} is denoted by $a(\cdot)$. For any first-order variable $x$, $a(x)$ is an atomic formula.
\item \emph{Numerical predicates}: for any $k \in \mb N$, and for any list of $k$ first-order variables $x_1,...,x_k$, if $R^{k}: \mb N_{>0} \to \mc P(\mb N^k)$ is a $k$-ary numerical predicate, then $R^{k}(x_1,...,x_k)$ is an atomic formula.
\end{itemize} 
The closure operations on formulas are the following.
\begin{itemize} 
\item If $\varphi$ and $\psi$ are formulas, then any Boolean combination of $\varphi$ and $\psi$ is a formula. 
\item If $\varphi$ is a formula, and $x$ is a variable, then $\exists x \varphi(x)$ and $\forall x \varphi(x)$ are formulas. 
\end{itemize}
\end{definition}
We say that a variable $x$ occurs \emph{freely} in a formula if it is not in the scope of a quantifier.
In particular, we call \emph{quantifier-free formulas} the Boolean combinations of atomic formulas.
A \emph{sentence} is a formula such that none of its variables are free. A \emph{fragment of first-order logic} is a subset of the set of all sentences.

\subsubsection*{Semantics}\label{sect:semantics}

As we previously announced, the particularity of logic on words is that we consider words as first-order structures. First, fix $k \in \mb N$. We use the notation $\i$ to refer to the elements $(i_1,...,i_k) \in \mb N^k$, and $\a$ to refer to the elements $(a_1,...,a_k) \in A^k$.
For any finite word $w \in A^*$, we introduce the notation 
\[|w|^{k} := \{ \i \in \mb N^k \colon \forall j \in \{1,...,k \}, i_j < |w| \} .\]
Models of formulas with free variables among $\x= \{x_1,...,x_k\}$, where all the $x_i$ are distinct, are given by elements $(w,\i)$ in \[A^{*}\otimes \mb N^k := \{(w,\i) \in A^* \times \mb N^k \colon \i \in |w|^{k} \},\] which we refer to as \emph{$\x$-structures}, one marked position in the word corresponding exactly to one free variable. Note that several variables can mark the same position. We define an equivalence relation on formulas by saying that two formulas are equivalent if they have the same models. 

We now introduce the semantic interpretation of the formulas we defined.

\begin{definition}[Semantics of logic on words]\label{def:semanticswords} We define recursively the semantics of the formulas built in Definition \ref{formule}. We start with the atomic formulas. 
\begin{itemize}
\item For any $l \in \{1,...,k\}$, and for any letter $a \in A$, the $\x$-structure $(w,\i) \in A^{*} \otimes \mb N^k$ satisfies $a(x_{i_l})$ if, and only, if $w_{i_l} = a$. 
In particular, for any $\a \in A^k$, we use the notation $w[\i] = \a$ in order to say that the $\x$-structure satisfies the formula \[ \bigwedge_{j=1}^{k}a_{j}(x_j),\] that is, for every $j \in \{1,...,k\}$, $w_{i_{j}}=a_j$. 
\item For any $l \in \{1,...,k\}$, any $j_{1},...,j_{l} \leq k$, and any $l$-ary numerical predicate $R^l$, the $\x$-structure $(w,\i) \in A^{*} \otimes \mb N^k$ satisfies $R^{l}(x_{j_{1}},...,x_{j_{l}})$ if, and only, if $(i_{j_1},...,i_{j_l})$ belongs to $R^{l}(|w|-1)$.
\end{itemize} 
Now, the closure operations on formulas are defined as follows.
\begin{itemize} 
\item The Boolean operations are interpreted in the usual way.
\item For any $l \in \{1,...,k\}$, and any $i_{\setminus l}:=(i_j)_{ 1 \leq j \leq k \atop j \neq l} \in \mb N^{k-1}$, given a formula $\varphi(\x)$, a $(\x \setminus \{x_l\})$-structure $(w,\i_{\setminus l}) \in A^{*} \otimes \mb N^{k-1}$ satisfies the formula $\exists x_l \, \varphi(\x)$ if, and only, if there exist $i_l < |w|$ such that $(w,\i)\in A^{*}\otimes \mb N^{k}$ satisfies $\varphi(\x)$. 
\end{itemize}
\end{definition}

\begin{remark}\label{remark:uniformpredicates}
In particular, if $R^{k}$ is a $k$-ary uniform numerical predicate, since there exists a subset $Q \subseteq \mb N^k$ such that, for all $n \geq 1$, $R^{k}(n) = Q \cap \{0,...,n-1\}^k$, a $\x$-structure $(w,\i) \in A^{*} \otimes \mb N^k$ satisfies $R^{k}(\x)$ if, and only if, $\i$ belongs to $Q$.
\end{remark}

\begin{example}
We introduced in Example \ref{example:predicates} the uniform binary numerical predicate $\leq$.
For any variables $x,y$, and any letters $a,b \in A$, the quantifier-free formula
\[\varphi(x,y) = a(x) \wedge b(y) \wedge {\leq}(x,y) \]
corresponds to the set $L_{\varphi(x,y)}$ of all elements $(w,i,j)$ in $A^{*} \otimes \mb N^2$ such that $w_i=a, w_j=b$ and $i \leq j$,
\[L_{\varphi(x,y)} = \{(w,i,j) \in A^* \otimes \mb N^2 \colon w[i,j]=(a,b) \text{ and } (i,j) \in \mc I  \}.\]
Applying a layer of existential quantifiers to this formula leads us to consider the sentence 
\[\exists x \exists y \; \varphi(x,y),\] which corresponds to the language consisting in all words $w \in A^{*}$ such that there exist two positions $i,j < |w|$ such that $w_i=a, w_j=b$ and $i \leq j$, in other words, the language $A^{*}aA^{*}bA^{*}$. 
\end{example}

For any formula $\varphi$ on the set of variables $\x=\{x_1,...,x_k\}$, and for any $\x$-structures $(w,\i)$, we use the notation
\[(w,\i) \models \varphi(\x) \]
to say that $(w,\i)$ satisfies the formula $\varphi$.
We denote by $L_{\varphi}$ the subset of $A^{*} \otimes \mb N^k$ of all the $\x$-structures $(w,\i)$ satisfying the formula $\varphi$,
\[ L_{\varphi} := \{(w,\i) \in A^{*} \otimes \mb N^k \colon (w,\i) \models \varphi(\x) \}.\] 
Notice that $L_{\varphi}$ is a language of finite words on the alphabet $A$ if, and only, if the formula $\varphi$ is a sentence. 
Whenever $F$ is a subset of the set of formulas with free variables among $\x$, which is closed under the Boolean connectives, the collection $\{L_{\varphi} \colon\varphi \in F\}$ is a Boolean subalgebra of $\mathcal{P}(A^{*} \otimes \mb N^{k})$. This simple observation is what allows us to apply duality for Boolean algebras to logic fragments, and motivates the approach taken in section \ref{sect:dual}.

For any $k \geq 1$, we denote by \[\mc B \Sigma_1[\mc N_k^{u}]\] the set of languages corresponding to the fragment of first order logic defined by Boolean combinations of sentences defined by only using a block of at most $k$ existential quantifiers, letter predicates and uniform numerical predicates of arity $l \in \{1,...,k\}$.

\subsection{Ultrafilter equations}\label{section:equation}

Fix a Boolean algebra of languages $\mc B \subseteq \mc P(A^{*})$. 
Its dual space $\mc S(\mc B)$ is a topological object canonically associated to $\mc B$, however, in general, it happens to be too ``big" to constitute a practical description of $\mc B$. A question that arises naturally is therefore the following: is it possible to introduce a practical topological object which holds enough information to characterize the Boolean algebra we are interested in?
The answer involves introducing the notion of \emph{ultrafilter equation}, a well-chosen family of pairs of ultrafilters in the dual space. 
A Boolean algebra of languages $\mc B$ is given by an embedding 
$\mc B \hookrightarrow \mc P(A^{*})$
and thus the dual map
$\beta(A^{*}) \twoheadrightarrow \mc S(\mc B)$
is a quotient, given by equating elements in the dual space.
The idea would be to find families of pairs of points in the dual space (ideally, much smaller than $\mc S(\mc B)^{2}$) that allow for a characterisation of $\mc B$. Since the reasoning applies for any subalgebra of a given Boolean algebra we formalize the reasoning in this setting.

\begin{definition}
For any Boolean space $X$, a \emph{Boolean equivalence relation} is an equivalence relation
$\mc E$ of $X$ such that the quotient space $X / \mc E$ is also a Boolean space.
\end{definition}

\begin{definition}\label{def:ultraequation}
For any Boolean algebra $\mc B$, any two ultrafilters $\gamma_1,\gamma_2 \in \mc S(\mc B)$, and any $b \in \mc B$, we say that $b$ satisfies the \emph{$\mc B$-equation} $\gamma_1 \leftrightarrow \gamma_2$ if, and only if, 
\[b \in \gamma_1 \Longleftrightarrow b \in \gamma_2.\]
\end{definition}

\begin{theorem}[Stone duality for Boolean subalgebras, \cite{Maitopoapproach}, Theorem 5.1]\label{prop:equaexist}
Let $\mc B$ be a Boolean algebra, and $X$ its associated dual space.
Let us consider the map from $\mc P(\mc B)$ to $\mc P(X^2)$ 
which sends any subset $S$ of $\mc B$ to
\[ \{(x,y) \in X \colon \forall b \in S, (b \in x \Longleftrightarrow b \in y) \}\]
and the map from $\mc P(X^2)$ to $\mc P(\mc B)$ 
which sends any subset $E$ of $X^2$ to
\[ \{ b \in \mc B \colon \forall (x,y) \in E, (b \in x \Longleftrightarrow b \in y) \}   .\]

These maps establish a Galois connection whose Galois closed sets are the Boolean equivalence relations on $X$ and the Boolean subalgebras of $\mc B$ respectively.
In particular, every set of equations over $X$ determines a Boolean subalgebra of 
$\mc B$, and every Boolean subalgebra of $\mc B$ is given by a set of equations over $X$.
\end{theorem}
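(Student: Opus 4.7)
The plan is to first check that the two maps $\Phi: \mc P(\mc B) \to \mc P(X^2)$ and $\Psi: \mc P(X^2) \to \mc P(\mc B)$ (using $\Phi$ and $\Psi$ for the first and second map of the statement, respectively) form a Galois connection, and then to identify the Galois closed sets on each side. The adjunction itself is essentially formal: both containments $E \subseteq \Phi(S)$ and $S \subseteq \Psi(E)$ unfold to the same statement ``for every $b \in S$ and $(x,y) \in E$, $b \in x \Leftrightarrow b \in y$'', so the Galois connection is immediate.

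On the algebra side, I would first observe that for any $E \subseteq X^2$, the set $\Psi(E)$ is automatically a Boolean subalgebra: the defining condition ``$b \in x \Leftrightarrow b \in y$ for all $(x,y) \in E$'' is preserved by finite meets, joins and complementation, and trivially holds for $0$ and $1$. This shows every Galois-closed subset of $\mc B$ is a Boolean subalgebra. For the converse, given a Boolean subalgebra $\mc B' \subseteq \mc B$, I would use the dual map $\iota^*: X \to \mc S(\mc B')$ obtained from Stone duality (Theorem \ref{thrm:Stoneduality}) applied to the inclusion $\iota: \mc B' \hookrightarrow \mc B$; explicitly $\iota^*(x) = x \cap \mc B'$. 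Two points are identified by $\iota^*$ exactly when they agree on every element of $\mc B'$, so $\Phi(\mc B')$ is the kernel of $\iota^*$. To show $\Psi(\Phi(\mc B')) = \mc B'$, the nontrivial inclusion is $\subseteq$: if $b \in \mc B \setminus \mc B'$, then the clopen $\widehat{b} \subseteq X$ is not the preimage of any clopen of $\mc S(\mc B')$ (otherwise it would equal $\widehat{b'}$ for some $b' \in \mc B'$, contradicting injectivity of $(\widehat{\cdot})$), which forces the existence of $x,y \in X$ with $\iota^*(x) = \iota^*(y)$ but $b \in x$, $b \notin y$, witnessing $b \notin \Psi(\Phi(\mc B'))$.

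On the topological side, I would show that the Galois closed subsets of $X^2$ are exactly the Boolean equivalence relations. For any $S \subseteq \mc B$, $\Phi(S)$ is clearly reflexive, symmetric and transitive. Combined with the preceding step, any Galois closed $E$ equals $\Phi(\mc B')$ for the Boolean subalgebra $\mc B' = \Psi(E)$, and the quotient $X/E$ is naturally identified with $\mc S(\mc B')$ via the continuous surjection $\iota^*$ (a quotient map because $X$ is compact and $\mc S(\mc B')$ is Hausdorff), hence is a Boolean space. Conversely, given a Boolean equivalence relation $E$ with quotient map $q: X \to X/E$, I would define $\mc B' \subseteq \mc B$ as the subalgebra corresponding, under Stone duality, to $\{q^{-1}(K) : K \in Clop(X/E)\} \subseteq Clop(X)$. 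Then $E = \Phi(\mc B')$: the inclusion $E \subseteq \Phi(\mc B')$ is immediate from $q(x) = q(y)$, while the reverse uses that $X/E$ is Boolean and hence Hausdorff with a basis of clopens, so any two inequivalent points can be separated by a clopen of $X/E$ whose preimage witnesses $(x,y) \notin \Phi(\mc B')$.

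The main obstacle I expect is the careful bookkeeping in showing that $X/\Phi(S)$ is homeomorphic to $\mc S(\Psi(\Phi(S)))$, specifically that the continuous surjection $\iota^*$ is a quotient map onto $\mc S(\mc B')$ with fibers exactly the equivalence classes of $\Phi(\mc B')$; everything else is a direct application of Stone duality, but this identification is where the Boolean space hypothesis is used essentially, both to guarantee enough clopens to separate inequivalent points and to ensure the quotient retains the Boolean structure.
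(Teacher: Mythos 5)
The paper offers no proof of this theorem: it is imported verbatim from the cited reference (\cite{Maitopoapproach}, Theorem 5.1), so there is no in-paper argument to compare yours against. That said, your proof is correct and is the standard one. The Galois connection is the formal antitone connection induced by the relation ``$b$ does not separate $(x,y)$''; your map $\Psi(E)$ is a subalgebra because ultrafilters are prime and decide complements; and both closure computations reduce to the fact that the dual $\iota^{*}\colon X \to \mc S(\mc B')$ of an inclusion $\iota\colon \mc B' \hookrightarrow \mc B$ is a continuous surjection between compact Hausdorff spaces, hence a closed quotient map whose kernel is exactly $\Phi(\mc B')$ and whose saturated clopens are exactly the $\widehat{b'}$ with $b' \in \mc B'$. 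The only step you leave compressed is the production of a separating pair $(x,y)$ when $b \in \mc B \setminus \mc B'$: this is the contrapositive of ``every $\ker\iota^{*}$-saturated clopen of $X$ is the preimage of a clopen of $\mc S(\mc B')$'', which does follow from the quotient-map observation you make later, together with surjectivity of $\iota^{*}$ (the ultrafilter extension property already packaged in Theorem \ref{thrm:Stoneduality}). I see no gaps.
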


\begin{corollary}
Any Boolean algebra of languages on a finite alphabet $A$ can be defined by a set of equations of the form $\gamma_1 \leftrightarrow \gamma_2$ where $\gamma_1$ and $\gamma_2$ are ultrafilters on the set of words.
\end{corollary}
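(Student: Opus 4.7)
The plan is to recognize this corollary as a direct specialisation of Theorem \ref{prop:equaexist}. The key observation is that any Boolean algebra of languages on the finite alphabet $A$ is, by definition, a Boolean subalgebra of the powerset Boolean algebra $\mc P(A^{*})$. So the natural move is to apply Theorem \ref{prop:equaexist} with $\mc B := \mc P(A^{*})$.

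First I would identify the dual space $X$ of $\mc P(A^{*})$. By the material recalled in Subsection \ref{subsec:remainder}, the Stone dual of $\mc P(A^{*})$ is precisely the \v Cech--Stone compactification $\beta(A^{*})$, whose points are the ultrafilters on the set of finite words. Hence the Galois correspondence of Theorem \ref{prop:equaexist} runs between Boolean subalgebras of $\mc P(A^{*})$ on one side, and Boolean equivalence relations on $\beta(A^{*})$ on the other side.

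Next I would unpack what an element of the right-hand side of this correspondence looks like concretely. A subset $E \subseteq X^{2} = \beta(A^{*})^{2}$ is nothing but a family of pairs $(\gamma_1,\gamma_2)$ of ultrafilters on $A^{*}$; the condition ``$L$ belongs to $\gamma_1$ if and only if $L$ belongs to $\gamma_2$'' that appears in the statement of Theorem \ref{prop:equaexist} is, by Definition \ref{def:ultraequation}, exactly the assertion that $L$ satisfies the ultrafilter equation $\gamma_1 \leftrightarrow \gamma_2$. Thus the set of languages satisfying all equations indexed by $E$ coincides with the $\mc P(A^{*})$-subalgebra associated to $E$ via the Galois connection.

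Finally, given an arbitrary Boolean algebra of languages $\mc B \subseteq \mc P(A^{*})$, Theorem \ref{prop:equaexist} guarantees that $\mc B$ is Galois-closed and therefore equals the Galois closure of some $E \subseteq \beta(A^{*})^{2}$; translating back through the previous paragraph, this $E$ is precisely a set of ultrafilter equations on $A^{*}$ defining $\mc B$, which is the desired conclusion. Since the whole argument is a straightforward unwrapping of the preceding theorem, there is no real obstacle; the only point that deserves care is keeping track of the identification between subsets of $\beta(A^{*})^{2}$ and families of equations of the form $\gamma_1 \leftrightarrow \gamma_2$, so that the conclusion of Theorem \ref{prop:equaexist} is read in the intended combinatorial language.
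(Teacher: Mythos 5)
Your argument is correct and is exactly the route the paper intends: the corollary is stated as an immediate consequence of Theorem \ref{prop:equaexist} applied to $\mc B = \mc P(A^{*})$, whose dual space is $\beta(A^{*})$, so that Boolean equivalence relations (and more generally sets of equations) over $\beta(A^{*})$ correspond to Boolean subalgebras of $\mc P(A^{*})$. No further comment is needed.
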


\section{Duality for $\mc B \Sigma_1[\mc N_k^{u}]$, for any $k \geq 1$}\label{sect:dual}

\subsection{General setting} \label{section:setting}

Fix a finite alphabet $A$. We introduce for any $k \geq 1$, and any $k$-tuple of letters $\a \in A^k$ the map $c_{\a} : A^{*} \to \mc P_{fin}(\mb N^k)$, where $P_{fin}(\mb N^k)$ is the set of all finite subsets of $\mb N^k$, which sends a finite word $w$ to its \emph{{$\a$}-content},
\[c_{\a}(w):= \{ \i \in |w|^k \colon w[\i]=\a \}. \]

\begin{definition}
For any subset $Q$ of $\mb N^{k}$, we introduce the languages
\[L_{\Diam^{\a}_Q} :=  \{ w \in A^* \colon c_{\a}(w) \cap Q \neq \emptyset \}\]
and
\[L_{\Box^{\a}_Q} := (L_{\Diam^{\a}_{Q^{c}}})^{c} =   \{ w \in A^* \colon c_{\a}(w) \subseteq Q \}\]
\end{definition}

This notation allows us to keep the intuition of modal algebra, as introduced in section \ref{section:modalalgebra}.
 
\begin{example}\label{examplesuivi}
We provide a few concrete examples of these languages in the case where $k=2$, which shall be used in order to build an intuition over $\mc B_k$. Fix two letters $a$ and $b$ in $A$.
\begin{itemize} 
\item Assume that $Q$ is a subset of $\mb{N}^2$ such that there exist two subsets $P$ and $P'$ of $\mb{N}$ such that $Q = P \times P'$. In this case, we have that 
\begin{align*}
L_{\Diam^{a,b}_{P \times P'}} &= \{w \in A^{*} \mid c_{a,b}(w) \cap (P \times P') \neq \emptyset \} \\
            &= \{w \in A^{*} \mid (c_{a}(w) \times c_{b}(w)) \cap (P \times P') \neq \emptyset  \} \\
            &= \{w \in A^{*} \mid c_{a}(w) \cap P \neq \emptyset \} \cap \{w \in A^{*} \mid c_{b}(w) \cap P' \neq \emptyset \}  \\
            &= L_{\Diam^{a}_P} \cap L_{\Diam^{b}_{P'}}.
\end{align*} 
\item Assume that $Q$ is equal to $\Delta$, the diagonal of $\mb{N}^2$, in other terms every pair $(i,i)$ with $i \in \mb{N}$. Then we have
\begin{align*}
L_{\Diam^{a,a}_{\Delta}} &= \{w \in A^* \colon c_{a,a}(w) \cap \Delta \neq \emptyset \} \\
                 &=\{w \in A^* \colon \exists i \in \mb N, w_i = a \} \\
                 &= A^{*}aA^{*}.
\end{align*}
\item Assume that $Q$ is the subset $\{(i,j) \in \mb{N}^2 \mid i \leq j \}$, a similar reasoning allows us to prove that \[L_{\Diam^{a,a}_{\{(i,j) \in \mb{N}^2 \mid i \leq j \}}} = L_{\Diam^{a,a}_{\Delta}}.\]

\item Assume that $Q$ is the subset $\{(i,i+1) \mid i \in \mb{N} \}$. 
Then we have
\begin{align*}
L_{\Diam^{a,a}_{\{(i,i+1) \mid i \in \mb{N} \}}} &= \{w \in A^* \colon c_{a,a}(w) \cap \{(i,i+1) \mid i \in \mb{N} \} \neq \emptyset \} \\
                 &=\{w \in A^* \colon \exists i \in \mb N, w[(i,i+1)] = (a,a) \} \\
                 &= A^{*}aaA^{*}.
\end{align*}
\end{itemize}
\end{example} 

Let $\mc B_k$ be the Boolean subalgebra of $\mc P(A^*)$ generated by the languages $L_{\Diam^{\a}_Q}$, where $\a$ ranges over $A^k$, and $Q$ ranges over the subsets of $\mb N^k$,

\[\mc B_k := \langle \{L_{\Diam^{\a}_Q}  \colon \a \in A^k, Q \subseteq \mb N^k \} \rangle_{BA}.\]
The main purpose of this chapter is to understand, and give characterisations of the dual space of the Boolean algebra $\mc B_k$. We denote by $X_k$ the dual space of $\mc B_k$.

We denote by $V_k$ the $A^{k}$-fold power of Vietoris hyperspaces \[V_k := \mc V(\beta(\mb N^k))^{A^{k}}.\] 
We consider the function \[c^{k}:A^* \to V_k,\] which sends a finite word $w \in A^*$ to the following family of clopen subsets, \[(\wh{c_{\a}(w)})_{\a \in A^k}.\]

By the universal property of \v Cech--Stone compactification, there exists a unique continuous map $c^{k}:\beta(A^*) \to V_k$ which extends it. We give a description of the image of this map: it actually corresponds to the dual space of $\mc B_k$.

\begin{proposition}\label{prop:imck}
The image of $c^{k}$ is $X_k$.
\end{proposition}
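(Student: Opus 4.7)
The plan is to use Stone duality (Theorem \ref{thrm:Stoneduality}) to identify $\mr{Im}(c^k)$ with $\mc S(\mc B_k) = X_k$. The bridge is to compute the preimages under $c^k$ of the basic clopens of $V_k$ and check that they coincide with the generators of $\mc B_k$.

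First I would unravel the topology of $V_k$. By Definition \ref{def:vietoris}, the Vietoris hyperspace $\mc V(\beta(\mb N^k))$ has a subbasis given by clopens $\Diam \wh Q$ and $\Box \wh Q$ with $Q \subseteq \mb N^k$; writing $\Diam^{\a} \wh Q$ for the cylinder in $V_k = \mc V(\beta(\mb N^k))^{A^k}$ applying $\Diam \wh Q$ in the $\a$-coordinate, these cylinders generate $Clop(V_k)$. For any $w \in A^*$, a direct unraveling of the definitions gives
\[ c^k(w) \in \Diam^{\a} \wh Q \;\Longleftrightarrow\; \wh{c_{\a}(w)} \cap \wh Q \neq \emptyset \;\Longleftrightarrow\; c_{\a}(w) \cap Q \neq \emptyset \;\Longleftrightarrow\; w \in L_{\Diam^{\a}_Q}. \]
Since $c^k : \beta(A^*) \to V_k$ is continuous and any clopen of $\beta(A^*)$ is determined by its trace on the dense subset $A^*$, this forces $(c^k)^{-1}(\Diam^{\a} \wh Q) = \wh{L_{\Diam^{\a}_Q}}$, and analogously for $\Box^{\a} \wh Q$ via $\Box = \neg \Diam \neg$. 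Hence the image of the pullback homomorphism $Clop(V_k) \to \mc P(A^*)$ is exactly $\mc B_k$.

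With this preimage computation in hand the conclusion is formal. Because $\beta(A^*)$ is compact and $V_k$ is Hausdorff, $\mr{Im}(c^k)$ is closed in $V_k$, so $c^k$ factors as $\beta(A^*) \twoheadrightarrow \mr{Im}(c^k) \hookrightarrow V_k$. By Stone duality, the closed embedding corresponds to a surjection $Clop(V_k) \twoheadrightarrow Clop(\mr{Im}(c^k))$, and the continuous surjection corresponds to the injection $Clop(\mr{Im}(c^k)) \hookrightarrow \mc P(A^*)$ given by pullback along $c^k$. The composite is the full pullback map $Clop(V_k) \to \mc P(A^*)$, whose image equals $\mc B_k$. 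Therefore $Clop(\mr{Im}(c^k)) \cong \mc B_k$, and Theorem \ref{thrm:Stoneduality} yields $\mr{Im}(c^k) \cong \mc S(\mc B_k) = X_k$. Moreover the identification is canonical: two ultrafilters $\gamma_1, \gamma_2 \in \beta(A^*)$ are identified by $c^k$ iff they agree on every element of $\mc B_k$, iff they have the same image under the quotient $\pi : \beta(A^*) \twoheadrightarrow X_k$ dual to the inclusion $\mc B_k \hookrightarrow \mc P(A^*)$.

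The substance of the proof lies entirely in the preimage computation, which is a tautological check on finite words combined with the density of $A^*$ in $\beta(A^*)$; there is no combinatorial difficulty. The only point needing care is to confirm that the cylinder clopens $\Diam^{\a} \wh Q$ do generate $Clop(V_k)$, which follows from the product structure of $V_k$ together with the explicit basis of the Vietoris topology recalled after Definition \ref{def:vietoris} (and the fact that any clopen of a Boolean space is a finite union of basic clopens).
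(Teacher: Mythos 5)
Your proof is correct, and it follows the same duality-based strategy as the paper: identify the Boolean algebra dual to $V_k$, show that the dual of $c^k$ has image $\mc B_k$, and conclude by Stone duality, with the essential verification happening on the dense subset $A^{*} \subseteq \beta(A^{*})$. The difference is in how the key identity is established. The paper works on the algebraic side: it introduces the $A^k$-fold copower $\mc M_k$ of the modal algebra $M\mc P(\mb N^k)$, uses the universal properties of $M(-)$ (extension of the join-preserving map $Q \mapsto L_{\Diam^{\a}_Q}$) and of the copower to build a homomorphism $h_k : \mc M_k \to \mc P(A^{*})$ with image $\mc B_k$, and then checks $c^k = (h_k)^{-1}$ by evaluating both maps on finite words. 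You instead stay on the topological side and compute $(c^{k})^{-1}(\Diam^{\a}\wh{Q}) = \wh{L_{\Diam^{\a}_Q}}$ directly, so the pullback $Clop(V_k) \to \mc P(A^{*})$ has image $\mc B_k$ without ever naming $\mc M_k$; the price is that you must separately justify that the cylinders $\Diam^{\a}\wh{Q}$ generate $Clop(V_k)$, which you correctly reduce to the product structure plus the basis of the Vietoris topology and compactness. Your final factorization argument (image closed by compactness, restriction map on clopens surjective, pullback along a surjection injective) is a clean and complete way to pass from the image of the dual homomorphism to the image of the map itself, and your remark that $c^k(\gamma_1)=c^k(\gamma_2)$ iff $\gamma_1,\gamma_2$ agree on $\mc B_k$ recovers the same canonical identification the paper obtains. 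The only thing lost relative to the paper is the explicit presentation $\Diam^{\a}Q \mapsto L_{\Diam^{\a}_Q}$ of the dual map as a quotient of the formal modal algebra, which the paper reuses later (e.g.\ in Proposition \ref{prop:carcImultrafilter}).
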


\begin{proof}
Let us denote by $\mc M_k$ the $A^k$-fold copower of $M \mc P(\mb N^k)$, that is the Boolean algebra generated by the formal generators $\Diam^{\a} Q$, where $\a$ ranges over $A^k$, and $Q$ ranges over the subsets of $\mb N^k$
\[\mc M_k := A^k \cdot M \mc P(\mb N^k) = \langle \{ \Diam^{\a} Q \colon \a \in A^k, Q \subseteq \mb N^k \}  \rangle_{BA}. \]

As we mentioned in Proposition \ref{prop:dualofMB}, the dual of $M \mc P(\mb N^k)$ is $\mc V(\beta(\mb N^k))$. Since duality turns coproducts into products, the dual space of the Boolean algebra $\mc M_k$ is $V_k$. 
We start by defining a Boolean algebra homomorphism $h_k: \mc M_k \to \mc P(A^*)$ which we will prove to be dual to the continuous map $c^{k}: \beta(A^{*}) \to V_k$.
First, fixing $\a \in A^k$, we consider the map 
\[L_{\Diam^{\a}_{(\cdot)}}:\mc P(\mb N^k) \to \mc P(A^*)\] 
which sends any $Q \subseteq \mb N^k$ to the language $L_{\Diam^{\a}_Q}$. This map preserves finite joins: indeed, $L^{\a}_{\emptyset} = \emptyset$ and, for any $Q_1, Q_2 \subseteq \mb N^k$,
\begin{align*}
L_{\Diam^{\a}_{Q_{1}}} \cup L_{\Diam^{\a}_{Q_{2}}} 
&= \{w \in A^* \colon c_{\a}(w) \cap Q_1 \neq \emptyset \} \cup \{w \in A^* \colon c_{\a}(w) \cap Q_2 \neq \emptyset \} \\
&= \{w \in A^* \colon c_{\a}(w) \cap Q_1 \neq \emptyset \text{ or } c_{\a}(w) \cap Q_2 \neq \emptyset  \}   \\
&= \{w \in A^* \colon c_{\a}(w) \cap (Q_1 \cup Q_2) \neq \emptyset  \}   \\
&= L_{\Diam^{\a}_{(Q_{1} \cup Q_{2})}}.
\end{align*}
Therefore, this join preserving map extends uniquely to a Boolean algebra homomorphism 
\[ h^{\a}: M \mc P(\mb N^k) \to \mc P(A^*).\]
We now define $h_k: \mc M_k \to \mc P(A^*)$ by using the universal property of the $A^k$-fold copower of $M \mc P(\mb N^k)$, that is $h_k$ is the unique Boolean algebra homomorphism such that, for any $\a \in A^k$, and for any $Q \subseteq \mb N^k$,
\[h_k(\Diam^{\a}Q) = L_{\Diam^{\a}_{Q}}.\] 
In particular, this equality proves that $Im(h_{k}) = \mc B_k$, since $\mc M_k$ is the Boolean algebra generated by the elements of the form $\Diam^{\a}Q$, where $\a$ ranges over $A^k$, and $Q$ ranges over the subsets of $\mb N^k$. Therefore, we have the following commutative diagram in $\mathbf{B}$ool.
\[\begin{tikzcd}
\mc{M}_k \arrow[r,"h_k"]\arrow[dr,twoheadrightarrow] & \mc{P}(A^{*}) \\
&  \mc{B}_k \arrow[u,hookrightarrow]
\end{tikzcd}\]

Now, by duality, we have the following diagram in $\mathbf{S}$tone.

\[\begin{tikzcd}
\beta(A^*) \arrow[r,"(h_k)^{-1}"]\arrow[dr,twoheadrightarrow] & V_k \arrow[d,hookleftarrow] \\
&  X_k 
\end{tikzcd}\]

In order to conclude that $Im(c^k) = X_k$, it is enough to prove that $c^k$ is dual to $h_k$, that is that $c^k = (h_{k})^{-1}$. Since we consider continuous maps between compact Hausdorf spaces, and $A^{*}$ is a dense subspace of $\beta(A^*)$, we only need to prove that the restriction of these maps to $A^{*}$ are equal. Now, for any word $w \in A^*$, we have by definition \[c^{k}(w) = (c_{\a}(w))_{\a \in A^k},\] and since the duality turns coproducts into products, \[(h_k)^{-1}(w)= ((h^{\a})^{-1}(w))_{\a \in A^k}.\] Finally, for every $w \in A^*$, and every $\a \in A^k$,

\begin{align*}
(h^{\a})^{-1}(w)  
&=  \{\i \in \mb N^k \colon w \in h^{\a}(\Diam^{\a}({\{\i\}})) \} \\
&=  \{\i \in \mb N^k \colon w \in L_{\Diam^{\a}_{\{\i\}}} \} \\
&= \{\i \in \mb N^k \colon c_{\a}(w) \cap {\{\i\}} \neq \emptyset \}    \\
&= \{\i \in \mb N^k \colon w[\i]= \a \}  \\
&=c_{\a}(w),
\end{align*}

and we conclude that $c^k = (h_{k})^{-1}$, and therefore that $Im(c^k) = X_k$.
\end{proof} 

\subsection{Logical description of $\mc B_k$} \label{sec:chap2logicdescription}

Fix $k \geq 1$. 
In this section, we give a description of the Boolean algebra $\mc B_k$ in the context of logic on words. The Boolean algebra $\mc{B}_k$ contains the subalgebra generated by the languages $L_{\Diam^{a}_{P}}$, where $a$ ranges over $A$, and $P$ ranges over the subsets of $\mb N$. This implies that it already encodes at the very least all of the first-order sentences built by using unary uniform numerical predicates (see \cite{GKP}, Theorem 2.9 for a proof). We will prove that $\mc B_k$ can actually be identified with the Boolean algebra of languages corresponding to formulas which are Boolean combinations of sentences defined by using a block of at most $k$ existential quantifiers, letter predicates and uniform numerical predicates of arity $l \in \{1,...,k\}$.

Fixing a set of free variables $\x = \{x_1,...,x_k \}$, we first express any quantifier-free formula written by using only a subset of these variables as a normal form which involves exactly all of the free-variables in $\x$. 

\begin{lemma}\label{lemma:freeformula}
Any quantifier-free formula $\varphi$ such that the set of its variables $\{x_{j_{1}},...,x_{j_{l}}\}$ is a subset of $\x$ can be written as a formula of the form
\[\bigvee_{\a \in A^k}(\a(\x) \wedge R^{\a}(\x)),\] where, for every $\a \in A^k$, $\a(\x)$ is defined as the disjunction $\bigwedge_{j=1}^{k}a_{j}(x_j)$, and $R^{\a}$ is a $k$-ary numerical predicate.
\end{lemma}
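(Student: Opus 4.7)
The plan is to prove the lemma in three stages: introduce all of $\x$ as nominally free variables of $\varphi$ without changing the semantics, split by cases over the letter-tuple $\a \in A^k$ that appears at positions $(x_1,\ldots,x_k)$, and for each fixed $\a$ compress the remaining Boolean combination of numerical predicate atoms into a single $k$-ary numerical predicate $R^{\a}$.

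First, observe that for any $\x$-structure $(w,\i)$ and any $j \in \{1,\ldots,k\}$, exactly one letter predicate $a(x_j)$ holds, namely $a = w_{i_j}$. Hence $\bigvee_{\a \in A^k}\a(\x)$ is a tautology on $\x$-structures, and distributing the conjunction over the disjunction gives
\[\varphi \equiv \varphi \wedge \bigvee_{\a \in A^k}\a(\x) \equiv \bigvee_{\a \in A^k}\bigl(\a(\x) \wedge \varphi\bigr).\]
Inside each disjunct, under the hypothesis $\a(\x)$ every atomic letter subformula $a(x_j)$ appearing in $\varphi$ is equivalent to $\top$ if $a = a_j$ and to $\bot$ otherwise. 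Substituting these truth-values and simplifying the Boolean expression, $\a(\x) \wedge \varphi$ becomes equivalent to $\a(\x) \wedge \psi^{\a}(\x)$, where $\psi^{\a}(\x)$ is a Boolean combination of numerical predicate atoms of the form $R^{l}(x_{j_1},\ldots,x_{j_l})$, containing no letter predicates.

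It remains to convert any such $\psi^{\a}$ into a single $k$-ary numerical predicate $R^{\a}$ applied to $(x_1,\ldots,x_k)$. For each $n \geq 1$, define $R^{\a}(n) \subseteq \{0,\ldots,n-1\}^k$ to be the set of all $\i = (i_1,\ldots,i_k)$ such that the Boolean evaluation of $\psi^{\a}$ obtained by replacing every atomic occurrence $R^{l}(x_{j_1},\ldots,x_{j_l})$ by the truth-value of $(i_{j_1},\ldots,i_{j_l}) \in R^{l}(n)$ is true. By Definition \ref{def:semanticswords}, for any $\x$-structure $(w,\i)$ with $|w| = n$ we then have $(w,\i) \models \psi^{\a}(\x)$ iff $\i \in R^{\a}(n)$ iff $(w,\i) \models R^{\a}(\x)$. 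Combining the two steps yields the required normal form
\[\varphi \equiv \bigvee_{\a \in A^k}\bigl(\a(\x) \wedge R^{\a}(\x)\bigr).\]

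Nothing in the argument is particularly subtle: the case split on $\a$ reduces the problem to eliminating letter predicates, and encoding a Boolean combination of numerical-predicate atoms as a single such atom is a routine pointwise definition. The only place requiring mild care is that $\varphi$ need not mention every variable in $\x$; this is handled automatically since the constructed $R^{\a}$ simply ignores the unused coordinates of $\i$. If one wished to be fully explicit, the whole argument can be recast as a structural induction on $\varphi$, with the cases of negation and conjunction being immediate because the proposed normal form is stable under Boolean operations (using $\a(\x) \wedge \a'(\x) \equiv \bot$ whenever $\a \neq \a'$ to collapse cross-terms).
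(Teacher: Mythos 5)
Your proof is correct, but it takes a genuinely different route from the paper's. The paper proceeds by structural induction on quantifier-free formulas: it first puts each atom into the normal form (a letter predicate $a(x_j)$ via the all-or-nothing predicates $S^{\a}$, an $l$-ary numerical atom via its cylindrification $T_{R^{l}}$ to arity $k$), and then checks that formulas in normal form are closed under conjunction, disjunction and negation by taking pointwise intersections, unions and complements of the numerical predicates attached to each guard $\a(\x)$. You instead argue top-down: you multiply $\varphi$ by the tautology $\bigvee_{\a \in A^k}\a(\x)$, observe that under each guard every letter atom collapses to a fixed truth value, and then compress the surviving Boolean combination of numerical atoms into a single $k$-ary predicate by one pointwise semantic definition. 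Both arguments rest on the same two observations --- the letter information can be factored into the guards, and numerical predicates are closed under pointwise Boolean operations --- but you package the second observation into a single definition rather than threading it through an induction, which is shorter and handles the unused variables of $\x$ for free; the paper's compositional version has the mild advantage of exhibiting $R^{\a}$ as an explicit Boolean combination of cylindrifications of the original atoms. One immaterial remark: your verification evaluates the numerical predicates at $n=|w|$ whereas the semantic clause of Definition \ref{def:semanticswords} formally evaluates them at $|w|-1$; this off-by-one is an inconsistency internal to the paper's conventions and does not affect your argument, since your pointwise definition applies the same argument uniformly to the atoms and to $R^{\a}$.
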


\begin{proof}
We denote by $\iff$ the relation of logical equivalence between formulas. We prove the statement by structural induction on quantifier-free formulas. 
We start with atomic formulas. 
For letter predicates, we have that, for any $j \in \{1,...,k\}$, and any $a \in A$,
\[a(x_j)  \iff \bigvee_{\a \in A^{k}} (\a(\x)\wedge S^{\a}(\x)),\] where, for every $\a \in A^k$, $S^{\a}$ is the $k$-ary numerical predicate defined as 
\[\fonction{S^{\a}}{\mb N_{>0}}{\mc P(\mb N^k)}{n}{ \begin{cases} \{0,...,n-1\}^k \text{  if $a_j = a$}  \\ \emptyset \text{  otherwise} \end{cases}}.\]
For numerical predicates, we have that, for any $l$-ary numerical predicate $R^{l}$,
\[R^{l}(x_{j_{1}},...,x_{j_{l}}) \iff \bigvee_{\a \in A^k} (\a(\x) \wedge T_{R^{l}}^{\a}(\x)),\]
where, for every $\a \in A^k$, $T_{R^{l}}^{\a}$ is $T_{R^{l}}$, the $k$-ary numerical predicate defined as 
\[\fonction{T_{R^{l}}}{\mb N_{>0}}{\mc P(\mb N^k)}{n}{\{(n_1,...,n_k) \in \{0,...,n-1\}^k \mid (n_{j_{1}},...,n_{j_{l}}) \in R^{l}(n) \}  }.\]
To conclude, all we need to do is to prove that formulas of the form $\bigvee_{\a \in A^k} (\a(\x) \wedge R^{\a}(\x))$ are closed under Boolean operations. For any formulas $\varphi_1(\x)$ of the form $\bigvee_{\a \in A^k} (\a(\x) \wedge R_1^{\a}(\x))$ and $\varphi_2(\x)$ of the form $\bigvee_{\a \in A^k} (\a(\x) \wedge R_2^{\a}(\x))$, we have that \[\varphi_1(\x) \wedge \varphi_2(\x)  \iff \bigvee_{\a \in A^k} (\a(\x) \wedge (R_1^{\a} \cap R_2^{\a})(\x) \] 
where, for every $\a \in A^k$, $(R_1^{\a} \cap R_2^{\a})$ is the $k$-ary numerical predicate defined as
\[\fonction{R_1^{\a} \cap R_2^{\a}}{\mb N_{>0}}{\mc P(\mb N^k)}{n}{R_1^{\a}(n) \cap R_2^{\a}(n)},\] 
and \[\varphi(\x) \vee \psi(\x)  \iff \bigvee_{\a \in A^k} (\a(\x) \wedge (R_1^{\a} \cup R_2^{\a})(\x)).\] 
where, for every $\a \in A^k$, $(R_1^{\a} \cup R_2^{\a})$ is defined in an analogous way. Finally, the negation of any atomic formula is equivalent to a disjunction of atomic formulas: for any $l$-ary predicate $R^{l}$, and any free-variables $(x_{j_{1}},...,x_{j_{l}}),$ \[\neg R^{l}(x_{j_{1}},...,x_{j_{l}}) \iff ((R^{l})^{c})(x_{j_{1}},...,x_{j_{l}}),\]
where, for every $\a \in A^k$, $(R^{l})^{c}$ is the $k$-ary numerical predicate defined as
\[\fonction{(R^{l})^{c}}{\mb N_{>0}}{\mc P(\mb N^k)}{n}{ \{0,...,n-1\}^k \setminus T_{R^{l}}(n)};\] 
and for any letter $a$ and any $j \in \{1,...,k\}$,  \[\neg a(x_{j}) \iff\bigvee_{b \in A \atop b\neq a } b(x_{j}).\]
\end{proof}

Therefore, by Lemma \ref{lemma:freeformula}, if we apply one layer of existential quantifier to a quantifier-free formula $\varphi$ whose set of variables is contained in $\x$, we obtain a sentence of the form
\[\exists x_1... \exists x_k  \bigvee_{\a \in A^k}(\a(\x) \wedge R^{\a}(\x))  \ \]
where, for every $\a \in A^k$, $R^{\a}$ is a $k$-ary numerical predicate.

From now on, we will restrict our attention to uniform numerical predicates.
By Remark \ref{remark:uniformpredicates}, this means that, for every $\a \in A^k$, considering the $k$-ary numerical predicate $R^{\a}$ is equivalent to considering a subset $Q^{\a} \subseteq \mb N^k$.
Under this assumption, we denote by $\mc{B}\Sigma_1[\mc{N}^u_{k}]$, the Boolean algebra generated by the languages $L_{\psi}$, where
\[ \psi := \exists x_1... \exists x_k  \bigvee_{\a \in A^k}(\a(\x) \wedge Q^{\a}(\x)).  \ \]
Since the existential quantifier commutes with finite disjunctions, this Boolean algebra is generated by the languages corresponding to sentences of the form
\[ \bigvee_{\a \in A^k} \exists x_1... \exists x_k \; \a(\x) \wedge Q^{\a}(\x).\]

Now, note that, for any $\a \in A^k$, and any $Q^\a \subseteq \mb N^k$, the sentence
\[ \exists x_1... \exists x_k \; \a(\x) \wedge Q^{\a}(\x) \]
corresponds to the language of the form
\[ \{w \in A^* \colon \exists \i \in Q^\a \cap |w|^k, w[\i] = \a \}
= \{w \in A^* \colon c_{\a}(w) \cap Q^\a \neq \emptyset \}  =
 L_{\Diam^{\a}_{Q^\a}},\]
and thus $\mc{B}\Sigma_1[\mc{N}^u_{k}]$ is the Boolean algebra generated by the languages $\bigcup_{ \a \in A^k} L_{\Diam^{\a}_{Q^\a}}$, where $(Q_\a)$ ranges over $P(\mb N^k)^{A^{k}}$. We conclude by proving that this Boolean algebra actually coincides with $\mc B_k$.

\begin{proposition} \label{preuve1}
For any $k \geq 1$, the Boolean algebra $\mc{B}\Sigma_1[\mc{N}_{k}^{u}]$ is equal to $\mc B_k$. 
\end{proposition}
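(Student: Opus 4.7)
The plan is a straightforward comparison of generating sets, leveraging the description of $\mc{B}\Sigma_1[\mc{N}_{k}^{u}]$ established just above the statement. Recall that we have shown $\mc{B}\Sigma_1[\mc{N}_{k}^{u}]$ is generated as a Boolean subalgebra of $\mc P(A^{*})$ by the languages of the form $\bigcup_{\a \in A^{k}} L_{\Diam^{\a}_{Q^{\a}}}$, where $(Q^{\a})_{\a \in A^{k}}$ ranges over $\mc P(\mb N^{k})^{A^{k}}$, while $\mc B_{k}$ is generated by the individual languages $L_{\Diam^{\a}_{Q}}$ for $\a \in A^{k}$ and $Q \subseteq \mb N^{k}$. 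So it is enough to check that each of these two generating families is contained in the Boolean algebra generated by the other.

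For the inclusion $\mc{B}\Sigma_1[\mc{N}_{k}^{u}] \subseteq \mc B_{k}$, observe that any generator $\bigcup_{\a \in A^{k}} L_{\Diam^{\a}_{Q^{\a}}}$ is a finite union, indexed by the finite set $A^{k}$, of languages of the form $L_{\Diam^{\a}_{Q^{\a}}}$, each of which is by definition a generator of $\mc B_{k}$. Hence this union lies in $\mc B_{k}$.

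For the reverse inclusion $\mc B_{k} \subseteq \mc{B}\Sigma_1[\mc{N}_{k}^{u}]$, fix a generator $L_{\Diam^{\a}_{Q}}$ of $\mc B_{k}$. Define the family $(Q^{\b})_{\b \in A^{k}}$ by setting $Q^{\a} = Q$ and $Q^{\b} = \emptyset$ for every $\b \neq \a$. Using that $L_{\Diam^{\b}_{\emptyset}} = \{w \in A^{*} \colon c_{\b}(w) \cap \emptyset \neq \emptyset \} = \emptyset$ for every $\b$, we get
\[ \bigcup_{\b \in A^{k}} L_{\Diam^{\b}_{Q^{\b}}} = L_{\Diam^{\a}_{Q}} \cup \bigcup_{\b \neq \a} \emptyset = L_{\Diam^{\a}_{Q}},\]
so $L_{\Diam^{\a}_{Q}}$ appears as a generator of $\mc{B}\Sigma_1[\mc{N}_{k}^{u}]$.

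Since each side's generating family is contained in the other's Boolean algebra, the two Boolean algebras coincide. There is no real obstacle here: all the work has already been carried out in Lemma \ref{lemma:freeformula} and in the rewriting of existentially quantified formulas into the form $\bigvee_{\a \in A^{k}} \exists x_1 \ldots \exists x_k (\a(\x) \wedge Q^{\a}(\x))$, which translates directly into the language-theoretic identity used above.
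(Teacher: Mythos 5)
Your proof is correct and follows essentially the same route as the paper: both directions are handled by comparing generating families, and the reverse inclusion uses exactly the paper's trick of realising $L_{\Diamond^{\a}_{Q}}$ as the union $\bigcup_{\b}L_{\Diamond^{\b}_{Q^{\b}}}$ with $Q^{\a}=Q$ and $Q^{\b}=\emptyset$ for $\b\neq\a$. Your explicit remark that $L_{\Diamond^{\b}_{\emptyset}}=\emptyset$ is a small but welcome clarification of a step the paper leaves implicit.
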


\begin{proof}
The Boolean algebra $\mc{B}\Sigma_1[\mc{N}_{k}^{u}]$ is generated by the languages $\bigcup_{ \a \in A^k} L_{\Diam^{\a}_{Q^\a}}$, where $(Q_\a)$ ranges over $\mc P(\mb N^k)^{A^{k}}$, while $\mc B_k$ is generated by the languages $L_{\Diam^{\a}_{Q}}$, where $\a$ ranges over $A^{k}$ and $Q$ ranges over subsets of $\mb N^k$.
On the one hand, it is clear that $\mc{B}\Sigma_1[\mc{N}_{k}^{u}]$ is generated by Boolean combinations of languages of the form $L_{\Diam^{\a}_{Q}}$. 
On the other hand, if we fix $Q \subseteq \mb N^k$ and a $k$-tuple of letters $\a \in A^k$, then we have that  
\[L_{\Diam^{\a}_Q} := L_{\Diam^{\a}_Q} \cup \bigcup_{\b \in A^{k} \atop \b \neq \a}L_{\Diam^{\b}_{\emptyset}},\]
and we conclude that $\mc B_k$ is isomorphic to  $\mc{B}\Sigma_1[\mc{N}_{k}^{u}]$.
\end{proof}

\subsection{The dual space via finite colourings}\label{section:colouring}

In this section, we provide the first elements of study of the dual space $X_k$, for any $k \geq 1$.
We start by explaining how it is possible to make an analogy between elements of the dual space and finite words. Formalizing this link, and considering a different basis of $V_k$ constructed out of the family of all finite colourings of $\mb N^k$ leads us to a first characterisation of $X_k$. We conclude the section by making this characterisation even more precise, in the case $k=1$.

\subsubsection*{Colourings approach}

The points of $V_k$ have a behaviour that is, in a way, similar to finite words. 
Let us explain the idea of this analogy in the case $k=1$. A way to encode a finite word is by following the insight of logic on words: it is equivalent to consider a finite word and a family of finite disjoint subsets of $\mb N$, possibly empty for some of them, which cover the initial segment $\{0,...,n-1\}$, for some integer $n \geq 1$. Labelling each of these subsets with a letter of the alphabet, this amounts to grouping together the positions of the word which correspond to the same letter. This is the definition of $c_{a}(w)$, where $a \in A$ and $w \in A^*$.
Note that, since $V_1$ is the image of $c^1$, the family $c^{1}(w)=(\wh{c_{a}(w)})_{a \in A}$ is a point of $V_1$. 
Now, fix a point $\C=(C_a)_{a \in A} \in V_1$. If we view an ultrafilter $\alpha \in \beta(\mb N)$ as a generalized position, then just like in the previous situation, we could say that $\C$ has the letter $a \in A$ at the generalized position $\alpha$ if, and only if, $\alpha \in C_a$. 
This reasoning applies for any $k \geq 1$, and for this reason, we refer to points of $\C \in V_k$ as \emph{generalized words}. \newline In the particular case where a point $\C \in V_k$ is in the subspace $X_k$, we can make this analogy with words even more precise. The following characterisation of $X_k$ is a direct consequence of the universal property of \v Cech-Stone compactification, and relates $\C$ to the existence of a certain ultrafilter in $\beta(A^{*})$. In particular, if the corresponding ultrafilter is trivial, i.e there exists a finite word $w \in A^*$ such that $\gamma = {\uparrow}\{w\}$, then for every $\a \in A^k$, $C_\a$ corresponds to the clopen associated to the set of $k$-tuples of positions $c_{\a}(w)$.

\begin{proposition}\label{prop:carcImultrafilter}
For any $k \geq 1$, $\C=(C_{\a})_{\a \in A^k}$ is in $X_k$ if, and only if, there exists an ultrafilter $\gamma \in \beta(A^{*})$ such that, for every $\a \in A^k$, and for every $Q \subseteq \mb N^k$,
\[C_\a \in \Box^{\a}(\wh{Q}) \text{ if, and only if, } L_{\Box^{\a}_Q}\in \gamma;\]
or, equivalently,
\[C_\a \in \Diam^{\a}(\wh{Q}) \text{ if, and only if, } L_{\Diam^{\a}_Q}\in \gamma.\]
\end{proposition}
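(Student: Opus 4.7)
The plan is to reduce both directions of the biconditional to a single core identity: for every ultrafilter $\gamma \in \beta(A^*)$, every $\a \in A^k$, and every $Q \subseteq \mb N^k$,
\[ c^{k}(\gamma) \in \Diam^{\a}(\wh{Q}) \iff L_{\Diam^{\a}_Q} \in \gamma. \]
Once this identity is established, the equivalent Box version follows by taking complements, since $\Box^{\a}(\wh{Q}) = (\Diam^{\a}(\wh{Q^c}))^{c}$ inside $V_k$ and $L_{\Box^{\a}_Q} = (L_{\Diam^{\a}_{Q^c}})^{c}$ inside $\mc P(A^*)$, and ultrafilters are closed under complementation.

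To prove the core identity, the strategy is to invoke the duality set up in the proof of Proposition \ref{prop:imck}: the map $c^{k}$ is dual to the Boolean algebra homomorphism $h_k: \mc M_k \to \mc P(A^*)$ characterised by $h_k(\Diam^{\a}Q) = L_{\Diam^{\a}_Q}$. Under the Stone isomorphism $\mc M_k \simeq Clop(V_k)$ coming from Proposition \ref{prop:dualofMB}, the generator $\Diam^{\a}Q \in \mc M_k$ is identified with the clopen $\Diam^{\a}(\wh{Q}) \in Clop(V_k)$. Applying the Stone functor to $h_k$ then yields the equality of clopens $(c^{k})^{-1}(\Diam^{\a}(\wh{Q})) = \wh{L_{\Diam^{\a}_Q}}$ in $\beta(A^*)$, which is precisely the identity once rephrased at the level of points.

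With the core identity in hand, the ``only if'' direction follows immediately: if $\C \in X_k$, then by Proposition \ref{prop:imck} there exists $\gamma \in \beta(A^*)$ with $\C = c^{k}(\gamma)$, and this $\gamma$ verifies the required equivalences. For the ``if'' direction, suppose some $\gamma$ satisfies the stated property. The identity shows that $\C$ and $c^{k}(\gamma)$ lie in exactly the same subbasic clopens $\Diam^{\a}(\wh{Q})$ (and hence also in exactly the same elements of the subbasis formed with their complements $\Box^{\a}(\wh{Q})$), so by Hausdorffness of the Boolean space $V_k$ one has $\C = c^{k}(\gamma)$, and in particular $\C \in X_k$.

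The main subtle point I foresee is keeping careful track of the identification, under the several overlapping Stone/Vietoris dualities, between the formal generator $\Diam^{\a}Q$ of the modal algebra $\mc M_k$ and the concrete clopen $\Diam^{\a}(\wh{Q})$ of the Vietoris hyperspace $V_k$. This is essentially tautological from the construction of the Vietoris topology, but it is the bookkeeping step where any slip would invalidate the argument; once it is pinned down, everything else is routine.
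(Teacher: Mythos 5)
Your proof is correct, but it routes the argument differently from the paper. The paper also starts from Proposition \ref{prop:imck} to reduce the statement to ``$\C = c^{k}(\gamma)$ for some $\gamma$'', but it then handles the equivalence coordinatewise and concretely: using the explicit formula of Remark \ref{remark:PUfilter}, it writes $c_{\a}(\gamma)$ as $\bigcap \{\wh{Q} \colon L_{\Box^{\a}_{Q}} \in \gamma\}$ and $C_{\a}$ as $\bigcap\{\wh{Q} \colon C_{\a} \in \Box^{\a}(\wh{Q})\}$, and concludes that the two closed sets agree exactly when the two indexing families of subsets agree, via the filter/closed-set correspondence of Corollary \ref{cor:correspfilterclosed}; it works primarily with $\Box$ and derives the $\Diam$ form at the end by complementation. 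You instead extract from the proof of Proposition \ref{prop:imck} the abstract functoriality identity $(c^{k})^{-1}(\Diam^{\a}(\wh{Q})) = \wh{L_{\Diam^{\a}_{Q}}}$ (i.e.\ $\mc S(h)^{-1}(\wh{b}) = \wh{h(b)}$ applied to the generators of $\mc M_k$), which gives the forward direction at once, and you close the backward direction by observing that a point of $V_k$ is determined by which subbasic clopens $\Diam^{\a}(\wh{Q})$ it meets, so that $\C$ and $c^{k}(\gamma)$ must coincide. Both arguments are sound and of comparable length; the paper's version has the advantage of exhibiting the filters explicitly (which it reuses later, e.g.\ in the worked example following the proposition and in Example \ref{example:sound1}), while yours isolates the clean duality identity and replaces the filter comparison by a standard separation argument in a Boolean space. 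The one point to make fully explicit in a written-up version is the identification, under Proposition \ref{prop:dualofMB} and the coproduct/product duality, of the formal generator $\Diam^{\a}Q$ with the concrete clopen $p_{\a}^{-1}(\Diam\,\wh{Q})$ of $V_k$ --- you correctly flag this, and it is indeed where the content of the ``core identity'' lives.
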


\begin{proof}
By Proposition \ref{prop:imck}, a generalized word $\C=(C_{\a})_{\a \in A^k}$ is in $X_k$ if, and only if, it is in $Im(c^{k})$, that is if there exists an ultrafilter $\gamma \in \beta(A^{*})$ such that $\C =c^{k}(\gamma)$.
Now, by definition, $\C =c^{k}(\gamma)$ if, and only if, for every $\a \in A^{k}$, $C_{\a} = c_{\a}(\gamma)$.

On the one hand, by Remark \ref{remark:PUfilter} we know that
\[c_{\a}(\gamma) = \bigcap_{Q \subseteq \mb N^k \atop L_{\Box^{\a}_{Q}} \in \gamma}\wh{Q}.\]
On the other hand,
\[C_{\a} = \bigcap_{Q \subseteq \mb N^k \atop C_{\a} \subseteq \wh{Q}}\wh{Q} = \bigcap_{Q \subseteq \mb N^k \atop C_{\a} \in \Box^{\a}(\wh{Q})}\wh{Q},\]
thus for every $\a \in A^{k}$, $C_{\a} = c_{\a}(\gamma)$ if, and only if, for every $Q \subseteq \mb N^k$,
\[ C_{\a} \in \Box^{\a}\wh{Q} \text{   if, and only if,   } L_{\Box^{\a}_{Q}} \in \gamma.\]
Finally, \[ \Diam^{\a}\wh{Q}= (\Box^{\a}\wh{Q}^{c})^{c}  \text{   and   } L_{\Diam^{\a}_Q} = (L_{\Box^{\a}_{Q^{c}}})^{c} \]
thus we conclude that $\C=(C_{\a})_{\a \in A^k}$ is in $X_k$ if, and only if, there exists $\gamma \in \beta(A^{*})$ such that, for every $\a \in A^{k}$ and every $Q \subseteq \mb N^k$,
\[C_\a \in \Diam^{\a}(\wh{Q}) \text{ if, and only if, } L_{\Diam^{\a}_Q}\in \gamma.\]
\end{proof}

This motivates the following terminology: we refer to points $\C \in X_k$ as \emph{pseudofinite words}. 
This setting allows us to compute some elementary instances of pseudofinite words which are not finite.

\begin{example}
In the case where $k=1$ and $|A|=\{a,b\}$, let us consider the couple of closed subsets of $\beta(\mb N)$ \[\C = (C_a,C_b) :=(\beta(\mb N), {^{*}}\mb N).\]
We prove that this is a pseudofinite word by giving a description of an ultrafilter $\gamma$
satisfying the condition introduced in Proposition \ref{prop:carcImultrafilter}.

By Example \ref{example:filtersclosedsets}, an ultrafilter $\gamma \in \beta(A^{*})$ satisfying the condition introduced in Proposition \ref{prop:carcImultrafilter} has to be such that:

\begin{enumerate}
\item $\{P \subseteq \mb N \colon L_{\Box^{a}_{P}} \in \gamma \}  = \{\mb N\}$;
\item $\{P \subseteq \mb N \colon L_{\Box^{b}_{P}}  \in \gamma \} = Cof(\mb N).$
\end{enumerate} 

that is

\begin{enumerate}
\item $L_{\Box^{a}_{\mb N}} \in \gamma$ and, for every $S \subsetneq \mb N$, $L_{\Box^{a}_{S}} \notin \gamma$.
\item For every cofinite subset $S$ of $\mb N$, $L_{\Box^{b}_{S}} \in \gamma$ and, for every non-cofinite subset $S$ of $\mb N$, $L_{\Box^{b}_{S}} \notin \gamma$.
\end{enumerate} 

Let us reformulate these conditions.

\begin{enumerate}
\item $L_{\Box^{a}_{\mb N}}$ is equal to $A^{*}$, thus the first condition always holds. 
 We prove that the second condition is equivalent to saying that, for every $n \in \mb N$, $L_{\Diam^{a}_{\{n\}}} \in \gamma$. 
On the one hand, assume that, for every $S \subsetneq \mb N$, $L_{\Box^{a}_{S}} \notin \gamma$.
Then, in particular, for any $n \in \mb N$, $L_{\Box^{a}_{\mb N \setminus \{n\}}} \notin \gamma$, which is equivalent to $L_{\Diam^{a}_{\{n\}}} \in \gamma$.
On the other hand, assume that, for every $n \in \mb N$, $L_{\Diam^{a}_{\{n\}}} \in \gamma$. For any subset $S$ strictly contained in $\mb N$, pick $n \in S^{c}$.  We have that $L_{\Diam^{a}_{\{n\}}} \subseteq L_{\Diam^{a}_{S^c}} = (L_{\Box^{a}_{S}})^{c}$. By upset, this last language is in $\gamma$, and thus we conclude that $L_{\Box^{a}_{S}} \notin \gamma$.
\item We prove, in a similar fashion, that saying that, for every cofinite subset $S$ of $\mb N$, $L_{\Box^{b}_{S}} \in \gamma$ is equivalent to saying that, for every $n \in \mb N$, $(L_{\Diam^{b}_{\{n\}}})^{c} \in \gamma$. We also have that saying that, for every non-cofinite subset $S$ of $\mb N$, $L_{\Box^{b}_{S}} \notin \gamma$ is equivalent to saying that, for any infinite subset $S$ of $\mb N$, $L_{\Diam^{b}_{S}} \in \gamma$.
\end{enumerate}

We conclude that the condition $\gamma \in \beta(A^{*})$ has to satisfy can be rephrased as follows.

\begin{enumerate}
\item For every $n \in \mb N$, $L_{\Diam^a_{\{n\}}} \in \gamma$ 
\item For every $n \in \mb N$, $(L_{\Diam^b_{\{n\}}})^{c} \in \gamma$ and for every infinite subset $S \subseteq \mb N$, $L_{\Diam^b_{S}} \in \gamma$. 
\end{enumerate}  

Now, for any $n_1,...,n_{l_1},m_1,...,m_{l_2} \in \mb N$, where $l_1, l_2 \geq 1$, and for any finite family of infinite subsets $S_1,...,S_l$, the language 
\[ \bigcap_{i=1}^{l_1}L_{\Diam^{a}_{\{n_i\}}} \cap \bigcap_{i=1}^{l_2} (L_{\Diam^{b}_{\{m_i\}}})^{c} \cap  \bigcap_{i=1}^{l}L_{\Diam^{b}_{S_i}} \]
is non-empty, thus 
\[ {\uparrow}\{L_{\Diam^{a}_{\{n\}}},(L_{\Diam^{b}_{\{m\}}})^{c},L_{\Diam^{b}_{S}} \colon n,m \in \mb N, S\subseteq \mb N \text{ and $S$ infinite}\} \]
is a filter and by Stone's theorem it can be extended into an ultrafilter $\gamma$ which satisfies, by construction, the condition introduced in Proposition \ref{prop:carcImultrafilter}.
\end{example}

Another, and potentially more practical, characterisation of $X_k$, for any $k \geq 1$, can be made by using the terminology of finite colourings, commonly used in Ramsey theory.  We could summarize this characterisation by saying that a generalized word $\C$ is pseudofinite if, and only if, for every finite colouring, it is possible to construct an actual finite word $w \in A^*$ which is equivalent to $\C$ when we look at it from the perspective of this finite colouring.

\begin{definition}\label{def:colouring}
A \emph{finite colouring} of $\mb N^k$ is a map $q: \mb N^k \to I$, where $I$ is a finite set, or, equivalently, a finite family of pairwise disjoints subsets of $\mb N^k$, $\mc Q=(Q_{i})_{i \in I}$, such that $\bigcup_{i \in I} Q_i = \mb N^k$.
\end{definition}

First, let us define the notion of content of a word on a subset of $\mb N^k$, that is the set of $k$-tuples of letters of $w$ which occur on the given subset. This generalizes the notation $w[\i]$, where  $\i \in \mb N^k$, introduced in Definition \ref{def:semanticswords}.

\begin{definition}\label{def:contentonsubset}
For any finite word $w \in A^*$ and any subset $Q \subseteq \mb N^k$, we define the \emph{content of $w$ on $Q$} as 
\[ \langle w, Q \rangle := \{\a \in A^k \colon c_{\a}(w) \cap Q \neq \emptyset \}.\]
\end{definition}

Note that the map $\langle w, \cdot \rangle: \mc P( \mb N^k) \to \mc P(A^k)$ which sends a subset $Q$ to $\langle w, Q \rangle$ is finitely additive: for any finite family of subsets $Q_1,...,Q_n \subseteq \mb N^k$, 

\begin{align} \label{property:w[Q]finitelyadditive}
\bigcup_{i=1}^{n}\langle w, Q_i \rangle 
&= \bigcup_{i=1}^{n} \{\a \in A^k \colon c_{\a}(w) \cap Q_i \neq \emptyset    \} \notag    \\
&= \{\a \in A^k \colon c_{\a}(w) \cap \bigcup_{i=1}^n Q_i \neq \emptyset \} \notag \\
&= \langle w, \bigcup_{i=1}^{n}Q_i \rangle.
\end{align}

More generally, for any finite colouring $q : \mb N^k \to I$ of $\mb N^k$, we define the (color) profile of a finite word $w$ as 
\[\langle w,q \rangle := (\langle w, q^{-1}(i) \rangle)_{i \in I} \in \mc P(A^{k})^{I},\]
that is, the family of contents of $w$ associated to each colour.

\begin{example}
Fix $A = \{a,b\}$. In the case $k=2$, let us consider $q$, the three colours colouring $(\Delta^{<},\Delta,\Delta^{>})$ of $\mb N^2$, where
\[ \textcolor{blue}{\Delta^{<} := \{(n,m) \in \mb N^2 \colon n<m \}}, \]
\[\textcolor{qqwuqq}{\Delta^{>} := \{(n,m) \in \mb N^2 \colon n>m \}}, \]
\[\textcolor{red}{\Delta := \{(n,n) \in \mb N^2 \colon n \in \mb N \}}\]
and the finite word $w=ababb$.

\begin{center}
\begin{tikzpicture}[line cap=round,line join=round,>=triangle 45,x=1cm,y=1cm]
\begin{axis}[gray,
x=1cm,y=1cm,
axis lines=middle,
ymajorgrids=true,
xmajorgrids=true,
xmin=-0.3,
xmax=7.2,
ymin=-0.3,
ymax=7.2,
xtick={0,-2,...,7},
ytick={0,-1,...,7},]
\clip(-3.3372307032278825,-2.6240526184601527) rectangle (15.249424098624392,8.360698378953552);
\begin{scriptsize}
\draw [color=ccqqqq] (0,0) circle (6pt);
\draw[color=black] (0,0) node {$aa$};
\draw [color=ccqqqq] (1,1) circle (6pt);
\draw[color=black] (1,1) node {$bb$};
\draw [color=ccqqqq] (2,2) circle (6pt);
\draw[color=black] (2,2) node {$aa$};
\draw [color=ccqqqq] (3,3) circle (6pt);
\draw[color=black] (3,3) node {$bb$};
\draw [color=ccqqqq] (4,4) circle (6pt);
\draw[color=black] (4,4) node {$bb$};
\draw [color=ccqqqq] (5,5) circle (6pt);
\draw [color=ccqqqq] (6,6) circle (6pt);
\draw [color=ccqqqq] (7,7) circle (6pt);
\draw [color=qqqqff] (2,7) circle (6pt);
\draw [color=qqqqff] (3,7) circle (6pt);
\draw [color=qqqqff] (1,7) circle (6pt);
\draw [color=qqqqff] (1,6) circle (6pt);
\draw [color=qqqqff] (1,5) circle (6pt);
\draw [color=qqqqff] (1,4) circle (6pt);
\draw[color=black] (1,4) node {$bb$};
\draw [color=qqqqff] (2,4) circle (6pt);
\draw[color=black] (2,4) node {$ab$};
\draw [color=qqqqff] (2,5) circle (6pt);
\draw [color=qqqqff] (2,6) circle (6pt);
\draw [color=qqqqff] (3,6) circle (6pt);
\draw [color=qqqqff] (4,7) circle (6pt);
\draw [color=qqqqff] (5,7) circle (6pt);
\draw [color=qqqqff] (6,7) circle (6pt);
\draw [color=qqqqff] (5,6) circle (6pt);
\draw [color=qqqqff] (4,6) circle (6pt);
\draw [color=qqqqff] (4,5) circle (6pt);
\draw [color=qqqqff] (3,5) circle (6pt);
\draw [color=qqqqff] (3,4) circle (6pt);
\draw[color=black] (3,4) node {$bb$};
\draw [color=qqqqff] (1,3) circle (6pt);
\draw[color=black] (1,3) node {$bb$};
\draw [color=qqqqff] (1,2) circle (6pt);
\draw[color=black] (1,2) node {$ba$};
\draw [color=qqqqff] (2,3) circle (6pt);
\draw[color=black] (2,3) node {$ab$};
\draw [color=qqqqff] (0,7) circle (6pt);
\draw [color=qqqqff] (0,6) circle (6pt);
\draw [color=qqqqff] (0,5) circle (6pt);
\draw [color=qqqqff] (0,1) circle (6pt);
\draw[color=black] (0,1) node {$ab$};
\draw [color=qqqqff] (0,2) circle (6pt);
\draw[color=black] (0,2) node {$aa$};
\draw [color=qqqqff] (0,3) circle (6pt);
\draw[color=black] (0,3) node {$ab$};
\draw [color=qqqqff] (0,4) circle (6pt);
\draw[color=black] (0,4) node {$ab$};
\draw [color=qqwuqq] (7,6) circle (6pt);
\draw [color=qqwuqq] (7,5) circle (6pt);
\draw [color=qqwuqq] (7,4) circle (6pt);
\draw [color=qqwuqq] (7,3) circle (6pt);
\draw [color=qqwuqq] (7,2) circle (6pt);
\draw [color=qqwuqq] (7,1) circle (6pt);
\draw [color=qqwuqq] (6,4) circle (6pt);
\draw [color=qqwuqq] (6,5) circle (6pt);
\draw [color=qqwuqq] (5,4) circle (6pt);
\draw [color=qqwuqq] (5,3) circle (6pt);
\draw [color=qqwuqq] (4,3) circle (6pt);
\draw[color=black] (4,3) node {$bb$};
\draw [color=qqwuqq] (4,2) circle (6pt);
\draw[color=black] (4,2) node {$ba$};
\draw [color=qqwuqq] (3,2) circle (6pt);
\draw[color=black] (3,2) node {$ba$};
\draw [color=qqwuqq] (3,1) circle (6pt);
\draw[color=black] (3,1) node {$bb$};
\draw [color=qqwuqq] (2,1) circle (6pt);
\draw[color=black] (2,1) node {$ab$};
\draw [color=qqwuqq] (4,1) circle (6pt);
\draw[color=black] (4,1) node {$bb$};
\draw [color=qqwuqq] (5,1) circle (6pt);
\draw [color=qqwuqq] (5,2) circle (6pt);
\draw [color=qqwuqq] (6,3) circle (6pt);
\draw [color=qqwuqq] (6,2) circle (6pt);
\draw [color=qqwuqq] (6,1) circle (6pt);
\draw [color=qqwuqq] (3,0) circle (6pt);
\draw[color=black] (3,0) node {$ba$};
\draw [color=qqwuqq] (4,0) circle (6pt);
\draw[color=black] (4,0) node {$ba$};
\draw [color=qqwuqq] (5,0) circle (6pt);
\draw [color=qqwuqq] (6,0) circle (6pt);
\draw [color=qqwuqq] (7,0) circle (6pt);
\draw [color=qqwuqq] (2,0) circle (6pt);
\draw[color=black] (2,0) node {$aa$};
\draw [color=qqwuqq] (1,0) circle (6pt);
\draw[color=black] (1,0) node {$ba$};
\end{scriptsize}
\end{axis}
\end{tikzpicture}
\end{center}

The profile of $w$ for the colouring $q$ is
\[ \langle w,q \rangle = (\langle w, \Delta^{<} \rangle, \langle w,\Delta \rangle, \langle w, \Delta^{>} \rangle) = (A^{2}, \{aa,bb\},A^{2}).\]
\end{example}

In our framework, a natural idea is to extend this notion of (colour) profile to generalized words. 

\begin{definition}\label{def:colour}
For any subset $Q$ of $\mb N^k$, and any generalized word $\C=(C_a)_{\a \in A^k} \in V_k$, we define \emph{the content of $\C$ on $Q$} as
\[\langle \C ,Q \rangle := \{\a \in A^k \colon  C_\a \cap \wh{Q} \neq \emptyset  \}.\]
The \emph{colour profile} of a generalized word $\C=(C_a)_{\a \in A^k} \in V_k$ on a finite colouring $q: \mb N^k \to I$ is the map \[ \langle \C,q \rangle : I \to \mc P(A^k)\]
which sends any $i \in I$ to 
\[  \langle \C,q \rangle (i) := \langle \C, q^{-1}(i) \rangle.\]
\end{definition}

In particular, if the point of $V_k$ we consider is of the form $(\wh{c_\a(w)})_{\a \in A^k}$ for some finite word $w \in A^{*}$, then for any finite colouring $q: \mb N^k \to I$, we have 
\[ \langle (\wh{c_{\a}(w)})_{\a \in A^k},q \rangle = \langle w, q \rangle,\]
which shows that the profile of a generalized word can be seen as an extension of the notion of profile of a finite word.  
Also, notice that for any $\C \in V_k$, the map $\langle \C, \cdot \rangle : \mc P(\mb N^k) \to \mc P(A^k)$ which sends a subset $Q$ to $\langle \C,Q \rangle$ is finitely additive: for any finite family of subsets $Q_1,...,Q_n \subseteq \mb N^k$,

\begin{align*}
\bigcup_{i=1}^{n} \langle \C, Q_i \rangle
&= \bigcup_{i=1}^{n} \{\a \in A^k \colon C_\a \cap \wh{Q_i} \neq \emptyset    \}   \\
&= \{\a \in A^k \colon C_\a \cap \bigcup_{i=1}^n \wh{Q_i} \neq \emptyset \}\\
&= \{\a \in A^k \colon C_\a \cap \wh{\bigcup_{i=1}^n Q_i} \neq \emptyset \}\\
&= \langle \C,  \bigcup_{i=1}^{n}Q_i \rangle.
\end{align*}

We are now going to use the family of all finite colourings in order to provide a different basis for the space $V_k$. 

\begin{lemma}\label{lemma:Imcolouring}
For any finite colouring $q: \mb N^k \to I$ of $\mb N^k$, we consider the map 
\[ \langle \cdot ,q \rangle: V_k \to \mc P(A^k)^{I} \] which sends any $\C \in V_k$ to $\langle \C,q \rangle $.
We also consider the family of all preimages, for all of these maps
\[ \mc C:= \{\langle \cdot ,q \rangle ^{-1}(\B) \colon  q: \mb N^k \to I, \text{ where $I$ is a finite set, and }   \B \in \mc P(A^k)^I \}.\] 
The following statements hold. \newline (1): The inverse image of any point $\B=(B_i)_{i \in I} \in \mc P(A^k)^{I}$ under $\langle \cdot ,q \rangle $ is clopen in $V_k$. In particular, for any finite colouring $q: \mb N^k \to I$ of $\mb N^k$, the map $\langle \cdot, q \rangle$ is continuous when $\mc P(A^{k})^{I}$ is equipped with the discrete topology. \newline  (2): 
Any intersection of two elements in $\mc C$ can be written as a finite union of elements in $\mc C$. In particular, $\mc C$ is a basis for the topology on $V_k$.  
\end{lemma}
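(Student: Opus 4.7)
The plan for part (1) is to unfold the condition $\langle \C, q \rangle = \B$ directly into a finite Boolean combination of sub-basic clopens of $V_k = \mc V(\beta(\mb N^k))^{A^k}$. By definition, $\langle \C, q \rangle = \B$ asserts that for each $(i,\a) \in I \times A^k$, $(\a \in B_i) \Leftrightarrow (C_\a \cap \wh{q^{-1}(i)} \neq \emptyset)$. Writing $\Diam^\a \wh{Q}$ for the sub-basic clopen $\{\C \in V_k \colon C_\a \cap \wh{Q} \neq \emptyset\}$ of $V_k$, this rewrites as
$$ \langle \cdot, q \rangle^{-1}(\B) = \bigcap_{(i,\a) : \a \in B_i} \Diam^\a \wh{q^{-1}(i)} \;\cap\; \bigcap_{(i,\a) : \a \notin B_i} \bigl(\Diam^\a \wh{q^{-1}(i)}\bigr)^c,$$
a finite intersection of clopens, hence itself clopen. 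Continuity of $\langle \cdot, q \rangle$ then follows because the singletons form a basis of the discrete topology on $\mc P(A^k)^I$, and their preimages are open.

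For part (2), given colourings $q_j : \mb N^k \to I_j$ and targets $\B_j \in \mc P(A^k)^{I_j}$ for $j = 1,2$, I introduce the common refinement $q : \mb N^k \to I_1 \times I_2$ defined by $q(\n) := (q_1(\n), q_2(\n))$, together with the projection maps $\pi_j : \mc P(A^k)^{I_1 \times I_2} \to \mc P(A^k)^{I_j}$ given by $\pi_1(\B)(i_1) := \bigcup_{i_2 \in I_2} \B(i_1,i_2)$ and symmetrically for $\pi_2$. Since $q_1^{-1}(i_1) = \bigsqcup_{i_2 \in I_2} q^{-1}(i_1,i_2)$ and similarly for $q_2$, the finite additivity of $\langle \C, \cdot \rangle$ noted just before the statement yields $\langle \C, q_j \rangle = \pi_j(\langle \C, q \rangle)$. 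Consequently,
$$ \langle \cdot, q_1 \rangle^{-1}(\B_1) \cap \langle \cdot, q_2 \rangle^{-1}(\B_2) = \bigcup_{\B \in \pi_1^{-1}(\B_1) \cap \pi_2^{-1}(\B_2)} \langle \cdot, q \rangle^{-1}(\B),$$
a finite union of elements of $\mc C$. To conclude that $\mc C$ is a basis for the Vietoris topology on $V_k$, it remains to write each sub-basic clopen as a finite union of elements of $\mc C$. For $\Diam^\a \wh{Q}$, the two-colour partition $q_Q : \mb N^k \to \{0,1\}$ with $q_Q^{-1}(1) = Q$ does the job:
$$ \Diam^\a \wh{Q} = \bigcup_{\B \in \mc P(A^k)^{\{0,1\}} : \a \in B_1} \langle \cdot, q_Q \rangle^{-1}(\B),$$
and $\Box^\a \wh{Q}$ is handled by applying the same argument to $Q^c$ and taking complements.

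The main subtlety lies in the common-refinement step of (2): one must identify the correct projections $\pi_j$ and derive the identity $\langle \C, q_j \rangle = \pi_j(\langle \C, q \rangle)$ from finite additivity. Once this structural identity is in hand, the intersection formula and the verification that sub-basic clopens lie in the sublattice generated by $\mc C$ are both formal manipulations.
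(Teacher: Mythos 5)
Your proof is correct and, for part (1) and the intersection formula in part (2), it follows essentially the same route as the paper: the same Boolean decomposition of $\langle \cdot ,q \rangle^{-1}(\B)$ into the sets $\Diam^{\a}\wh{q^{-1}(i)}$ and their complements, and the same common-refinement argument. Your packaging of the refinement via the maps $\pi_j(\B)(i_1) = \bigcup_{i_2}\B(i_1,i_2)$ and the identity $\langle \C, q_j\rangle = \pi_j(\langle \C, q\rangle)$ is a cleaner way of expressing what the paper does by hand with the condition $R_{u,v}\subseteq Q_i$ (which is slightly awkward when a cell $Q_i\cap Q'_j$ is empty, since an empty cell is contained in every colour class); the two are the same idea.

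The one place where you genuinely diverge is the final claim that $\mc C$ is a basis for the topology on $V_k$. The paper only verifies the covering condition (every $\C$ lies in some element of $\mc C$, via the one-colour colouring), which together with (2) shows that $\mc C$ is a basis for \emph{some} topology, but does not by itself show that this topology is the product Vietoris topology. You supply the missing ingredient: since the sets $\langle\cdot,q_Q\rangle^{-1}(\B)$ for the two-colour colouring $(Q,Q^c)$ partition $V_k$, the subbasic clopen $\Diam^{\a}\wh{Q}$ is the union of those cells with $\a\in B_1$, and $\Box^{\a}\wh{Q}$ is the complementary union; combined with the closure of $\mc C$ under intersection up to finite unions, every basic open of $V_k$ is then a union of elements of $\mc C$. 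This makes your argument more complete than the paper's on this point, at no real extra cost.
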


\begin{proof}
(1): Fix $q: \mb N^k \to I$ a finite colouring of $\mb N^k$, and a family of subsets $\B=(B_i)_{i \in I} \in \mc P(A^k)^{I}$.
Recall that, for any $Q \subseteq \mb N^k$, the subsets of the form \[\Diam Q = \{C \in \mc V(\beta(\mb N^k))  \colon C \cap \wh{Q} \neq \emptyset \} \] are clopen in $\mc V(\beta(\mb N^k))$.
We can express $\langle \cdot ,q \rangle^{-1} (\B)$ as a finite Boolean combination of these clopen subsets:
\begin{align*}
\langle \cdot ,q \rangle^{-1} (\B) &= \{ \C = (C_\a)_{\a \in A^k} \in V_k \colon \langle \C, q \rangle = \B \} \\
&=  \{ \C = (C_\a)_{\a \in A^k} \in V_k \colon \forall i \in I, \langle \C, q \rangle(i) = B_i\}\\
&=  \{ \C = (C_\a)_{\a \in A^k} \in V_k \colon \forall i \in I, \forall \a \in A^k, (
C_\a \cap \wh{q^{-1}(i)} \neq \emptyset \Longleftrightarrow \a \in B_i )\} \\
&= \bigcap_{i \in I} (\bigcap_{\a \in B_i} p_{\a}^{-1}(\Diam \wh{q^{-1}(i)}) \cap \bigcap_{\a \notin B_i}p_{\a}^{-1}(\Diam \wh{q^{-1}(i))^{c}}),
\end{align*}
where, for every $\a \in A^k$, $p_{\a}: V_k \to \mc V(\beta(\mb N^k))$ sends any $\C=(C_{\a})_{\a \in A^k}$ to $C_\a$.
Since $\mc P(A^k)^I$ is equipped with the discrete topology, this proves that the map $\langle \cdot, q \rangle$ is continuous.\newline (2): Fix two finite colourings $\mc Q = (Q_1,...,Q_l)$ and $\mc Q' = (Q'_1,...,Q'_n)$ of $\mb N^k$, where $l,n \geq 1$. 
Fix $\B =(B_1,...,B_l) \in \mc P(A^{k})^{l}$ and $\B'=(B'_1,...,B'_n) \in \mc P(A^{k})^{n}$. 
First, we define a finite colouring of $\mb N^k$ which refines both $\mc Q$ and $\mc Q'$:
$\mc R =(R_{i,j})_{1 \leq i \leq l \atop 1 \leq j \leq n}$, the finite colouring of $\mb N^k$, such that, for every $(i,j) \in \{1,...,l\} \times \{1,...,n\}$, \[R_{i,j}:= Q_i \cap Q'_j.\]
Finally, we define $\mc D_{\B,\B'} \subseteq \mc P(A^{k})^{l.n}$ as follows: 
$\D = (D_{i,j})_{1 \leq i \leq l \atop 1 \leq j \leq n}$ is in $\mc D_{\B,\B'}$ if, and only if, 
for every $i \in \{1,...,l\}$, 
\[ \bigcup \{ D_{u,v} \colon (u,v) \in \{1,...,l\}\times \{1,...,n\} \text{  and  } R_{u,v} \subseteq Q_i \}  = B_i \]
and for every $j \in \{1,...,n\}$,
\[ \bigcup \{ D_{u,v} \colon (u,v) \in \{1,...,l\}\times \{1,...,n\} \text{  and  } R_{u,v} \subseteq Q'_j \}  = B'_j.\]
We now prove that \[ \langle \cdot , \mc Q \rangle^{-1}(\B) \cap  \langle \cdot , \mc Q' \rangle^{-1}(\B') = \bigcup_{\D \in \mc D_{\B,\B'}} \langle \cdot , \mc R \rangle^{-1}(\D).\] For the left-to-right inclusion, fix $\C \in \langle \cdot , \mc Q \rangle^{-1}(\B) \cap  \langle \cdot , \mc Q' \rangle^{-1}(\B')$. We define $\D_{\C} \in \mc P(A^{k})^{l.n}$ as follows: set, for any $(u,v) \in \{1,...,l\} \times \{1,...,n\}$,
\[(\D_{\C})_{u,v} := \langle \C, R_{u,v} \rangle   .\]
By construction, it is clear that $\C$ belongs to $\langle \cdot , \R  \rangle ^{-1}(\D_{\C})$. All we have left to prove in order to conclude is that $\D_{\C}$ is in $\mc D_{\B,\B'}$.
By definition of $\D_{\C}$ we have that, for any $i \in \{1,...,l\}$,
\[\bigcup_{(u,v) \in \{1,...,l\} \times \{1,...,n\} \atop R_{u,v} \subseteq Q_i } (D_{\C})_{u,v}  
 = \bigcup_{(u,v) \in \{1,...,l\} \times \{1,...,n\} \atop R_{u,v} \subseteq Q_i } \langle \C, R_{u,v} \rangle.\]
Now, since the map $\langle \cdot , \R \rangle : \mc P(\mb N^k) \to \mc P(A^{k})^{l.n}$ is finitely additive, this is also equal to
\[\langle \C, \bigcup_{(u,v) \in \{1,...,l\} \times \{1,...,n\} \atop R_{u,v} \subseteq Q_i } R_{u,v} \rangle,\]
that is, $\langle \C, Q_i \rangle$ and since $\C$ is in $\langle \cdot,\mc Q \rangle^{-1}(\B)$ we finally obtain that
\[\bigcup_{(u,v) \in \{1,...,l\} \times \{1,...,n\} \atop R_{u,v} \subseteq Q_i } (D_{\C})_{u,v} = B_i.\]
We prove in the exact same way that, for any $j \in \{1,...,n\}$,
\[\bigcup_{(u,v) \in \{1,...,l\} \times \{1,...,n\} \atop R_{u,v} \subseteq Q_i } (D_{\C})_{u,v}= B'_j,\]
which ends to prove that $\D_{\C} \in \mc D_{\B,\B'}$ and allows us to conclude. \newline For 
the left-to-right inclusion, fix $\D_{\C} \in \mc D_{\B,\B'}$.
For any $\C \in \langle \cdot, \mc R \rangle^{-1}(\D)$, we have that, for every $i \in \{1,...,l\}$,
\[ \langle \C, Q_i \rangle  = \langle \C, \bigcup_{(u,v) \in \{1,...,l\} \times \{1,...,n\} \atop R_{u,v \subseteq Q_i}}R_{u,v} \rangle.\]
Now, since the map $\langle \cdot , \R \rangle : \mc P(\mb N^k) \to \mc P(A^{k})^{l.n}$ is finitely additive, this is also equal to
\[ \bigcup_{(u,v) \in \{1,...,l\} \times \{1,...,n\} \atop R_{u,v \subseteq Q_i}} \langle \C , R_{u,v} \rangle \]
and since $\D \in \mc D_{\B,\B'}$, this is equal to
\[\bigcup_{(u,v) \in \{1,...,l\} \times \{1,...,n\} \atop R_{u,v \subseteq Q_i}} D_{u,v}\]
which allows us to conclude that
$\langle \C, Q_i \rangle = B_i.$ The exact same reasoning can be conducted to prove that $\langle \cdot,\R \rangle^{-1}(\D) \subseteq \langle \cdot, \Q' \rangle^{-1}(\B')$, which allows us to conclude. \newline Finally, we prove that $\mc C$ is a basis for the topology on $V_k$. By (2), all we have left to prove in order to do so is that every $\C \in V_k$ is contained in an element of $\mc C$. Considering the one element colouring $\mc Q_{\C}:= \{\mb N^k\}$ of $\mb N^k$ and setting $\B_{\C}:= \{\a \in A^{k} \colon C_{\a} \neq \emptyset \}$, we have that
\[ \langle \C , \mc Q \rangle  = \{ \a \in A^{k} \colon C_{\a} \cap \wh{\mb N^k} \neq \emptyset \} = \B_{\C},\]
which allows us to conclude that $\mc C$ is a basis for the topology on $V_k$.
\end{proof}

We use this basis for the topology on $V_k$ in order to characterise $X_k$: the pseudofinite words $\C$ in $V_k$ are exactly the points such that, for each finite colouring $q$, we can construct a concrete finite word $w_q \in A^{*}$ which has the same profile than $\C$ on $q$.

\begin{proposition}\label{prop:caraImccolouring}
A generalized word $\C \in V_k$ is pseudofinite if, and only if, for every finite colouring $q$ of $\mb N^k$, there exists a finite word $w_{q} \in A^*$ such that the profiles of $\C$ and $w_q$ on $q$ coincide.
In particular, for any subset $Q$ of $\mb N^k$ which is saturated with respect to $q$, we have, for every $\a \in A^k$, that
\[C_\a \cap \wh{Q} \neq \emptyset \text{  if, and only if,  } c_{\a}(w_q) \cap Q \neq \emptyset.\]
\end{proposition}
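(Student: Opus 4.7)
The plan is to combine two observations. First, $X_k = c^{k}(\beta(A^{*}))$ is the continuous image of the compact space $\beta(A^{*})$ in the Hausdorff space $V_k$, hence compact and closed in $V_k$; and since $A^{*}$ is dense in $\beta(A^{*})$ and $c^{k}$ is continuous, we have $X_k = \overline{c^{k}(A^{*})}$. So $\C \in X_k$ if and only if every open neighborhood of $\C$ in $V_k$ meets $c^{k}(A^{*})$. Second, Lemma \ref{lemma:Imcolouring} provides the basis $\mc C$ of $V_k$ consisting of the preimages $\langle \cdot , q\rangle^{-1}(\B)$, and specialising at $\C$ shows that the subfamily
\[U_q := \langle \cdot ,q \rangle^{-1}(\langle \C, q \rangle),\]
for $q$ ranging over finite colourings of $\mb N^k$, is a basis of open neighborhoods of $\C$: a basic open $\langle \cdot ,q\rangle^{-1}(\B)$ contains $\C$ exactly when $\B = \langle \C, q \rangle$.

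With these two facts in hand both implications are immediate. For the forward direction, given $\C \in X_k$ and a finite colouring $q$, the neighborhood $U_q$ must meet $c^{k}(A^{*})$ at some $c^{k}(w_q)$; using the identity $\langle c^{k}(w),q\rangle = \langle w, q \rangle$ noted just before Lemma \ref{lemma:Imcolouring}, this $w_q$ has the desired profile. Conversely, if for every finite colouring $q$ there is a finite word $w_q$ with $\langle w_q, q \rangle = \langle \C, q \rangle$, then the witness $c^{k}(w_q)$ lies in $U_q$, so every basic neighborhood of $\C$ meets $c^{k}(A^{*})$, whence $\C \in \overline{c^{k}(A^{*})} = X_k$.

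For the ``in particular'' clause, I would simply decompose a $q$-saturated set $Q$ as $Q = \bigcup_{i \in J} q^{-1}(i)$ for some $J \subseteq I$, and invoke the finite additivity of both $\langle \C, \cdot \rangle$ and $\langle w_q, \cdot \rangle$ in their second argument (noted just after Definitions \ref{def:contentonsubset} and \ref{def:colour}) to conclude $\langle \C, Q \rangle = \bigcup_{i \in J}\langle \C, q\rangle(i) = \bigcup_{i \in J}\langle w_q, q\rangle(i) = \langle w_q, Q\rangle$, which unpacks to the stated equivalence between $C_\a \cap \wh Q \neq \emptyset$ and $c_{\a}(w_q)\cap Q \neq \emptyset$.

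The substantive work has already been done inside Lemma \ref{lemma:Imcolouring}; the only mildly subtle point is identifying the $U_q$ as a neighborhood basis at $\C$, which is what converts the abstract statement ``every basic neighborhood of $\C$ meets $c^{k}(A^{*})$'' into the concrete colouring-by-colouring existence of finite-word witnesses $w_q$. I do not anticipate any real obstacle beyond this bookkeeping.
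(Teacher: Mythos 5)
Your proof is correct and follows essentially the same route as the paper: both identify $X_k$ with the closure of $c^{k}(A^{*})$ in $V_k$, use the colouring basis from Lemma \ref{lemma:Imcolouring} to characterise membership in that closure, and deduce the saturated-set clause from finite additivity of $\langle \C,\cdot\rangle$ and $\langle w_q,\cdot\rangle$. Your explicit identification of the sets $U_q$ as a neighbourhood basis at $\C$ is a slightly more careful rendering of the step the paper phrases as ``the characterization of topological closure by a basis,'' but it is not a different argument.
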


\begin{proof} 
A generalized word $\C \in V_k$ is pseudofinite if, and only if, it is in $X_k$.
Recall that a $X_k$ is the closure of the image of $A^*$ under the map $c^k : A^* \to V_k$, which sends a finite word $w \in A^*$ to $(\wh{c_{\a}(w)})_{\a \in A^k}$. We proved in Lemma \ref{lemma:Imcolouring} (2), that the family 
\[ \mc C= \{\langle \cdot ,q \rangle ^{-1}(\B) \colon \B \in \mc P(A^k)^I \text{  and  } q: \mb N^k \to I, \text{ where $I$ is a finite set} \}\] forms a basis for the topology on $V_k$.
Therefore, the characterization of topological closure by a basis provides the following characterisation of $X_k$:
a generalized word $\C \in V_k$ is in $X_k$ if, and only if, 
for every finite colouring $q : \mb N^k \to I$ of $\mb N^k$, and every $\B \in \mc P(A^{k})^{I}$ such that $\langle \C,q \rangle = \B$, we have \[\langle \cdot ,q \rangle ^{-1}(\B) \cap c^{k}(A^{*}) \neq \emptyset.\]
Note that this last condition is equivalent to saying that there exists a finite word $w_q$ such that \[\langle (\wh{c_{\a}(w_q)})_{\a \in A^k},q) = \B.\]
We previously observed that the profile of $(\wh{c_{\a}(w_q)})_{\a \in A^k}$ on $q$ is the profile of $w_q$ on $q$. We conclude that $\C \in V_k$ is pseudofinite if, and only if, there exists a finite word $w_q$ such that $\langle \C,q \rangle = \langle w_{q},q \rangle$. \newline The other statement is a direct consequence of the fact that, for any finite word $w\in A^*$, the map $\langle w, \cdot \rangle : \mc P(\mb N^k) \to \mc P(A^k)$, which sends a subset $Q$ to $\langle w,Q \rangle$, and for any $\C \in V_k$, the map $\langle \C, \cdot \rangle : \mc P(\mb N^k) \to \mc P(A^k)$ which sends a subset $Q$ to $\{\a \in A^k \colon C_a \cap \wh{Q} \neq \emptyset \}$ are finitely additive. 
\end{proof}

\begin{remark} \label{remark:contentfinite}
In particular, for any pseudofinite word $\C \in X_k$, let us consider a finite colouring $q: \mb N^k \to I$ such that one of the colours corresponds to a singleton, that is
\[q^{-1}(i)= \{\p\}\] for some $i \in I$ and $\p = (p_1,...,p_k) \in \mb N^k$.
In that case, we observe that any word $w$ satisfying the condition from Proposition \ref{prop:caraImccolouring} is necessarily such that 
\[|w| > \max \{p_j \colon j \in \{1,...,k\} \} \]
and such that, for any $\a \in A^{k}$,  
\[ \p \in Cont(C_{\a}) \text{  if, and only if,   }w[\p]=\a.\]
Indeed, for any $\a \in A^{k}$,
\begin{align*}
\p \in Cont(C_\a) 
&\Longleftrightarrow \p \in C_{\a} \cap \mb N^{k} \\
&\Longleftrightarrow \a \in \{\b \in A^{k} \colon C_{\b} \cap \{\p\} \neq \emptyset \} \\
&\Longleftrightarrow \a \in \langle w,q \rangle(i) \text{     by Proposition \ref{prop:caraImccolouring}} \\
&\Longleftrightarrow  w[\p]=\a.
\end{align*}

This remark will come handy in the proof of Lemma \ref{lemma:CNpartitioncontent}.
\end{remark}

\begin{example}
In the case $k=1$ and $|A|=\{a,b\}$ let us consider the family of closed subset of $\beta(\mb N)$ 
\[\C :=(\beta(\mb N),\emptyset).\] Intuitively, this should be a pseudofinite word, that we could see as a generalization of the profinite word $a^{\omega}$. In practice, we can apply Proposition \ref{prop:caraImccolouring}: for any finite colouring $q: \mb N \to I$, where $I$ is a finite set, we set, for every $i \in I$, $n_i:= \min(q^{-1}(i))$ and 
\[N:=\max_{i \in I}n_i.\]  
We consider the word $w_{q}:=a^N$. 
This allows us to prove that this family of closed subsets is a pseudofinite word.
\end{example}

\subsubsection*{Explicit characterisation of $X_1$}

In the case where $k=1$, it is not too difficult to directly simplify Proposition \ref{prop:caraImccolouring} into a condition that does not require us to look at every finite colouring of $\mb N^k$. In order to do so, we start by proving a necessary condition that holds for every $\C \in X_k$. The intuition is the following. Pseudofinite words share similarities with finite words, but at the generalized level of ultrafilters. In particular, if $\C=(C_\a)_{\a \in A^{k}} \in X_k$, then for any $\a \in A^k$, the content of $C_\a$, introduced in Definition \ref{def:contentclosedset}, is a subset of $\mb N^k$. It should be possible to view the elements of this subset as $k$-tuples of positions of a concrete word, with a length that is possibly infinite. We formalize this intuition here-below.

\begin{lemma}\label{lemma:CNpartitioncontent}
Fix $\C=(C_{\a})_{\a \in A^{k}} \in X_k$ a pseudofinite  word. Then, the following statements hold. \newline
(1): For any $\p=(p_1,...,p_k) \in \mb N^k$ and any $\a=(a_1,...,a_k) \in A^k$, we have that $\p \in Cont(C_{\a})$ if, and only if, for every $j \in \{1,...,k\}$, $(p_j,...,p_j) \in Cont(C_{a_{j},...,a_{j}})$. \newline
(2): For every $a \in A$, consider the subset of $\mb N$
\[C_{a}^{\mb N}:= \pi_1(Cont(C_{a,...,a})),\] 
where $\pi_1: \mb N^k \to \mb N$ is the canonical projections on the first coordinate. Then $(C^{\mb N}_a)_{a \in A}$ is a finite colouring of a downset of $\mb N$, and, for every $\a \in A^k$, 
\[Cont(C_\a) = \prod_{j=1}^{k} C_{a_{j}}^{\mb N}.\]
\end{lemma}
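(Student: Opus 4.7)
The plan is to exploit Proposition \ref{prop:caraImccolouring} together with Remark \ref{remark:contentfinite}: for any pseudofinite word $\C$, cleverly chosen finite colourings of $\mb N^k$ with singleton colour classes produce concrete finite words whose behaviour at those singletons mirrors the content of $\C$ at those points. The whole proof reduces to choosing the right singleton classes for each statement.

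For part (1), I would fix $\p \in \mb N^k$ and $\a \in A^k$, and consider the finite colouring $q$ of $\mb N^k$ whose classes are the singletons $\{\p\}, \{(p_1,\dots,p_1)\},\dots,\{(p_k,\dots,p_k)\}$ together with one "other" class for the complement. By Proposition \ref{prop:caraImccolouring}, there exists a finite word $w_q \in A^*$ whose $q$-profile coincides with that of $\C$. Applying Remark \ref{remark:contentfinite} successively to each singleton class yields: $\p \in Cont(C_\a)$ iff $w_q[\p] = \a$, i.e. $(w_q)_{p_j} = a_j$ for every $j$; and for each $j$, $(p_j,\dots,p_j) \in Cont(C_{a_j,\dots,a_j})$ iff $w_q[(p_j,\dots,p_j)] = (a_j,\dots,a_j)$, i.e. $(w_q)_{p_j} = a_j$. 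The equivalence in (1) then follows from these two reformulations being the same.

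For part (2), disjointness of the family $(C_a^{\mb N})_{a \in A}$ is immediate from Remark \ref{remark:contentfinite} applied to the singleton class $\{(p,\dots,p)\}$: the common word $w$ yields at most one letter at position $p$. To see that the union $\bigcup_{a \in A} C_a^{\mb N}$ is a downset of $\mb N$, fix $p$ in this union and $q < p$; apply the colouring whose singleton classes are $\{(r,\dots,r)\}$ for every $r \in \{0,\dots,p\}$ (plus one "other" class). Remark \ref{remark:contentfinite} forces the resulting word $w$ to have length strictly greater than $p$, so $w_q$ is some letter $b \in A$, and applying the remark to the singleton $\{(q,\dots,q)\}$ gives $(q,\dots,q) \in Cont(C_{b,\dots,b})$, i.e.\ $q \in C_b^{\mb N}$. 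Finally, the product identity $Cont(C_\a) = \prod_{j=1}^k C_{a_j}^{\mb N}$ is a direct translation of part (1), since by definition $p_j \in C_{a_j}^{\mb N}$ is the same as $(p_j,\dots,p_j) \in Cont(C_{a_j,\dots,a_j})$.

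The main obstacle is purely organisational: one has to set up colourings in which several singletons appear simultaneously so that a \emph{single} witness word controls all the relevant positions at once, and then read off, via Remark \ref{remark:contentfinite}, the letter realised at each singleton. No genuinely hard combinatorics arise, but one must be careful that the word produced by Proposition \ref{prop:caraImccolouring} is long enough to reach every singleton position used in the argument, which is exactly the length bound $|w|>\max\{p_j\}$ supplied by Remark \ref{remark:contentfinite}.
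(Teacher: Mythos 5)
Your proof is correct and follows essentially the same route as the paper: both parts rest on Proposition \ref{prop:caraImccolouring} applied to colourings containing the relevant singleton colour classes, with the conclusions read off through Remark \ref{remark:contentfinite}. The only cosmetic difference is that you reduce membership in $C_a^{\mb N}$ to the diagonal point $(p,\dots,p)$ — note that the equivalence $p \in C_a^{\mb N} \Leftrightarrow (p,\dots,p) \in Cont(C_{a,\dots,a})$ is a consequence of part (1), not literally the definition of $\pi_1(Cont(C_{a,\dots,a}))$ — whereas the paper works with an arbitrary preimage $\p$ with $p_1 = l$; both are equally valid.
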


\begin{proof}
(1): Fix $\p=(p_1,...,p_k) \in \mb N^k$, and $\a=(a_1,...,a_k) \in A^{k}$. To keep the notations concise, we set, for every $j \in \{1,...,k\}$,\[\bar{p^{j}}:=(p_j,...,p_j) \text{  and  } \bar{a^{j}}:=(a_j,...,a_j).\]
We consider the following colouring of $\mb N^k$ into $k+2$ colours 
\[\mc Q_{\p}:= (\{\p\},\{\bar{p^1}\},...,\{\bar{p^k}\}, \mb N^k \setminus \{\p,\bar{p^1},...,\bar{p^k}\}).\] Since $\C$ is in $X_k$, by Proposition \ref{prop:caraImccolouring}, we can consider a finite word $w \in A^{*}$ which has the same profile than $\C$ for the colouring $\mc Q_{\p}$. In particular, $|w| > \max \{p_j \colon j \in \{1,...,k\} \}$.
We now prove the desired equivalence by using Remark \ref{remark:contentfinite},
\begin{align*}
\p \in Cont(C_{\a}) 
& \Longleftrightarrow w[\p]=\a  \\
& \Longleftrightarrow \forall j \in \{1,...,k\}, w_{p_{j}}=a_j \\
& \Longleftrightarrow \forall j \in \{1,...,k\}, w[\bar{p^{j}}]=\bar{a^{j}} \\
& \Longleftrightarrow \bar{p^{j}} \in Cont(C_{\bar{a^{j}}}).
\end{align*}
(2): First, we prove that for any $a,b \in A$ distincts, $C_a^{\mb N} \cap C_b^{\mb N}$ is empty.
Let us assume that $C_a^{\mb N} \cap C_b^{\mb N}$ is non-empty.
Pick an element $l \in C_a^{\mb N} \cap C_b^{\mb N}$, and then pick $\p \in Cont(C_{a,...,a})$ and $\m \in Cont(C_{b,...,b})$ such that $p_1=m_1=l$. 
We now consider the colouring of $\mb N^k$ into three colors 
\[\mc Q_{\p,\m}:= (\{\p\},\{\m\}, \mb N^k \setminus \{\p,\m\}).\]
Since $\C$ is in $X_k$, by Proposition \ref{prop:caraImccolouring}, we can consider a finite word $w \in A^{*}$ which has the same profile than $\C$ for the colouring $\mc Q_{\p,\m}$.
In particular, by Remark \ref{remark:contentfinite}, since $\p \in Cont(C_{a,...,a})$ and $\m \in Cont(C_{b,...,b})$, we have that
\[w[\p]=(a,...,a) \text{  and  } w[\m]=(b,...,b).\]
Now, since $p_1=m_1$, we have that $a=b$, which allows us to conclude. \newline 
Finally, we prove that $\bigcup_{a \in A}C^{\mb N}_{a}$ is a downset of $\mb N$.
Fix $a \in A$, $l \in C_{a}^{\mb N}$ and consider some element $n < l$.
We prove that there exists $b \in A$ such that $n \in C_{b}^{\mb N}$.
We use the notation $\n := (n,...,n) \in \mb N^k$.
Picking an element $\p \in C_{a,...,a}$ with $p_1=l$, we consider the colouring of $\mb N^k$ into three colors  
\[ \mc Q_{\p,\n} := (\{\p\},\{\n\}, \mb N^k \setminus \{\p,\n\}).\]
Since $\C$ is in $X_k$, by Proposition \ref{prop:caraImccolouring}, we can consider a finite word $w \in A^{*}$ which has the same profile than $\C$ for the colouring $\mc Q_{\p,\n}$.
In particular, by Remark \ref{remark:contentfinite}, $|w|>n$ and we can set $b := w_n$. We have that $\n \in C_{b,...,b}$, and thus $n \in C^{\mb N}_b$, which allows us to conclude.
\end{proof}

This condition is actually sufficient to characterise all pseudofinite words in the case $k=1$.
\begin{proposition}[Explicit description of $X_1$] \label{prop:caraImc1}
A generalized word $\C=(C_a)_{a \in A} \in V_1$ is pseudofinite if, and only if, $(Cont(C_a))_{a \in A}$ is a finite colouring of a downset of $\mb N$. 
\end{proposition}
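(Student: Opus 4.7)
The forward direction comes essentially for free: specialising Lemma \ref{lemma:CNpartitioncontent}(2) to $k=1$, the projection $\pi_1$ is the identity on $\mb N$, hence $C_a^{\mb N} = Cont(C_a)$, and the lemma states exactly that $(Cont(C_a))_{a \in A}$ is a finite colouring of a downset of $\mb N$.

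For the converse, assume $(Cont(C_a))_{a \in A}$ partitions a downset $D \subseteq \mb N$. By Proposition \ref{prop:caraImccolouring} it suffices, for every finite colouring $q : \mb N \to I$, to exhibit a finite word $w_q \in A^{*}$ with $\langle w_q, q \rangle = \langle \C, q \rangle$. The canonical candidate is the word dictated by the partition itself: define $(w_q)_n$ to be the unique letter $a$ such that $n \in Cont(C_a)$. If $D$ is finite, take $w_q$ of length $|D|$; if $D = \mb N$, truncate to a prefix long enough that every non-empty intersection $Cont(C_a) \cap q^{-1}(i)$ is witnessed below the chosen length, which is possible since there are only finitely many pairs $(i,a)$ to consider.

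Verification then reduces to proving, for every $(i,a) \in I \times A$, the equivalence
\[
C_a \cap \wh{q^{-1}(i)} \neq \emptyset \iff Cont(C_a) \cap q^{-1}(i) \neq \emptyset.
\]
The direction $\Leftarrow$ is immediate, since any witness $n$ on the right yields $\uparrow n \in C_a \cap \wh{q^{-1}(i)}$. The delicate direction is $\Rightarrow$: a witness $\alpha \in C_a \cap \wh{q^{-1}(i)}$ can a priori be a free ultrafilter. My plan is to refine $q$ into a finer colouring that also separates the classes $Cont(C_b)$ (and, in the finite case, isolates each element of $D$), then use the downset hypothesis together with the combinatorics of these refined colour classes — in the spirit of the proof of Lemma \ref{lemma:CNpartitioncontent}(2) — to force every such $\alpha$ to come from the content. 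This last step I expect to be the main obstacle, since it requires showing that the free-ultrafilter portion of $C_a$ is controlled by $Cont(C_a)$; the downset hypothesis, which forces $D$ to be either a finite initial segment or all of $\mb N$, is the structural ingredient that should make such control possible.
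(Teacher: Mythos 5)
Your forward direction is exactly the paper's: Lemma \ref{lemma:CNpartitioncontent} (2) specialised to $k=1$, and it is correct. For the converse you build the same canonical word as the paper does (the letter at position $n$ is the unique $a$ with $n \in Cont(C_a)$, truncated at a suitably large length), and you correctly reduce the verification to the equivalence $C_a \cap \wh{q^{-1}(i)} \neq \emptyset \Longleftrightarrow Cont(C_a) \cap q^{-1}(i) \neq \emptyset$. But the proposal stops there: you prove $\Leftarrow$ and only announce a plan for $\Rightarrow$. That missing step is a genuine gap, and it cannot be closed as stated, because the hypothesis of the proposition constrains only $C_a \cap \mb N$ and says nothing about the free part $C_a \cap {}^{*}\mb N$. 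Concretely, take $A=\{a\}$ and $\C = ({}^{*}\mb N)$. Then $Cont(C_a)=\emptyset$, so $(Cont(C_a))_{a\in A}$ is a finite colouring of the empty downset and the hypothesis holds; yet for the colouring $(\{0\},\mb N\setminus\{0\})$ the profile of $\C$ is $(\emptyset,\{a\})$, which no finite word realises: a word with no letter at position $0$ is the empty word, which then has no letter at any positive position either. Since $c^{1}(\epsilon)$ has the same contents as $({}^{*}\mb N)$ and is pseudofinite while $({}^{*}\mb N)$ is not, pseudofiniteness is not a function of the family of contents alone, so no argument can establish the right-to-left implication without an additional hypothesis controlling the free parts.

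Two further remarks. First, this is not a defect of your write-up relative to the source: the paper's own proof performs the identical construction and disposes of the crucial step with the words ``has, by construction, the same profile'', i.e.\ it silently asserts exactly the implication you isolate as the main obstacle. You have therefore put your finger on a real gap in the paper, not merely failed to reproduce a step. Second, your proposed repair (refining $q$ so as to force every witness $\alpha \in C_a \cap \wh{q^{-1}(i)}$ to come from the content) cannot succeed even for genuinely pseudofinite $\C$: with $E$ the even and $O$ the odd numbers, the generalized word $\C = (\wh{E}\cup{}^{*}\mb N,\, \wh{O})$ does lie in $X_1$ (for each colouring one realises the profile by planting a few extra $a$'s at odd positions), yet for a colouring $(E,O_1,O_2)$ splitting $O$ into two infinite pieces one has $C_a \cap \wh{O_1} = {}^{*}O_1 \neq \emptyset$ while $Cont(C_a)\cap O_1 = E \cap O_1 = \emptyset$. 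So the equivalence you are trying to force is false for this colour, and the canonical word is simply the wrong witness. Any correct treatment must either strengthen the hypothesis of the proposition so that the free part of each $C_a$ is tied to its content, or abandon the canonical word in favour of a witness adapted to each colouring.
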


\begin{proof}
The left-to-right implication is exactly Lemma \ref{lemma:CNpartitioncontent}, with $k=1$. 
For the right-to-left implication, let us consider a generalized word $\C=(C_a)_{a \in A} \in V_1$ such that $(Cont(C_a))_{a \in A}$ is a finite colouring of a downset of $\mb N$. In order to conclude, we prove that $\C$ satisfies the condition introduced in Proposition \ref{prop:caraImccolouring}. Fix a finite colouring $q: \mb N \to I$ of $\mb N$. We are going to construct a word $w_q$ such that the profiles of $\C$ and $w_q$ coincide for the colouring $q$.
First, in the case where $\bigcup_{a \in A} Cont(C_a)$ is finite, the word $w_q$ such that, for every $a \in A$, \[c_{a}(w_q) =  Cont(C_a)\] is finite, and has, by construction, the same profile than $\C$ on $q$.
Now, we treat the case where $\bigcup_{a \in A} Cont(C_a)$ is equal to $\mb N$: we need to make sure that the word $w_q$ we construct is long enough. 
Pick $n \in \mb N$ such that any colour occurring finitely many times does not occur after $n$, that is, \[n > max \bigcup_{i \in I \atop q^{-1}(i) \text{ finite}}q^{-1}(i).\]
Now, for every colour $i \in I$ that occurs, i.e $q^{-1}(i)$ is non-empty, and for every $a\in A$ such that $Cont(C_a) \cap q^{-1}(i)$ is non-empty, pick $m_{i,a}$ in that set, and then pick \[m> max\{m_{i,a} \colon i \in I \text{ and } a \in A \text{ such that } Cont(C_a) \cap q^{-1}(i) \neq \emptyset \}.\] Finally, set \[l:=max(m,n).\]
We now define $w_q$ as the word of length $l$ that has the letter $a$ at the position $p<l$ if, and only if, $p \in Cont(C_a)$.
Finally, we check that the word $w_q$ we constructed has the same profile than $\C$ for the colouring $q$.
Since $l > n$, it is clear that $w_q$ contains all positions corresponding to a finite colour.
For an infinite colour, we make a case distinction.
If $a \in A$ is such that $Cont(C_a) \cap q^{-1}(i)$ is non-empty, then since $l > m_{i,a}$, $w_q$ contains a position $m_{i,a} \in Cont(C_a) \cap q^{-1}(i)$ such that the associated letter is the letter $a$. Otherwise, if $a \in A$ is such that $Cont(C_a) \cap q^{-1}(i)$ is empty, then for any position $p < l$ in $w_q$, $p \in Cont(C_b)$ necessarily implies that $b \neq a$. Therefore, the profiles of $\C$ and $w_q$ coincide on $q$, and we conclude that $\C \in X_1$.
\end{proof}

\section{Ultrafilter equations for $\mc B \Sigma_1[\mc N_1^{u}]$} \label{sect:ultraequa}

An explicit basis of ultrafilter equations for $\mc B \Sigma_1[\mc N_{0},\mc N^{u}_1]$, the fragment obtained out of $\mc B \Sigma_1[\mc N_1^{u}]$ by adding nullary numerical predicates, is already available in \cite{GKP}. However, the reasoning conducted there does not directly allow for a generalization to $\mc B \Sigma_1[\mc N_{0},\mc N^{u}_k]$, for any $k \geq 2$. 
An extra step has to be performed in order to understand the general case. One reason is that the proof in \cite{GKP} does not rely on purely topology and requires several combinatorial arguments.
Our perspective reduces the combinatorics that is involved in the proofs of soudness and completness in \cite{GKP} to a bare minimum. The main ingredient we use from the previous chapter is the approach introduced in Section \ref{section:colouring}, which involves finite colourings of $\mb N$. This allows for a reformulation of the ultrafilter equations from \cite{GKP} in terms of the existence of a finite colouring of $\mb N$ which satisfies certain properties. 

\textbf{Outline of the section} 
In Section \ref{section:complementBk}, we give a different presentation of the Boolean algebra $\mc B_k$, for any $k \geq 1$, taking inspiration from the characterisation of the points of the dual space $X_k$ we provided in Proposition \ref{prop:caraImccolouring}. In Section \ref{section:family}, we describe a general family of ultrafilter equations, which will encompass every equation needed in order to describe $\mc B_1$. We then give a reformulation of this general family of equations in terms of a condition relative to a finite colourings of $\mb N$. This will simplify the reasoning which will follow, and in Section \ref{section:equasig1} we use this reformulation in order to show soundness and completeness for the equations we introduced to describe $\mc B_1$.

\subsection{An alternative presentation for $\mc B_{k}$}\label{section:complementBk}

In this section, we give a different presentation of the Boolean algebras $\mc B_k$, for any $k \geq 1$, taking inspiration from the characterisation of the points of the dual space $X_k$ we provided in Proposition \ref{prop:caraImccolouring}. This presentation, in terms of languages associated to finite colourings of $\mb N^k$, will greatly simplify the proofs of soundness and completeness which will follow, and allow for an enlightening reformulation of the ultrafilter equations we will consider. \newline
Let us explain this setting in the case where $k=1$. Let us look again at the languages $L_{\Diam^{a}_Q}$, where $a \in A$ and $Q \subseteq \mb N$, introduced at the very beginning of Section \ref{section:setting}. Instead of fixing a subset $Q$ of $\mb N$, a letter $a \in A$, and considering the set of all words such that there exists a position $i < |w|$ such that $w_i=a$, we could rather fix a finite colouring of $\mb N$, and consider the words such that the content of $w$ on each colour is exactly a given subset of $A$. The languages that we obtain this way are related to the notion of profile that we introduced in Definition \ref{def:colour}, and we prove that they allow for an alternative description of the Boolean algebra $\mc B_1$ and $\mc B_{0,1}$. This idea can be generalized for any $k \geq 1$, and this motivates the introduction of the languages that we define here-below.

In the rest of the chapter, for any $k \geq 1$, to mention a finite colouring of $\mb N^k$ with $\ell \geq 1$ colours, we use the notation $\mc Q = (Q_1,...,Q_{\ell})$.
 
\begin{definition}\label{def:KQB}
For any $Q \subseteq \mb N^k$ and any $B \subseteq A^{k}$, we consider the language $K_{Q,B}$ of all words having content $B$ on $Q$,
\[K_{Q,B} := \{w \in A^{*} \colon \langle w, Q \rangle =B \}.\]
More generally, for any $k, \ell\geq 1$, any finite colouring $\mc Q$ of $\mb N^k$ with $\ell$ colours, and for any family $\B = (B_1,...,B_\ell)$ of $\ell$ subsets of $A^{k}$, we consider the language of the words having content on $Q_j$ equal to $B_j$, for every $j \in \{1,...,\ell\}$,
\[K_{\mc Q,\B} := \bigcap_{j=1}^{l} K_{Q_{j},B_{j}}.\]
\end{definition}

Observe that these languages can be seen as the equivalence classes for a certain equivalence relation on $A^{*}$.
Indeed, for any $Q \subseteq \mb N^k$, we set
\[\sim_Q := \{(w_1,w_2) \in (A^{*})^{2} \colon \langle w_1,Q \rangle= \langle w_2,Q \rangle\}.\]  
Now, for any finite word $w_1 \in A^{*}$,  setting $B :=\langle w_1 ,Q \rangle$, the equivalence class which contains $w_1$ is
\[ [w_1]_{\sim_Q} = \{w_2 \in A^{*} \colon \langle w_1,Q \rangle= \langle w_2 , Q \rangle \} = \{w_2 \in A^{*} \colon \langle w_1 ,Q \rangle=B \} = K_{Q,B}.\]
Note that this equivalence relation is finitely indexed, since $\mc P(A^k)$ is finite.

More generally, for any $k,\ell\geq 1$, and any finite colouring of $\mb N^k$ with $\ell$ colours $\mc Q$, we set
\[\sim_{\mc Q} := \bigcap_{j=1}^\ell \sim_{Q_j},\]
and an equivalence class for this relation corresponds to a language of the form $K_{\mc Q,\B}$, for some $\B \in \mc P(A^k)^{\ell}$. This equivalence relation is also finitely indexed, since $\mc P(A^{k})^{\ell}$ is finite.
These languages allow for a reformulation of the generators of the Boolean algebra $\mc B_k$, for any $k \geq 1$, which will greatly simplify our considerations in the upcoming sections. 

\begin{proposition} \label{prop:caraB_kKQB}
For any $k \geq 1$, the Boolean algebra $\mc B_k$ is generated by the languages $K_{\mc Q,\B}$, where $\mc Q$  ranges over finite colourings of $\mb N^k$ with $\ell \geq 1$ colours and $\B$ ranges over $\mc P(A^{k})^{\ell}$. 
\end{proposition}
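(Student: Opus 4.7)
The plan is to establish the two inclusions of Boolean subalgebras of $\mc P(A^*)$ by expressing each family of generators in terms of the other. Recall that $\mc B_k$ is, by definition, generated by the languages $L_{\Diam^{\a}_Q}$ for $\a \in A^k$ and $Q \subseteq \mb N^k$, so it suffices to show that the Boolean algebra $\mc K_k$ generated by the $K_{\mc Q,\B}$ satisfies $\mc K_k \subseteq \mc B_k$ and $\mc B_k \subseteq \mc K_k$.

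For $\mc K_k \subseteq \mc B_k$: Since $K_{\mc Q,\B} = \bigcap_{j=1}^{\ell} K_{Q_j,B_j}$ by Definition \ref{def:KQB}, it is enough to show that each language of the form $K_{Q,B}$ with $Q \subseteq \mb N^k$ and $B \subseteq A^k$ lies in $\mc B_k$. Unfolding the definition,
\[K_{Q,B} = \{w \in A^* \colon \langle w,Q \rangle = B\} = \bigcap_{\a \in B}\{w : c_{\a}(w) \cap Q \neq \emptyset\} \cap \bigcap_{\a \notin B}\{w : c_{\a}(w) \cap Q = \emptyset\},\]
which, recognising the defining expression of $L_{\Diam^{\a}_Q}$, yields
\[K_{Q,B} = \bigcap_{\a \in B} L_{\Diam^{\a}_Q} \cap \bigcap_{\a \notin B} (L_{\Diam^{\a}_Q})^{c}.\]
Since $A^k$ is finite, this is a finite Boolean combination of generators of $\mc B_k$, hence belongs to $\mc B_k$.

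For $\mc B_k \subseteq \mc K_k$: I would show that every generator $L_{\Diam^{\a}_Q}$ of $\mc B_k$ is a finite disjoint union of languages of the form $K_{\mc Q,\B}$. To do this, fix $\a \in A^k$ and $Q \subseteq \mb N^k$, and consider the two-colour finite colouring $\mc Q_{Q} := (Q, Q^{c})$ of $\mb N^k$. Observe that, for any $w \in A^*$, we have $w \in L_{\Diam^{\a}_Q}$ if and only if $\a \in \langle w,Q \rangle$, that is, if and only if the first coordinate of the profile $\langle w,\mc Q_{Q} \rangle \in \mc P(A^k)^2$ contains $\a$. Therefore,
\[L_{\Diam^{\a}_Q} = \bigsqcup_{\B = (B_1,B_2) \in \mc P(A^k)^{2} \atop \a \in B_1} K_{\mc Q_{Q}, \B},\]
which is a finite union (since $\mc P(A^k)^{2}$ is finite) of elements of $\mc K_k$. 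Combining both inclusions yields $\mc B_k = \mc K_k$, as required.

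No step looks genuinely difficult here: the statement is essentially a translation between two equivalent sets of generators, and everything reduces to the finite additivity of the map $\langle w,\cdot\rangle$ and the finiteness of $\mc P(A^k)$. The only mild subtlety is making sure that one does not implicitly use an infinite Boolean combination; this is handled by noting that each profile takes values in the finite set $\mc P(A^k)^{\ell}$, so all unions and intersections involved remain finite.
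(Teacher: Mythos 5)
Your proof is correct and follows essentially the same route as the paper: both directions use exactly the same decompositions, namely $K_{Q,B} = \bigcap_{\a \in B} L_{\Diam^{\a}_Q} \cap \bigcap_{\a \notin B} (L_{\Diam^{\a}_Q})^{c}$ together with the reduction of $K_{\mc Q,\B}$ to the $K_{Q_j,B_j}$, and the expression of $L_{\Diam^{\a}_Q}$ as a finite union of $K_{(Q,Q^c),(B,B')}$ over pairs with $\a \in B$. The only (harmless) addition is your observation that the latter union is disjoint.
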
 

\begin{proof} We know, by Proposition \ref{preuve1} that $\mc B_k$ is generated by the languages $L_{\Diam^{\a}_Q}$ where $Q$ ranges over subsets of $\mb N^k$ and $\a$ ranges over $A^k$.
First, we prove that these languages can be expressed as a Boolean combination of languages of the form $K_{\mc Q,\B}$.
Let us consider the two colours colouring $(Q,Q^c)$. A finite word $w \in A^*$ is in $L_{\Diam^{\a}_Q}$ if, and only if, the content of $w$ on $Q$ does contain the $k$-tuple $\a$, which allows us to write 
\[L_{\Diam^{\a}_Q} = \bigcup_{\a \in B \subseteq A^k \atop B' \subseteq A^k}(K_{Q,B} \cap K_{Q^c,B'})
             = \bigcup_{\a \in B \subseteq A^k \atop B' \subseteq A^k} K_{(Q,Q^{c}),(B,B')}\]
and allows us to conclude.
             
Now, we fix a finite colouring $\mc Q$ of $\mb N^k$ with $\ell \geq 1$ colours and $\B \in \mc P(A^{k})^{\ell}$, and we prove that $K_{\mc Q,\B}$ can be written as a Boolean combination of languages of the form $L_{\Diam^{\a}_Q}$.
First, note that, since \[K_{\mc Q,\B} = \bigcap_{j=1}^{\ell}K_{Q_j,B_j},\]
we only need to prove the result for any $K_{Q,B}$, with $Q \subseteq \mb N^k$ and $B \subseteq A^k$.
Now, we have
\begin{align*}
K_{Q,B} &= \{w \in A^{*} \colon \langle w,Q \rangle=B \}  \\
        &= \{w \in A^{*} \colon \{ \a \in A^k \colon c_{\a}(w) \cap Q \neq \emptyset \} = B \}  \\
		&= (\bigcap_{\a \in B} \{w \in A^{*} \colon c_{\a}(w) \cap Q \neq \emptyset\}) \cap (\bigcap_{\a \notin B} \{w \in A^{*} \colon c_{\a}(w) \cap Q = \emptyset\})  \\
        &= \bigcap_{\a \in B} L_{\Diam^{\a}_Q} \cap \bigcap_{\a \notin B} (L_{\Diam^{\a}_Q})^{c}.
\end{align*}
which allows us to conclude.
\end{proof}

\subsection{A certain family of ultrafilter equations}\label{section:family}

In this section, we introduce a general family of ultrafilter equations on $\beta(A^*)$ which will encompass every ultrafilter equation we will require in order to describe $\mc B_1$. 
We then explain how that it is possible to reformulate these equations into a condition that requires the existence of a certain finite colourings of $\mb N$. This property will be our main tool in order to check soundness and completeness in Section \ref{section:equasig1}.

Let us start by defining the family of ultrafilter equations on $\beta(A^*)$ which will be at the center of our study.
For any $k,n \geq 1$ we use the notation $A^{*} \otimes (\mb N^k)^{n}$ in order to refer to 
$A^{*}\otimes \underbrace{(\mb N^k \times ... \times \mb N^k)}_{n\text{ times}}$ introduced in Section \ref{sect:semantics}.

\begin{definition}\label{def:Euv}
For any $k,n \geq 1$, any finite family of maps 
$p_1,...p_n: A^* \otimes (\mb N^k)^{n} \to \mb N^k $ and 
$u,v : A^* \otimes (\mb N^k)^{n} \to A^*$,
we denote by $\mc E^{p_1,...,p_n}_{u=v}$ the family of ultrafilter equations 
\[\beta u(\nu) \leftrightarrow \beta v(\nu),\] where $\nu$ ranges over all elements of 
$\beta(A^* \otimes (\mb N^k)^{n})$ such that
\[\beta p_1(\nu) = ... = \beta p_{n}(\nu).\]
\end{definition}

The ultrafilter equations we use in order to describe $\mc B_1$
are all particular instances of the ones introduced in Definition \ref{def:Euv}. 
Let us provide some intuition behind these equations. 
Let us consider the Boolean algebra of languages $\mc B \Sigma_1[\mc N_0,\mc N_1^{u}]$ corresponding to Boolean combinations of sentences written by using nullary predicates, unary uniform numerical predicates, and letter predicates. It has been proven in \cite{GKP}, Theorem 5.16, that $\mc B \Sigma_1[\mc N_0,\mc N_1^{u}] \cap Reg$, the Boolean algebra of regular languages in $\mc B \Sigma_1[\mc N_0,\mc N_1^{u}]$ is described by the profinite equations 
\[(x^{\omega -1}s)(x^{\omega -1}t) = (x^{\omega -1}t)(x^{\omega -1}s) \text{  and  } (x^{\omega-1}s)^{2} = x^{\omega-1}s,\] for $x,s,t$ words of the same length.
The profinite monoid on $A^{*}$ is a compactification of $A^*$ which embeds in $\beta(A^*)$, and thus the ultrafilter equations we want to obtain are, in a sense, a generalization of these profinite equations to the setting of ultrafilters.

More precisely, we define the maps that model the operations we are interested in, such as swapping two positions in a finite word, or adding a letter at the end of a finite word, at the set-theoretic level, and we then consider their continuous extension obtained by \v Cech-Stone compactification.

For any $k \geq 1$, any finite word $w \in A^*$, any $k$-tuple of letters $\a \in A^k$ 
and any family of distinct integers $\j \in |w|^k$, we define \[w(\j \rightarrow \a)\] to be the word obtained by replacing, for any $m \in \{1,...,k\}$, $w_{j_{m}}$ by $a_m$ in $w$.
This allows us to define the map $f_{\a}: A^{*} \otimes \mb N^k \to A^*$ as follows: for any $(w,\j)$ in $A^{*} \otimes \mb N^k$,
\[f_{\a}(w,\j):= \begin{cases} w(\j \rightarrow \a) \text{  if all of the $j_m$, for $m \in \{1,...,k\}$, are distinct} \\ w \text{  otherwise  } \end{cases}, \]
and its continuous extension $\beta f_{\a}: \beta(A^{*} \otimes \mb N^k) \to \beta(A^*)$.
Following the ideas introduced in Section \ref{section:colouring}, we see elements of $\beta(\mb N^k)$ as generalized $k$-tuples of position. In particular, it is not equivalent to consider a generalized word with a $k$-tuple of positions, that is, an element of $\beta(A^{*} \otimes \mb N^k)$, and a generalized word with a generalized $k$-tuple of positions, that is, an element of $\beta(A^*) \times \beta(\mb N^k)$. If we want to consider the generalized $k$-tuple of positions associated to $\nu \in \beta(A^* \otimes \mb N^k)$, we look at the ultrafilter $\beta \pi (\nu) \in \beta(\mb N^k)$, where $\pi : A^{*} \otimes \mb N^k \to \mb N^k$ is the canonical projection.

We now introduce the generalization of the profinite equation $(x^{\omega -1}s)(x^{\omega -1}t) = (x^{\omega -1}t)(x^{\omega -1}s)$, for $x,s,t$ words of the same length, to the ultrafilter setting.

\begin{definition}\label{def:Eab=ba}
For any $a,b \in A$, we consider the map $f_{a,b}: A^{*} \otimes \mb N^2 \to A^{*}$, 
which sends any $(w,j_1,j_2) \in A^{*} \otimes \mb N^2$ to
\[f_{a,b}(w,j_1,j_2) := \begin{cases} w((j_1,j_2)\rightarrow (a,b))   \text{  if $j_1 \neq j_2$}  \\ w   \text{  otherwise}  \end{cases}.\]
We denote by $\pi_1,\pi_2: A^{*} \otimes \mb N^2 \to \mb N$ the canonical projections 
on the first and second coordinate. Finally, for any $a,b \in A$, we denote by $\mc E_{ab=ba}$ 
the family of ultrafilter equations $\mc E^{\pi_1,\pi_2}_{f_{a,b},f_{b,a}}$ as in Definition \ref{def:Euv}.
\end{definition}

We introduce the two other family of ultrafilter equations that will be used in section \ref{ection:equasig1}.

\begin{definition}\label{def:Eaab=abb}
For any $a,b \in A$, we consider the map $f_{a,a,b}: A^{*} \otimes \mb N^3 \to A^{*}$, 
which sends any $(w,j_1,j_2,j_3) \in A^{*} \otimes \mb N^3$ to
\[f_{a,a,b}(w,j_1,j_2,j_3) := \begin{cases} w((j_1,j_2,j_3)\rightarrow (a,a,b))   \text{   if $j_1 \neq j_2 \neq j_3$}  \\ w   \text{  otherwise}  \end{cases}.\]
We denote by $\pi_1,\pi_2,\pi_3: A^{*} \otimes \mb N^3 \to \mb N$ the canonical projections 
on the first, second and third coordinate. Finally, for any $a,b \in A$, we denote by $\mc E_{aab=abb}$ 
the family of ultrafilter equations $\mc E^{\pi_1,\pi_2,\pi_3}_{f_{a,a,b},f_{a,b,b}}$ 
as in Definition \ref{def:Euv}.
\end{definition}

\begin{definition}\label{def:Ea=a.a}
For any $a \in A$, we consider the map $f_a: A^{*} \otimes \mb N \to A^{*}$ which sends any finite word with a marked position $(w,i)$ to $w(i \rightarrow a)$, and the map $f_{a}.a: A^{*} \otimes \mb N \to A^{*}$ which sends any finite word with a marked position $(w,i)$ to $w(i \rightarrow a).a$.
\end{definition}
We denote by $\pi: A^{*} \otimes \mb N \to \mb N$ the canonical projection and $|\cdot|:A^{*} \otimes \mb N \to \mb N$ the map that sends $(w,i) \in A^{*} \otimes \mb N$ to $|w|$.
Finally, for any $a \in A$, we denote by $\mc E_{a=a.a}$ 
the family of ultrafilter equations $\mc E^{\pi,|\cdot|}_{f_{a},f_{a}.a}$ as in Definition \ref{def:Euv}.


Our goal in the upcoming sections will be to prove the following results.

\begin{theorem}\label{theorem:eunaire}
A language $L \subseteq A^*$ is in $\mc B_1$
if, and only if,
$L$ satisfies the families of ultrafilter equations $\mc E_{ab=ba}$, $\mc E_{aab=abb}$ and $\mc E_{a=a.a}$, 
for every $a,b \in A$.
\end{theorem}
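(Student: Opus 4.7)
The approach is to apply the Galois correspondence of Theorem~\ref{prop:equaexist}: let $\mc S$ be the Boolean subalgebra of $\mc P(A^*)$ cut out by the three families of equations, so that the theorem reduces to the double inclusion $\mc B_1 \subseteq \mc S$ (soundness) and $\mc S \subseteq \mc B_1$ (completeness). In both directions I would work with the alternative presentation of $\mc B_1$ from Proposition~\ref{prop:caraB_kKQB}, using the generators $K_{Q,B}$ rather than the $L_{\Diam^{a}_Q}$'s, since the former interact cleanly with the elementary moves $f_{a,b}$, $f_{a,a,b}$, $f_a$ and $f_a \cdot a$.

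For soundness, I would exploit the standard fact that the languages satisfying a fixed ultrafilter equation form a Boolean subalgebra, reducing the task to checking each $K_{Q,B}$ against each family. For $\mc E_{ab=ba}$ and any witnessing $\nu$ with $\beta \pi_1(\nu) = \beta \pi_2(\nu)$, the constraint forces $\pi_1^{-1}(Q) \triangle \pi_2^{-1}(Q) \notin \nu$ for every fixed $Q \subseteq \mb N$, and on its complement both $w(j_1 \to a, j_2 \to b)$ and $w(j_1 \to b, j_2 \to a)$ have the same $Q$-content: either the swap takes place entirely inside $Q$, where it leaves the set of letters occurring at $Q$-positions unchanged, or entirely outside $Q$, where no $Q$-position is touched at all. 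Hence $f_{a,b}^{-1}(K_{Q,B})$ and $f_{b,a}^{-1}(K_{Q,B})$ agree modulo a $\nu$-null set. The cases of $\mc E_{aab=abb}$ and $\mc E_{a=a.a}$ follow the same pattern; in the latter case the identity $\beta \pi(\nu) = \beta |\cdot|(\nu)$ forces the marked position and the length to lie in $Q$ simultaneously, and on this event appending $a$ and not appending it leave the $Q$-content equal because either both add the letter $a$ to $Q$-positions or neither does.

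For completeness, the plan is the one sketched in the introduction to Section~\ref{sect:ultraequa}: first reformulate each family of equations as a combinatorial condition involving finite colourings of $\mb N$, then combine this reformulation with the explicit description of $X_1$ from Proposition~\ref{prop:caraImc1} to show that any two ultrafilters of $\beta(A^*)$ identified by the closure of the three families determine the same pseudofinite word. Contrapositively, given $L \notin \mc B_1$, I would use the colouring characterisation of $X_1$ to produce a finite colouring of a downset of $\mb N$ separating two pseudofinite words on which $L$ takes different values, and then lift this separation to a pair of ultrafilters of $\beta(A^*)$ that are equivalent modulo the three families yet distinguished by $L$, contradicting $L \in \mc S$.

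The main obstacle is of course completeness, and within it the translation between the ultrafilter equations (a statement about $\beta(A^* \otimes \mb N^n)$) and the existence of finite colourings of $\mb N$ matching a given profile (a statement about finite words). The elementary moves underlying the three families must be shown rich enough to realise any change of colour profile compatible with the constraints, which is precisely the combinatorial content that Section~\ref{section:equasig1} aims to streamline via the finite-colouring perspective.
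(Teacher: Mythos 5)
Your soundness half is essentially the paper's argument: reduce, via Theorem \ref{prop:equaexist} and Proposition \ref{prop:caraB_kKQB}, to the generators $K_{Q,B}$, and observe that a swap (resp.\ an $aab\to abb$ substitution, resp.\ an appended letter) whose positions all lie in $Q$ or all lie in $Q^{c}$ does not change the content on $Q$. The paper merely packages the ``modulo a $\nu$-null set'' bookkeeping through the colouring reformulation (Lemmas \ref{lemma:betapi} and \ref{lemma:secondpart}, Corollary \ref{cor:reformulationequation}) instead of checking each $\nu$ by hand; the combinatorial content is identical and your version of it is correct.

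The completeness half has a genuine gap, in two respects. First, the route you sketch through $X_1$ inverts the logic: ``two ultrafilters identified by the closure of the three families determine the same pseudofinite word'' is the soundness inclusion again, and the contrapositive you then propose requires exhibiting, for an arbitrary $L \notin \mc B_1$, a pair of ultrafilters lying in the Galois closure of the three families yet separated by $L$ --- which is the statement to be proved, not a reduction of it; Proposition \ref{prop:caraImc1} gives you no handle on why such a pair would be reachable from the three families. The paper does not go through $X_1$ at all here: it uses Proposition \ref{prop:reformulationEuv}, whose nontrivial direction rests on the fact that the sets $\bigcup_{i}\bigcap_{j}p_j^{-1}(Q_i)$ form a filter basis, so that satisfaction of the equations by $L$ yields a \emph{single} witnessing finite colouring $\mc Q$, and then shows that $L$ saturates the finite-index relation $\sim_{\mc Q}$, whence $L$ is a finite union of the $K_{\mc Q,\B}$. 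Second, and more importantly, the core of that saturation argument --- that any two words $w \sim_{\mc Q} w'$ are joined by a finite chain of the elementary moves $f_{a,b}/f_{b,a}$, $f_{a,a,b}/f_{a,b,b}$ and $f_{a}/f_{a}.a$, each licensed by the colouring condition (equalize lengths colour by colour with $\mc E_{a=a.a}$, then equalize multiplicities with $\mc E_{aab=abb}$, then permute with $\mc E_{ab=ba}$, as in Propositions \ref{prop:transductionN1ucasfacile} and \ref{prop:transductionN1u}) --- is exactly what you defer as ``the main obstacle''. Naming the obstacle is not overcoming it; without this argument the completeness direction is not established.
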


\subsection{Ultrafilter equations in terms of finite colourings}

In this subsection, we present a reformulation of the equations introduced in Definition \ref{def:Euv} in terms of finite colourings. First, we prove a technical lemma that allows for a rephrasing of one of the conditions involved in Definition \ref{def:Euv}.

\begin{lemma}\label{lemma:betapi}
For any $k,n \geq 1$, consider an ultrafilter $\nu \in \beta(A^{*} \otimes (\mb N^k)^{n})$ and a family of $n$ maps $p_1,...,p_n: A^{*} \otimes (\mb N^k)^{n} \to \mb N^k$. For any $\alpha \in \beta(\mb N^k)$, the following statements are equivalents.
\begin{enumerate}
\item For every $j \in \{1,...,n\}$, $\beta p_{j}(\nu) = \alpha$.
\item For every $Q \in \alpha$, $\bigcap_{j=1}^{n} p_{j}^{-1}(Q) \in \nu$. 
\end{enumerate}
Furthermore, these conditions hold for $\nu$ with respect to some $\alpha$ if, and only if, for every finite colouring $\mc Q=(Q_1,...,Q_{\ell})$ of $\mb N^k$, where $\ell \geq 1$, we have \[ \bigcup_{i=1}^{\ell}\bigcap_{j=1}^{n} p_{j}^{-1}(Q_i) \in \nu.\]
In particular, the family of subsets of the form $\bigcup_{i=1}^{\ell}\bigcap_{j=1}^{n} p_{j}^{-1}(Q_i)$, for every finite colouring $\mc Q=(Q_1,...,Q_{\ell})$ of $\mb N^k$, where $\ell \geq 1$, forms a filter subbasis of $\nu$.
\end{lemma}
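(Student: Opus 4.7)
The plan is to handle the two parts in sequence, using only the definition $\beta p_j(\nu) = \{Q \subseteq \mb N^k \colon p_j^{-1}(Q) \in \nu\}$ and the fact that $\nu$, being an ultrafilter, is a prime filter. For $(1) \Rightarrow (2)$: if each $\beta p_j(\nu) = \alpha$ and $Q \in \alpha$, then each $p_j^{-1}(Q) \in \nu$, and closure under finite intersections gives $\bigcap_{j=1}^n p_j^{-1}(Q) \in \nu$. For $(2) \Rightarrow (1)$: upward closure of $\nu$ extracts each $p_j^{-1}(Q) \in \nu$ from the intersection, so $\alpha \subseteq \beta p_j(\nu)$; equality then follows because both are ultrafilters.

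For the forward direction of the ``furthermore'' statement, if such a common $\alpha$ exists, then for any finite colouring $(Q_1,\ldots,Q_\ell)$ exactly one $Q_i$ belongs to $\alpha$ (since $\alpha$ is prime and the $Q_i$ partition $\mb N^k$). Condition $(2)$ then yields $\bigcap_j p_j^{-1}(Q_i) \in \nu$, and upward closure lifts this to the full union.

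The key step is the converse, which I approach by constructing $\alpha$ explicitly. Set
\[ \alpha := \bigl\{ Q \subseteq \mb N^k \colon \bigcap_{j=1}^n p_j^{-1}(Q) \in \nu \bigr\}. \]
The crucial observation is that, for each $Q \subseteq \mb N^k$, applying the hypothesis to the two-colouring $(Q, Q^c)$ yields
\[ \bigcap_j p_j^{-1}(Q) \cup \bigcap_j p_j^{-1}(Q^c) \in \nu, \]
and since these two sets are disjoint (their intersection is contained in $p_1^{-1}(Q) \cap p_1^{-1}(Q^c) = \emptyset$), primeness of $\nu$ forces exactly one of them into $\nu$. This simultaneously yields properness of $\alpha$ and the ultrafilter alternative. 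Closure of $\alpha$ under upsets and binary meets follows from $p_j^{-1}(Q_1 \cap Q_2) = p_j^{-1}(Q_1) \cap p_j^{-1}(Q_2)$ together with monotonicity of preimage. By construction $\alpha$ satisfies $(2)$, and hence $(1)$, so $\alpha$ is the desired common value.

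The final sentence of the lemma is then immediate: every subset of the form $\bigcup_i \bigcap_j p_j^{-1}(Q_i)$ belongs to $\nu$ by the equivalence just proved, so the whole family sits inside $\nu$ and in particular has the finite intersection property, which is exactly what is meant by a filter subbasis of $\nu$. The only delicate point in the argument is the construction of $\alpha$ in the reverse direction; everything else is bookkeeping. The key insight is that the two-element colouring $(Q, Q^c)$, combined with the primeness of $\nu$, is what pins down the dichotomy $Q \in \alpha$ or $Q^c \in \alpha$ in a well-defined way.
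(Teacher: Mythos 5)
Your proof is correct and follows essentially the same route as the paper's: the same direct arguments for the equivalence of (1) and (2), the same use of primeness of $\alpha$ on a colouring for the forward direction, and the same explicit construction $\alpha := \{Q \colon \bigcap_j p_j^{-1}(Q) \in \nu\}$ validated via the two-colouring $(Q,Q^c)$ for the converse. The only cosmetic difference is at the very end, where the paper additionally verifies closure of the family under finite intersections by passing to a common refinement of two colourings, whereas you invoke the finite intersection property directly from containment in $\nu$; both suffice for the subbasis claim.
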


\begin{proof}
For (1) ìmplies (2), let us assume that, for every $j \in \{1,...,n\}$, $\beta p_{j}(\nu) = \alpha$. Then, for every $j \in \{1,...,n\}$, and every $Q \in \alpha$, we have $p_{j}^{-1}(Q) \in \nu$, which implies that $\bigcap_{j=1}^{n} p_{j}^{-1}(Q) \in \nu$. \newline For (2) implies (1), fixing $j \in \{1,...,n\}$, we prove that $\alpha \subseteq \beta p_{j}(\nu)$, which is enough to prove that they are equal since ultrafilters are maximal for inclusion.
Fix $Q \in \alpha$. By (2) we have that $\bigcap_{k=1}^{n} p_{k}^{-1}(Q) \in \nu$, and since $\bigcap_{k=1}^{n} p_{k}^{-1}(Q) \subseteq p_{j}^{-1}(Q)$, we deduce by up-set that $p_{j}^{-1}(Q) \in \nu$, and thus that $Q \in \beta p_{j}(\nu)$. \newline We now treat the last assertion. On the one hand assume that there exists $\alpha \in \beta(\mb N^k)$ satisfying (2). Since $\alpha$ is an ultrafilter, for every finite colouring $\mc Q=(Q_1,...,Q_{\ell})$ of $\mb N^k$, where $\ell \geq 1$, there exists $k \in \{1,...,\ell\}$ such that $Q_k \in \alpha$. Therefore, $\bigcap_{j=1}^{n}p_{j}^{-1}(Q_k) \in \nu$, and since
\[\bigcap_{j=1}^{n}p_{j}^{-1}(Q_k) \subseteq  \bigcup_{i=1}^{\ell}\bigcap_{j=1}^{n} p_{j}^{-1}(Q_i),\]
we deduce by up-set that this last subset belongs to $\nu$. \newline On the other hand, assume that for every finite colouring $\mc Q=(Q_1,...,Q_{\ell})$ of $\mb N^k$, where $\ell \geq 1$, $\bigcup_{i=1}^{\ell}\bigcap_{j=1}^{n} p_{j}^{-1}(Q_i)$ is in $\nu$. We need to prove the existence of an ultrafilter $\alpha \in \beta(\mb N^k)$ such that (2) holds. We set 
\[ \alpha := \{ Q \subseteq \mb N^k \colon  \bigcap_{j=1}^n p_{j}^{-1}(Q) \in \nu \}.\] 
We prove that $\alpha$ is an ultrafilter.
First, since $\nu$ is an ultrafilter, it does not contain the empty set, and thus $\alpha$ does not contain the empty-set. Also, since inverse image preserves finite intersections and inclusion, we deduce that $\alpha$ is indeed a filter on $\mb N^k$.
Furthermore, for any $Q \subseteq \mb N^k$, $(Q,Q^{c})$ is a finite colouring of $\mb N^k$, thus, by (3),
\[ \bigcap_{j=1}^n p_{j}^{-1}(Q) \cup \bigcap_{j=1}^n p_{j}^{-1}(Q^{c}) \in \nu.\]
This union being disjoint, and $\nu$ being an ultrafilter, we deduce that exactly one element in $\{\bigcap_{j=1}^n p_{j}^{-1}(Q),\bigcap_{j=1}^n p_{j}^{-1}(Q)\}$ belongs to $\nu$, in other words exactly one element in $\{Q,Q^c\}$ is in $\alpha$. \newline Finally, in order to prove that we have a filter subbasis, we prove that 
\[ \{\bigcup_{i=1}^{\ell}\bigcap_{j=1}^{n} p_{j}^{-1}(Q_i) \colon (Q_1,...,Q_{\ell}) \text{ is a finite colouring of $\mb N^k$}, \ell \geq 1  \} \] is closed under finite intersection.
Fix $(Q_1,...,Q_{\ell})$ and $(Q'_1,...,Q'_{\ell'})$ two finite colourings of $\mb N$, for some $\ell, \ell' \geq 1$. If we consider $(Q''_1,...,Q''_{\ell''})$, with $\ell'' \geq 1$, a finite colouring which refines both of these colourings, we obtain that 
\[ \bigcup_{i=1}^{\ell}\bigcap_{j=1}^{n} p_{j}^{-1}(Q_i) \cap 
 \bigcup_{i=1}^{\ell'}\bigcap_{j=1}^{n} p_{j}^{-1}(Q'_i) 
 =  \bigcup_{i=1}^{\ell''}\bigcap_{j=1}^{n} p_{j}^{-1}(Q''_i),\]
which proves the that the family we mentioned forms a filter subbasis.
\end{proof}

We now treat the other condition involved in the definition of the family of ultrafilter equations $\mc E^{p_1,...,p_n}_{u=v}$ from Definition \ref{def:Euv},
that is \[L \models (\beta u(\nu) \leftrightarrow \beta v(\nu)).\]
For any set $S$, and any two subsets $T_1,T_2$ of $S$ we denote by $T_1 \Delta T_2$ their \emph{symmetric difference}, that is the subset of $S$ such that, for any $s \in S$, $s \notin T_1 \Delta T_2$ if, and only if the condition \[s \in T_1 \Longleftrightarrow s \in T_2 \] holds.

\begin{lemma}\label{lemma:secondpart}
For any set $S$, any two subsets $T_1, T_2 \subseteq S$ and any ultrafilter $\gamma \in \beta(S)$, the following conditions are equivalent.
\begin{enumerate}
\item $T_1 \in \gamma$ if, and only if, $T_2 \in \gamma$.
\item $(T_1 \Delta T_2)^{c} \in \gamma$.
\end{enumerate} 
In particular, for any $k,n \geq 1$, consider an ultrafilter $\nu \in \beta(A^{*} \otimes (\mb N^k)^{n})$, two maps 
$u,v: A^{*} \otimes (\mb N^k)^{n} \to A^*$,
and a language $L \subseteq A^{*}$. Then, $L$ satisfies the ultrafilter equation 
$\beta u(\nu) \leftrightarrow \beta v(\nu)$, if, and only if $E_{L,u,v} \in \nu$, where
\[E_{L,u,v} := \{(w,\i_{1},...,\i_{n}) \in A^{*} \otimes (\mb N^k)^n \colon 
u(w,\i_{1},...,\i_{n}) \in L \Longleftrightarrow v(w,\i_{1},...,\i_{n}) \in L \}. \]
\end{lemma}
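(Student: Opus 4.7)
The statement splits naturally into a general measure-theoretic fact about ultrafilters (the equivalence of (1) and (2) for arbitrary $T_1,T_2 \subseteq S$) and then a specialization that reinterprets Definition \ref{def:ultraequation} in terms of the symmetric difference. I plan to dispatch the general fact first and then reduce the ``in particular'' statement to it.

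For the equivalence of (1) and (2), the plan is to rewrite the complement of the symmetric difference in the form
\[
(T_1 \Delta T_2)^{c} = (T_1 \cap T_2) \cup (T_1^{c} \cap T_2^{c}),
\]
and then exploit that $\gamma$ is a prime (ultra)filter. For (1) $\Rightarrow$ (2) I will split on whether $T_1 \in \gamma$: if $T_1 \in \gamma$ then by (1) also $T_2 \in \gamma$, so $T_1 \cap T_2 \in \gamma$, hence $(T_1 \Delta T_2)^{c} \in \gamma$ by upward closure; otherwise $T_1^c \in \gamma$ (ultrafilter property), and by (1) together with primality $T_2^{c} \in \gamma$, so $T_1^{c} \cap T_2^{c} \in \gamma$ and again the complement of the symmetric difference lies in $\gamma$. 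For the converse, if $(T_1 \cap T_2) \cup (T_1^{c} \cap T_2^{c}) \in \gamma$ then by primality one of the two disjuncts belongs to $\gamma$; in either case, by upward closure, $T_1$ and $T_2$ are simultaneously members (or simultaneously non-members) of $\gamma$.

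For the ``in particular'' clause, the plan is simply to unfold definitions. By Definition \ref{def:ultraequation}, $L$ satisfies $\beta u(\nu) \leftrightarrow \beta v(\nu)$ iff
\[
L \in \beta u(\nu) \iff L \in \beta v(\nu),
\]
and by the description of the \v Cech--Stone extension given in Section \ref{subsec:remainder}, $L \in \beta u(\nu)$ iff $u^{-1}(L) \in \nu$ (and similarly for $v$). Applying the general equivalence to $T_1 := u^{-1}(L)$ and $T_2 := v^{-1}(L)$ inside the ultrafilter $\gamma := \nu$ on $A^{*} \otimes (\mb N^k)^{n}$, the condition becomes $(u^{-1}(L) \Delta v^{-1}(L))^{c} \in \nu$. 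To finish, it suffices to check the set-theoretic identity
\[
E_{L,u,v} = \bigl(u^{-1}(L) \Delta v^{-1}(L)\bigr)^{c},
\]
which is immediate from the definition of $E_{L,u,v}$: an element $(w,\i_1,\dots,\i_n)$ lies in $E_{L,u,v}$ exactly when the two conditions $u(w,\i_1,\dots,\i_n) \in L$ and $v(w,\i_1,\dots,\i_n) \in L$ are either both true or both false, i.e.\ when the point is not in the symmetric difference of $u^{-1}(L)$ and $v^{-1}(L)$.

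There is no real obstacle here; the whole statement is essentially a bookkeeping identity. The one small point that requires care is making sure to use primality (not merely the filter axioms) in both directions of the abstract equivalence, since the symmetric difference reformulation genuinely relies on the ultrafilter hypothesis: over an arbitrary filter, (2) implies (1) but not conversely.
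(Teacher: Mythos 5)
Your proposal is correct and follows essentially the same route as the paper: both rest on the identity $(T_1 \Delta T_2)^{c} = (T_1 \cap T_2) \cup (T_1^{c} \cap T_2^{c})$ together with closure of $\gamma$ under intersection and primality, and both obtain the ``in particular'' clause by instantiating $T_1 = u^{-1}(L)$, $T_2 = v^{-1}(L)$ and observing that $E_{L,u,v} = (u^{-1}(L) \Delta v^{-1}(L))^{c}$. The only difference is presentational: the paper writes the abstract equivalence as a chain of biconditionals, whereas you make the case splits explicit, which is slightly more verbose but equally valid.
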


\begin{proof}
Let us assume that $T_1 \in \gamma$ if, and only if, $T_2 \in \gamma$.
Since $\gamma$ is closed under finite intersections, this last statement is
equivalent to saying that 
$T_1 \cap T_2 \in \gamma$ or $T_1^{c} \cap T_2^{c} \in \gamma,$
and thus, since $\gamma$ is an ultrafilter, equivalent to 
$(T_1 \cap T_2) \cup (T_1^{c} \cap T_2^{c}) \in \gamma$.
By definition of the symmetric difference,
\[T_1 \Delta T_2 = ((T_1 \cap T_2) \cup (T_1^{c} \cap T_2^{c}))^{c},\]
thus this is equivalent to saying that $(T_1 \Delta T_2)^{c} \in \gamma$. \newline
The final statement is a simple application of this result for $S= A^{*} \otimes (\mb N^k)^n$, $T_1 = u^{-1}(L)$ and $T_2 = v^{-1}(L)$.
\end{proof}

We are now ready to give a reformulation of the equations of Definition \ref{def:Euv} in terms of a condition relative to finite colourings of $\mb N^k$.

\begin{proposition}\label{prop:reformulationEuv}
For any $k,n \geq 1$, any maps
$p_1,...,p_n : A^{*} \otimes (\mb N^k)^{n}  \to \mb N^k$
and  
$u,v: A^{*} \otimes (\mb N^k)^{n} \to A^*$, a language $L \subseteq A^*$ satisfies 
$\mc E^{p_1,...,p_n}_{u=v}$ if, and only if, there exists a finite colouring 
$\mc Q = (Q_1,...,Q_{\ell})$ of $\mb N^k$ for some $\ell\geq 1$ such that
\[ \bigcup_{i=1}^{\ell} \bigcap_{j=1}^{n} p_{j}^{-1}(Q_i) \subseteq E_{L,u,v}.\]
\end{proposition}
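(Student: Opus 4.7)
The plan is to chain Lemmas \ref{lemma:betapi} and \ref{lemma:secondpart} in order to translate ``$L$ satisfies $\mc E^{p_1,...,p_n}_{u=v}$'' into a purely filter-theoretic statement about the subset $E_{L,u,v}$ of $A^{*} \otimes (\mb N^k)^{n}$, and then to extract (or construct) a suitable finite colouring via the standard ultrafilter/filter-extension machinery. Denote by $\mc F$ the collection of subsets of $A^{*} \otimes (\mb N^k)^{n}$ of the form $\bigcup_{i=1}^{\ell}\bigcap_{j=1}^{n} p_{j}^{-1}(Q_i)$, where $\mc Q = (Q_1,...,Q_{\ell})$ ranges over finite colourings of $\mb N^k$. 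By Lemma \ref{lemma:betapi}, $\mc F$ is a filter basis (the intersection of two members lies in $\mc F$ via a common refinement), and the ultrafilters $\nu \in \beta(A^{*} \otimes (\mb N^k)^{n})$ satisfying $\beta p_1(\nu) = \cdots = \beta p_n(\nu)$ are precisely those containing every element of $\mc F$. By Lemma \ref{lemma:secondpart}, such a $\nu$ satisfies $\beta u(\nu) \leftrightarrow \beta v(\nu)$ iff $E_{L,u,v} \in \nu$. Hence the condition to be proved reduces to: every ultrafilter containing $\mc F$ also contains $E_{L,u,v}$.

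For the right-to-left direction, I would assume there is a finite colouring $\mc Q$ with $\bigcup_{i=1}^{\ell}\bigcap_{j=1}^{n} p_{j}^{-1}(Q_i) \subseteq E_{L,u,v}$. Then for any $\nu$ containing $\mc F$ the basic element associated to $\mc Q$ lies in $\nu$, so by upward closure $E_{L,u,v} \in \nu$, and the two lemmas above give exactly the equation $\beta u(\nu) \leftrightarrow \beta v(\nu)$.

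For the left-to-right direction I would argue by contrapositive. Assume that no finite colouring $\mc Q$ yields $\bigcup_{i=1}^{\ell}\bigcap_{j=1}^{n} p_{j}^{-1}(Q_i) \subseteq E_{L,u,v}$; equivalently, for every such basic set $F \in \mc F$ the intersection $F \cap E_{L,u,v}^{c}$ is non-empty. Since $\mc F$ is closed under finite intersections, the family $\mc F \cup \{E_{L,u,v}^{c}\}$ has the finite intersection property, and thus extends to an ultrafilter $\nu \in \beta(A^{*} \otimes (\mb N^k)^{n})$. By Lemma \ref{lemma:betapi}, $\nu$ satisfies $\beta p_1(\nu) = \cdots = \beta p_n(\nu)$, while by Lemma \ref{lemma:secondpart}, the fact that $E_{L,u,v} \notin \nu$ means that $L$ fails the ultrafilter equation $\beta u(\nu) \leftrightarrow \beta v(\nu)$, contradicting the assumption that $L$ satisfies $\mc E^{p_1,...,p_n}_{u=v}$.

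The only substantive point is the contrapositive: one must recognise that the closure of $\mc F$ under finite intersection, already established in Lemma \ref{lemma:betapi}, is exactly what allows the finite intersection property of $\mc F \cup \{E_{L,u,v}^{c}\}$ to reduce to the single hypothesis $F \cap E_{L,u,v}^{c} \neq \emptyset$ for each $F \in \mc F$. Everything else is bookkeeping on top of the two preceding lemmas.
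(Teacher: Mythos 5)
Your proposal is correct and follows essentially the same route as the paper: both reduce the statement to the two preceding lemmas and then observe that the filter generated by the colouring-sets equals the intersection of the ultrafilters containing it, which is precisely the finite-intersection-property/ultrafilter-extension argument you spell out in your contrapositive. The paper merely compresses that last step into the single sentence ``a filter is the intersection of all of the ultrafilters which contain it,'' so no further comparison is needed.
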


\begin{proof}
For any language $L \subseteq A^{*}$, to satisfy the ultrafilter equation
$\mc E^{p_1,...,p_n}_{u=v}$ amounts to the following condition.

\[\forall \nu \in \beta(A^{*} \otimes (\mb N^k)^n),
[ \beta p_{1}(\nu)=...=\beta p_{n}(\nu) \Longrightarrow 
( L \models \beta u (\gamma) \leftrightarrow \beta v (\gamma))].\]

By applying Lemma \ref{lemma:betapi} and Lemma \ref{lemma:secondpart}, we can reformulate this condition as follows.

\[\forall \nu \in \beta(A^{*} \otimes (\mb N^k)^n),
(\{ \bigcup_{i=1}^{\ell} \bigcap_{j=1}^{n} p_{j}^{-1}(Q_i) \colon (Q_1,...,Q_{\ell}) \text{ is a colouring of $\mb N^k$}, \ell\geq 1  \} \subseteq \nu  \Longrightarrow 
E_{L,u,v} \in \nu) \]

In particular, we know that a filter is the intersection of all of the ultrafilters which contain it. Therefore, since Lemma \ref{lemma:betapi} allows us to consider the filter 
\[ \mc F := {\uparrow} \{ \bigcup_{i=1}^{\ell} \bigcap_{j=1}^{n} p_{j}^{-1}(Q_i) \colon (Q_1,...,Q_{\ell}) \text{ is a finite colouring of $\mb N^k$, }  \ell \geq 1  \} ,\]
we can simplify our condition into $E_{L,u,v} \in \mc F$, that is, there exists a finite colouring 
$\mc Q = (Q_1,...,Q_{\ell})$ of $\mb N^k$ for some $\ell \geq 1$ such that
\[ \bigcup_{i=1}^{\ell} \bigcap_{j=1}^{n} p_{j}^{-1}(Q_i) \subseteq E_{L,u,v}.\]
\end{proof}

Let us assume that a language $L$ satisfies two families of ultrafilter equations, $\mc E^{p_1,...,p_{n}}_{u=v}$ and $\mc E^{p'_1,...,p'_{n'}}_{u'=v'}$. 
By applying Proposition \ref{prop:reformulationEuv}, this is equivalent to assuming the existence of two finite colourings $\mc Q=(Q_1,...,Q_{\ell})$ and $\mc Q'=(Q'_1,...,Q'_{\ell'})$, for some $\ell, \ell' \geq 1$, such that 

\[\bigcup_{i=1}^{\ell} \bigcap_{j=1}^{n} p_{j}^{-1}(Q_i) \subseteq E_{L,u,v} \]
and
\[\bigcup_{i=1}^{\ell'} \bigcap_{j=1}^{n'} (p'_{j})^{-1}(Q'_i) \subseteq E_{L,u',v'}.\]
 
Now, considering a common refinement $\mc Q''=(Q''_1,...,Q''_{\ell''})$ of the two colourings $\mc Q$ and $\mc Q'$, for some $\ell'' \geq 1$, we obtain in particular that
\[\bigcup_{i=1}^{\ell''} \bigcap_{j=1}^{n} p_{j}^{-1}(Q''_i) \subseteq E_{L,u,v} \]
and
\[\bigcup_{i=1}^{\ell''} \bigcap_{j=1}^{n} (p'_{j})^{-1}(Q''_i) \subseteq E_{L,u',v'}.\]

This allows for a reformulation of the ultrafilter equations introduced in Definition \ref{def:Eab=ba}, \ref{def:Eaab=abb} and \ref{def:Ea=a.a}.

\begin{corollary}\label{cor:reformulationequation}
A language $L \subseteq A^*$ satisfies the families of ultrafilter equations $\mc E_{ab=ba}, \mc E_{aab=abb}$ and $\mc E_{a=a.a}$, for every $a,b \in A$, if, and only if, there exists a finite colouring $\mc Q = (Q_1,...,Q_{\ell})$ of $\mb N$, for some $\ell\geq 1$, such that, for every $a,b \in A$,
\[\begin{cases}
\bigcup_{i=1}^\ell A^{*} \otimes Q_{i}^{2} \subseteq E_{L,f_{a,b},f_{b,a}} \\ \bigcup_{i=1}^\ell A^{*} \otimes Q_{i}^{3} \subseteq E_{L,f_{a,a,b},f_{a,b,b}} \\ \bigcup_{i=1}^\ell L_{Q_{i}} \otimes Q_{i} \subseteq E_{L,f_{a},f_{a.a}}
\end{cases},\]
where, for every $i \in \{1,...,\ell\}$, \[L_{Q_{i}} \otimes Q_{i} := \{(w,j) \in A^{*} \otimes \mb N \colon j \in Q_i \text{ and } |w| \in Q_i \}.\]
\end{corollary}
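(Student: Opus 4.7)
The plan is to apply Proposition \ref{prop:reformulationEuv} to each of the three families $\mc E_{ab=ba}$, $\mc E_{aab=abb}$, and $\mc E_{a=a.a}$ separately, and then amalgamate the (finitely many) colourings obtained into a single common refinement that works for every pair $(a,b) \in A^2$ simultaneously. First I would unpack the inverse images $\bigcap_{j=1}^{n} p_j^{-1}(Q_i)$ for the three concrete choices of projections: for $(p_1,p_2) = (\pi_1,\pi_2)$ on $A^{*}\otimes \mb N^2$ the intersection equals $A^{*}\otimes Q_i^2$; for $(p_1,p_2,p_3) = (\pi_1,\pi_2,\pi_3)$ on $A^{*}\otimes \mb N^3$ it equals $A^{*}\otimes Q_i^3$; and for $(p_1,p_2) = (\pi,|\cdot|)$ on $A^{*}\otimes \mb N$ it equals $L_{Q_i}\otimes Q_i$ (since the condition is that both the marked position $j$ and the length $|w|$ fall into $Q_i$). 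This identifies the three right-hand sides in the statement as exactly the expressions one obtains from Proposition \ref{prop:reformulationEuv}.

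Next, for each fixed pair $(a,b)\in A^2$, Proposition \ref{prop:reformulationEuv} yields, if $L$ satisfies the three families, three finite colourings $\mc Q^{ab=ba}_{a,b}$, $\mc Q^{aab=abb}_{a,b}$, $\mc Q^{a=a.a}_{a}$ of $\mb N$, each satisfying the corresponding inclusion. The key observation is that these inclusions are \emph{monotone under refinement}: if $\mc Q' = (Q'_1,\dots,Q'_{\ell'})$ refines $\mc Q = (Q_1,\dots,Q_\ell)$, then each $Q'_j$ lies in some $Q_{i(j)}$, hence
\[ \bigcup_{j=1}^{\ell'}\bigcap_{k=1}^n p_k^{-1}(Q'_j) \ \subseteq \ \bigcup_{i=1}^{\ell}\bigcap_{k=1}^n p_k^{-1}(Q_i), \]
so any inclusion into some $E_{L,u,v}$ that holds for $\mc Q$ still holds for $\mc Q'$. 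Since $A$ is finite, we have only finitely many colourings to combine; taking the common refinement $\mc Q = (Q_1,\dots,Q_\ell)$ of all of them yields a single finite colouring of $\mb N$ that simultaneously witnesses all three inclusions displayed in the corollary, for every $a,b \in A$. This gives the left-to-right direction.

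For the converse, given such a single colouring $\mc Q$, the three inclusions are precisely the conditions of Proposition \ref{prop:reformulationEuv} for the respective families $\mc E_{ab=ba}$, $\mc E_{aab=abb}$ and $\mc E_{a=a.a}$, and hence $L$ satisfies all these families of ultrafilter equations. The only point that needs a moment of care is the rewriting of the basic sets $\bigcap_j p_j^{-1}(Q_i)$ in the three cases, and in particular the identification $\pi^{-1}(Q_i)\cap|\cdot|^{-1}(Q_i) = L_{Q_i}\otimes Q_i$ in the last case (using the convention established at the start of Definition \ref{def:Ea=a.a}); the rest is a straightforward finite amalgamation. I expect no serious obstacle, the main substance of the argument being already encapsulated in Proposition \ref{prop:reformulationEuv}.
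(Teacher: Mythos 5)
Your proposal is correct and follows essentially the same route as the paper: the paper likewise derives the corollary by applying Proposition \ref{prop:reformulationEuv} to each family and then passing to a common refinement of the finitely many resulting colourings (this is exactly the discussion preceding the corollary in the text). Your explicit unpacking of $\bigcap_j p_j^{-1}(Q_i)$ in the three cases and the remark on monotonicity under refinement are the same steps the paper leaves implicit.
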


\begin{example}\label{example:sound1}
As a first application of these reformulations in terms of finite colouring of $\mb N$, we check that the couples of ultrafilters in the family $\mc E_{ab=ab}$, for every $a,b \in A$ are indeed in the kernel of the continuous quotient $q:\beta(A^*) \twoheadrightarrow X_1$, dual to the canonical embedding $\mc B_1 \hookrightarrow \mc P(A^*)$.

Recall that we proved in Proposition \ref{prop:carcImultrafilter} that the map $q$ sends any ultrafilter $\gamma \in \beta(A^*)$ to the family of closed subsets $(C_d(\gamma))_{d \in A}$, where, for every $d \in A$,
\[C_{d}(\gamma):= \bigcap_{L_{\Box^{d}_Q} \in \gamma}\wh{Q}.\]
Now, fix $a,b \in A$, and consider $\nu \in \beta(A^* \otimes \mb N^2)$, such that $\beta \pi_1(\nu)=\beta \pi_2(\nu)$: we prove that $\beta f_{a,b}(\nu)$ and $\beta f_{b,a}(\nu)$ have the same image under $q$.
Fix $d \in A$. First, note that, for every $d \in A$,
\begin{align*}
L_{\Box^{d}_Q} \in \beta f_{a,b} (\nu)
&\Longleftrightarrow  f_{a,b}^{-1}(L_{\Box^{d}_Q}) \in \nu \\ 
&\Longleftrightarrow \{(w,j_1,j_2) \in A^{*} \otimes \mb N^2 \colon w((j_1,j_2) \rightarrow (a,b)) \in L_{\Box^{d}_Q} \} \in \nu \\
&\Longleftrightarrow \{(w,j_1,j_2) \in A^{*} \otimes \mb N^2 \colon c_{d}(w((j_1,j_2) \rightarrow (a,b))) \subseteq Q \} \in \nu.
\end{align*}

Now, since $\beta \pi_1(\nu)=\beta \pi_2(\nu)$, we use Lemma \ref{lemma:betapi} with the colouring $(Q,Q^c)$ of $\mb N$, and we obtain that \[(A^{*}\otimes Q^2) \cup (A^{*}\otimes (Q^{c})^2) \in \gamma.\] 
Therefore, by intersection, \[\{(w,j_1,j_2) \in A^{*} \otimes \mb N^2 \colon c_{d}(w((j_1,j_2) \rightarrow (a,b))) \subseteq Q \} \in \nu \] is equivalent to 
\begin{align*}
\{(w,j_1,j_2) \in A^{*} \otimes \mb N^2 \colon j_1,j_2 \in Q \text{  and  } c_{d}(w((j_1,j_2) \rightarrow (a,b))) \subseteq Q \}   \cup \\
\{(w,j_1,j_2) \in A^{*} \otimes \mb N^2 \colon j_1,j_2 \in Q^{c} \text{  and  } c_{d}(w((j_1,j_2) \rightarrow (a,b))) \subseteq Q^{c} \}\in \nu.
\end{align*}
Finally, observe that, for any $j_1,j_2 \in Q$, \[c_{d}(w((j_1,j_2) \rightarrow (a,b))) = c_{d}(w((j_1,j_2) \rightarrow (b,a))),\]
and the same holds for any $j_1,j_2 \in Q^{c}$. This allows us to prove that, for any $d \in A$,
\[L_{\Box^{d}_Q} \in \beta f_{a,b} (\nu) \Longleftrightarrow L_{\Box^{d}_Q} \in \beta f_{b,a}(\nu),\]
and thus, for every $d \in A$, $C_{d}(\beta f_{a,b}(\nu))= C_{d}(\beta f_{a,b}(\nu))$ which means that $q(\beta f_{a,b}(\nu)) = q(\beta f_{b,a}(\nu))$.
\end{example}

\section{A topological proof for of soundness and completness}\label{section:equasig1}

In this section, we explain how it is possible to use Corollary \ref{cor:reformulationequation} in order to show that the ultrafilter equations we introduced in Definition \ref{def:Eaab=abb}, \ref{def:Eaab=abb} and \ref{def:Ea=a.a} allow for a description of $\mc B_{1}$.

\subsection{Soundness}

We start by proving that the Boolean algebra $\mc B_{1}$ satisfy the family of equations we introduced in Section \ref{section:family}. The reformulation of these equations we gave in Corollary \ref{cor:reformulationequation}, combined with the knowledge of the family of generators that we introduced in Proposition \ref{prop:caraB_kKQB}, enable us to check it in a straight-forward fashion.

\begin{proposition}\label{prop:soundness}
Let $L \subseteq A^*$ be a language: if $L$ is in $\mc B_{1}$, then $L$ satisfies the ultrafilter equations in the families $\mc E_{ab=ba}$, $\mc E_{aab=abb}$ and $\mc E_{a=a.a}$, for every $a,b \in A$.
\end{proposition}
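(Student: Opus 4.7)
The plan reduces the claim to a generating set of $\mc B_{1}$ and then exploits Boolean closure. By Proposition \ref{prop:caraB_kKQB}, the Boolean algebra $\mc B_{1}$ is generated by the languages $K_{\mc Q, \B}$, where $\mc Q = (Q_{1}, \dots, Q_{\ell})$ ranges over finite colourings of $\mb N$ and $\B = (B_{1}, \dots, B_{\ell})$ over $\mc P(A)^{\ell}$. Using the reformulation of Corollary \ref{cor:reformulationequation}, I will first show that every such generator $K_{\mc Q, \B}$ satisfies all three families of equations simultaneously, with the very colouring $\mc Q$ itself serving as the common witness, and then observe that the property of satisfying a given family is preserved by finite unions, intersections and complements.

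For the generator step, fix $K = K_{\mc Q, \B}$ and letters $a, b \in A$. The key remark is that whenever $j_{1}, j_{2} \in Q_{i}$, the words $f_{a,b}(w, j_{1}, j_{2})$ and $f_{b,a}(w, j_{1}, j_{2})$ have identical profiles with respect to $\mc Q$: any block $Q_{i'}$ with $i' \neq i$ is untouched by disjointness, while on $Q_{i}$ both altered words carry the letter-set $\{a, b\}$ at the two modified positions and coincide with $w$ elsewhere. Hence $A^{*} \otimes Q_{i}^{2} \subseteq E_{K, f_{a,b}, f_{b,a}}$, which is exactly the condition of Corollary \ref{cor:reformulationequation} for $\mc E_{ab=ba}$. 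The family $\mc E_{aab=abb}$ is handled by the same disjointness argument, since both $(a,a,b)$ and $(a,b,b)$ contribute the subset $\{a,b\}$ to the block containing the three modified positions. For $\mc E_{a=a.a}$, the appended letter $a$ sits at position $|w|$, and to ensure it contributes to the same block as the modified position $j$, one has to restrict to pairs $(w, j)$ with $j, |w| \in Q_{i}$; under that constraint $f_{a}(w, j)$ and $f_{a.a}(w, j)$ have matching $\mc Q$-profiles, which yields the inclusion $L_{Q_{i}} \otimes Q_{i} \subseteq E_{K, f_{a}, f_{a.a}}$ demanded by Corollary \ref{cor:reformulationequation}.

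The Boolean closure step is a formal verification. A direct logical check gives that $E_{L_{1} \cap L_{2}, u, v}$, $E_{L_{1} \cup L_{2}, u, v}$ and $E_{L_{1}^{c}, u, v}$ all contain $E_{L_{1}, u, v} \cap E_{L_{2}, u, v}$. Moreover, if $L_{1}$ and $L_{2}$ each satisfy a given family via witness colourings $\mc Q^{(1)}$ and $\mc Q^{(2)}$, any common refinement $\mc Q$ still works, since refining only shrinks the set $\bigcup_{i} \bigcap_{j} p_{j}^{-1}(Q_{i})$ appearing in Corollary \ref{cor:reformulationequation}. Iterating this propagates the property from the generators to every element of $\mc B_{1}$. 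No real obstacle arises: the only subtlety is to realise that the colouring used to \emph{define} a generator $K_{\mc Q, \B}$ is precisely the one needed to \emph{witness} all three equations for that generator, after which the disjointness of the blocks of $\mc Q$ absorbs every nontrivial combinatorial check that would otherwise be needed.
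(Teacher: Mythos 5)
Your proposal is correct and follows essentially the same route as the paper's proof: reduce to the generators $K_{\mc Q,\B}$ via Proposition \ref{prop:caraB_kKQB}, invoke the colouring reformulation of Corollary \ref{cor:reformulationequation}, and verify each of the three families by observing that the modifications performed by $f_{a,b}$, $f_{a,a,b}$ and $f_{a}.a$ within a single block leave every block's content unchanged. The only difference is that you spell out the closure of the equation-satisfying languages under Boolean operations and the passage to a common refined colouring, a step the paper leaves implicit (it follows from the Galois connection of Theorem \ref{prop:equaexist}); your explicit verification is sound and harmless.
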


\begin{proof}
(1): By Proposition \ref{prop:caraB_kKQB}, we know that $\mc B_{1}$ is the Boolean algebra generated by the languages $K_{\mc Q,\B}$, where $\mc Q$ ranges over all finite colourings of $\mb N$ with $\ell$ colours and $\B$ ranges over $\mc P(A)^{\ell}$, where $\ell$ ranges overs $\mb N_{>0}$. Therefore, it is enough to prove that these languages satisfy the ultrafilter equations in question.
We do so by using the reformulation of these equations introduced in Corollary \ref{cor:reformulationequation}.
Fix a finite colouring $\mc Q$ of $\mb N$ with $\ell$ colours, where $\ell \geq 1$, and fix $\B \in \mc P(A)^{\ell}$. Fix $a,b \in A$. \newline \underline{$\mc E_{ab=ba}$:}
Let us start by proving that 
\[ \bigcup_{i=1}^{\ell}A^{*} \otimes Q_i^2 \subseteq E_{K_{\Q,\B},f_{a,b},f_{b,a}}.\]
Fix $i \in \{1,...,\ell\}$. We prove that every $(w,j_1,j_2) \in A^{*}\otimes Q_i^2$ belongs to $E_{K_{\Q,\B},f_{a,b},f_{b,a}}$.
On the one hand, notice that, for every $i' \in \{1,...,\ell\}$, with $i \neq i'$, the content of $w((j_1,j_2)\rightarrow(a,b))$ on $Q_{i'}$ is the same than the content of 
$w((j_1,j_2)\rightarrow(b,a))$
on $Q_{i'}$: indeed, they are both equal to the content of $w$ on $Q_{i'}$.
On the other hand, the content of $w((j_1,j_2)\rightarrow(a,b))$ on $Q_{i}$ is the same than the content of 
$w((j_1,j_2)\rightarrow(b,a))$
on $Q_{i}$: we only switched the letters at position $j_1$ and $j_2$, which does not add nor remove any letter out of the content on $Q_i$.
We deduce that, for every $i \in \{1,...,\ell\}$, $A^{*} \otimes Q_i^2 \subseteq E_{K_{\Q,\B},f_{a,b},f_{b,a}}$, thus we conclude by Corollary \ref{cor:reformulationequation} that $K_{\Q,\B}$ satisfies the ultrafilter equation $\mc{E}_{ab=ba}$. \newline \underline{$\mc E_{aab=abb}$:}
The argument is almost exactly the same to prove that $K_{\mc Q,\B}$ satisfies the ultrafilter equation $\mc{E}_{aab=abb}$. Fix $i \in \{1,...,\ell\}$. We prove that every $(w,j_1,j_2,j_3) \in A^{*}\otimes Q_i^3$ belongs to $E_{K_{\Q,\B},f_{a,a,b},f_{a,b,b}}$. On the one hand, notice that, for every $i' \in \{1,...,\ell\}$, with $i \neq i'$, the content of $w((j_1,j_2,j_3)\rightarrow(a,a,b))$ on $Q_{i'}$ is the same than the content of 
$w((j_1,j_2,j_3)\rightarrow(a,b,b))$
on $Q_{i'}$: indeed, they are both equal to the content of $w$ on $Q_{i'}$.
On the other hand, the content of $w((j_1,j_2,j_3)\rightarrow(a,a,b))$ on $Q_{i}$ is the same than the content of 
$w((j_1,j_2,j_3)\rightarrow (a,b,b))$ on $Q_{i}$: we only replaced the occurrence of $a$ available at the position $j_2 \in Q_i$ by an occurrence of $b$, which was already available at the position $j_3 \in Q_i$. The letter $a$ is still available at the position $j_1 \in Q_i$: the content on $Q_i$ has not been altered.
We deduce that
\[\bigcup_{i=1}^{\ell} A^{*}\otimes Q_i^3 \subseteq E_{K_{\Q,\B},f_{a,a,b},f_{a,b,b}},\]
and we conclude  by Corollary \ref{cor:reformulationequation} that $K_{\Q,\B}$ satisfies the ultrafilter equation $\mc{E}_{aab=abb}$. \newline \underline{$\mc E_{a=a.a}$:} Finally, we prove that
\[\bigcup_{i=1}^{\ell} L_{Q_i}\otimes Q_i \subseteq E_{K_{\Q,\B},f_{a},f_{a}.a}.\]
Fix $i \in \{1,...,\ell\}$. We prove that every $(w,j) \in A^{*}\otimes Q_i$ such that $|w| \in Q_i$ belongs to $E_{K_{\Q,\B},f_{a},f_{a}.a}$.
On the one hand, notice that, for every $i' \in \{1,...,\ell\}$, with $i \neq i'$, the content of $w(j\rightarrow a)$ on $Q_{i'}$ is the same than the content of $w(j \rightarrow a).a$ on $Q_{i'}$, since $|w|$ does not belong to $Q_{i'}$.
On the other hand, the content of $w(j\rightarrow a)$ on $Q_{i}$ is the same than the content of $w(j\rightarrow a).a$ on $Q_i$: the only difference between these two words is the presence of one more occurrence of $a$ on a position which belongs to $Q_i$, but since $j \in Q_i$, the letter $a$ is already present in the content of both oh these words on $Q_i$. We conclude that $K_{\Q,\B}$ satisfies the ultrafilter equation $\mc{E}_{a=a.a}$.
\end{proof}

\subsection{Completeness for $\mc B_1$}

We prove that the languages satisfying the family of ultrafilter equations $\mc E_{ab=ba}$, $\mc E_{aab=abb}$ and $\mc E_{a=a.a}$, for every $a,b \in A$, all belong to $\mc B_1.$ In order to do so, we use the presentation of $\mc B_1$ we provided in Proposition \ref{prop:caraB_kKQB}. As explained in Section \ref{section:complementBk}, the languages generating $\mc B_1$ are of the form $K_{\mc Q,\B}$, where $\mc Q$ is a finite colouring of $\mb N$ with $\ell$ colours, for some $\ell \geq 1$, and $\B \in \mc P(A)^\ell$. For a fixed finite colouring $\mc Q$, the finitely indexed equivalence relation $\sim_{\mc Q}$ introduced in Section \ref{section:complementBk} gives a finite partition of $A^{*}$. We prove that, if a language $L \subseteq A^*$ satisfies all of our ultrafilter equations, then it is possible to find a finite colouring $\mc Q$ such that $L$ can be written as a finite union of some of the equivalence classes for $\sim_{\mc Q}$.

We start by detailing the situation when the colouring in question only has two colours, and we then generalize the argument to any finite colouring of $\mb N$.

\begin{proposition}\label{prop:transductionN1ucasfacile}
Let $L \subseteq A^*$ be a language satisfying $\mc E_{ab=ba}$, $\mc E_{aab=abb}$ and $\mc E_{a=a.a}$, for every $a,b \in A$. Let us consider a finite colouring $\mc Q$ of $\mb N$ satisfying the condition from Corollary \ref{cor:reformulationequation}, and assume that $\mc Q=(Q,Q^c)$ for some $Q \subseteq \mb N$.
Then, for any words $w, w' \in A^*$ such that $w \sim_{Q} w'$ and, for every $i \in Q^c \cap \{0,...,min(|w|,|w'|)-1\}$, $w_i=w'_{i}$, we have that 
\[w \in L \Longleftrightarrow w' \in L.\]
\end{proposition}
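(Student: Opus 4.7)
My plan is to construct a finite chain of words $w = u_0, u_1, \ldots, u_N = w'$ such that each consecutive pair is $L$-equivalent, the equivalence being witnessed by one of the three families of ultrafilter equations in the reformulation of Corollary~\ref{cor:reformulationequation}. Assuming without loss of generality that $|w| \le |w'|$, I split the chain into a length-equalization phase using $\mc E_{a=a.a}$ and a rearrangement phase using $\mc E_{ab=ba}$ and $\mc E_{aab=abb}$.

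In the length-equalization phase, if the current word $u$ has length $n < |w'|$, I use $\mc E_{a=a.a}$ to append one letter. Corollary~\ref{cor:reformulationequation} yields $u \sim_L u \cdot a$ as soon as one can produce $j < n$ in the same colour class as $n$ with $u_j = a$. When $n \in Q$, I pick $a$ in the common $Q$-content $B := \langle w, Q\rangle = \langle w', Q\rangle$, whose witness in $u$ at a $Q$-position is guaranteed. When $n \in Q^c$, I append $a = w'_n$: the pointwise agreement of $w$ and $w'$ on $Q^c$-positions below $|w|$, combined with the fact that previously appended $Q^c$-letters of $u$ were chosen equal to those of $w'$, furnishes a $j \in Q^c$ with $u_j = w'_n$. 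After $|w'| - |w|$ iterations I have $|u| = |w'|$, $u$ agrees with $w'$ on all $Q^c$-positions, and $\langle u, Q\rangle = B$.

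The rearrangement phase then transforms the $Q$-positions of $u$ into those of $w'$. Using $\mc E_{ab=ba}$ with $a = u_{j_1}$ and $b = u_{j_2}$, I can swap $u_{j_1}$ and $u_{j_2}$ for any two $Q$-positions $j_1, j_2$, since $u$ and its $(j_1, j_2)$-swap are exactly $f_{a,b}(u, j_1, j_2)$ and $f_{b,a}(u, j_1, j_2)$. Using $\mc E_{aab=abb}$ on distinct $Q$-positions $j_1, j_2, j_3$ with $u_{j_1} = a$ and $u_{j_3} = b$, I can freely toggle $u_{j_2}$ between $a$ and $b$, which changes the multiplicity of one letter of $B$ by $+1$ and another by $-1$ while keeping the support $B$ intact. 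Swaps realize any permutation of a fixed multiset of $Q$-letters, and toggle moves connect any two multisets over $B$ of the same total size in which every letter of $B$ is represented at least once, so iterating these operations drives the $Q$-arrangement of $u$ to that of $w'$.

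The main obstacle is the length-equalization phase: justifying, at every appending step, that a witness $j$ exists in the prescribed colour class. The $Q$-case follows cleanly from $w \sim_Q w'$, but the $Q^c$-case relies delicately on the hypothesis of pointwise agreement on $Q^c$-positions and on ordering the appendings so that each new letter already appears below in the correct colour class. The rearrangement phase is essentially combinatorial bookkeeping, much streamlined here by the two-colour assumption, which is precisely what makes this the ``easy case'' preceding the general finite-colouring argument.
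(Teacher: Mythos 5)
Your proof follows essentially the same route as the paper's: a finite chain of words linked by instances of the three equation families, with $\mc E_{a=a.a}$ used to equalize lengths, $\mc E_{ab=ba}$ to permute the letters occupying $Q$-positions, and $\mc E_{aab=abb}$ to adjust multiplicities while preserving the $Q$-content, the only difference being the (purely presentational) order of the two phases. The one step you flag as delicate --- exhibiting a $Q^c$-witness for each appended letter when the new position lies in $Q^c$ --- is precisely the step the paper itself leaves thin (its choice $j=|w|$ is not even a legal marked position), so your treatment is no less complete than the original.
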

\begin{proof}\label{proof:transductionN1ucasfacile}
We consider a language $L \subseteq A^*$ such that there exists $Q \subseteq \mb N$ such that, for every $a,b \in A$,
\[ (A^{*}\otimes Q^2) \cup( A^{*} \otimes (Q^c)^2) \subseteq E_{L,f_{a,b},f_{b,a}},\]
\[ (A^{*}\otimes Q^3) \cup (A^{*} \otimes (Q^c)^3) \subseteq E_{L,f_{a,a,b},f_{a,b,b}},\]
and
\[ (L_{Q} \otimes Q) \cup (L_{Q^{c}} \otimes Q^{c}) \subseteq E_{L,f_{a},f_{a}.a}.\]
We consider two finite words $w,w' \in A^*$ such that $w \sim_{Q} w'$ and for every $i \in Q^c \cap \{0,...,min(|w|,|w'|)-1\}$, $w_i=w'_{i}$.
We set $N:=min(|w|,|w'|)$.
We want to prove that $w$ is in $L$ if, and only if, $w'$ is in $L$. 
In order to do so, we start by defining the following families of endofunctions of $A^{*}$.

\[\mc M_{\mc E,1,Q}^{ab=ba}:= 
\{(f_{a,b}^{j_1,j_2},f_{b,a}^{j_1,j_2}) \colon j_1,j_2 \in Q \cap \{0,...,N-1\}, a,b \in A \},\]
where, for every $j_1,j_2 \in Q \cap \{0,...,N-1\}$, and every $a,b \in A$, $f_{a,b}^{j_1,j_2}: A^{*} \to A^{*}$ sends any finite word $v$ to $f_{a,b}(v,j_1,j_2)$ if $(v,j_1,j_2)$ is in $A^{*}\otimes \mb N^2$, and to $v$ otherwise. We define in an analogous way

\[\mc M_{\mc E,1,Q}^{aab=abb}:= 
\{(f_{a,a,b}^{j_1,j_2,j_3},f_{a,b,b}^{j_1,j_2,j_3}) \colon j_1,j_2,j_3 \in Q \cap \{0,...,N-1\}, a,b \in A \},\]

\[\mc M_{\mc E,1,Q}^{a=a.a}:= 
\{(f_{a}^{j_1},f^{j_1}_{a}.a) \colon j_1 \in Q \cap \{0,...,N-1\}, a \in A \},\]

and 
\[\mc M_{\mc E,1,Q}:= \mc M_{\mc E,1,Q}^{ab=ba} \cup \mc M_{\mc E,1,Q}^{aab=abb} \cup \mc M_{\mc E,1,Q}^{a=a.a}.\] 

Now, let us assume that there exists $n \in \mb N$, a family of pairs of maps $(f_m,g_m)_{1 \leq m \leq n}$ and a finite sequence of words $(v_m)_{1 \leq m \leq n}$ satisfying the following properties:

\begin{center}
$\left.\begin{minipage}{11cm}
\begin{enumerate}
\item $v_1 = w$ and $v_{n}=w'$;
\item for every $m \in \{1,...,n\}, (f_m,g_m) \in \mc M_{\mc E,1,Q}$
\item for every $m \in \{1,...,n\}$, $f_{m}(v_m)=v_m$;
\item for every $m \in \{1,...,n-1\}$, $g_{m}(v_{m})=v_{m+1}$, and $g_n(v_n)=v_n$.
\end{enumerate}
\end{minipage}\right\rbrace$ $(\ast)$
\end{center}

In particular, by composing the maps $(g_n,g_{n-1},...,g_1)$, we would have that \[w' = (\mathop{\bigcirc}\limits_{m = n}^{1} g_m)(w),\] and therefore by Corollary \ref{cor:reformulationequation}, we are able to conclude that $w$ is in $L$ if, and only if $w'$ is in $L$.
The end of this proof describes how it is possible to construct all of these elements satisfying $(\ast)$. 

We start by treating the case where $w$ and $w'$ have the same length, that is $N$.
In that case, there exists $n_1,...,n_p < N$ for some $p \in \mb N$ such that
\[Q \cap \{0,...,N-1\} = \{n_1,...,n_p\},\]
and we consider the finite words 
$\restriction{w}{Q} := w_{n_{1}}...w_{n_{p}}$ and 
$\restriction{w'}{Q} := w'_{n_{1}}...w'_{n_{p}}$.
We assumed that $w \sim_Q w'$, thus we set $B:=\langle w,Q \rangle= \langle w',Q \rangle$. Notice that $\restriction{w}{Q}$ and $\restriction{w'}{Q}$ are equal if, and only if, $w$ and $w'$ are equal.

First, assume that each letter in $B$ occurs exactly once in 
$\restriction{w}{Q}$ and $\restriction{w'}{Q}$.
In that case, these two words only differ by the order of their letters, and so do $w$ and $w'$.
If $\restriction{w}{Q}$ and $\restriction{w'}{Q}$ are the same word, then we are done.
Otherwise, there are at least two distinct letters $a,b \in B$ which occur in $\restriction{w}{Q}$ and $\restriction{w'}{Q}$, and yet do not occupy the same position in both words. We thus consider the unique integers $s_1,s_2$ and $t_1,t_2$ in $\{n_1,...,n_p\}$ such that 
$w_{n_{s_1}}=w_{n_{t_1}}=a$  and
$w'_{n_{s_2}}=w'_{n_{t_2}}=b$, and we know that $(s_1,s_2) \neq (t_1,t_2)$.
Then, 
$w=f_{a,b}(w,n_{s_1},n_{t_1})$ and the word $w_1=f_{b,a}(w,n_{s_1},n_{t_1})$ is such that the positions of the occurrences of $a$ in $w_1$ coincide with the one of $w'$.
Since any permutation on a finite set can be written as a finite product of transpositions, we can iterate this reasoning finitely many times, and eventually construct a finite family 
$(f_m,g_m)_{1 \leq m \leq n}\subseteq \mc M_{\mc E,1,Q}^{ab=ba}$ and a finite sequence of words $(v_m)_{1 \leq m \leq n}$
for some $n \in \mb N$ satisfying $(\ast)$. \newline Now, assume there exists a letter which occurs strictly more than once in $\restriction{w}{Q}$ or $\restriction{w'}{Q}$, say $\restriction{w}{Q}$ without any loss of generality. If, for every $a \in B$, $|\restriction{w}{Q}|_{a}=|\restriction{w'}{Q}|_{a}$, then these two words only differ by the order of apparition of each of their letters. By applying a similar reasoning to the case where $\restriction{w}{Q}$ and $\restriction{w'}{Q}$ have exactly one occurrence of each letter in $A$, we can construct a finite family 
$(f_m,g_m)_{1 \leq m \leq n}\subseteq \mc M_{\mc E,1,Q}^{ab=ba}$ and a finite sequence of words $(v_m)_{1 \leq m \leq n}$, for some $n \in \mb N$ satisfying the four conditions we mentioned.
Else, there exists a letter $a \in B$ such that $|\restriction{w}{Q}|_{a} \neq |\restriction{w'}{Q}|_{a}$,
say $|\restriction{w}{Q}|_{a} > |\restriction{w'}{Q}|_{a}$ without any loss of generality.
Pick $n_{s_1}, n_{s_2} \in \{n_1,...,n_p\}$ distincts such that $w_{j_{1}}= w_{j_{2}}=a$.
Now, since $\restriction{w}{Q}$ and $\restriction{w'}{Q}$ have the same length, there exists a letter $b \in B$ with $|\restriction{w}{Q}|_{b} < |\restriction{w'}{Q}|_{b}$, and $b \neq a$. 
Of course, since $w \sim_{Q} w'$, there exists $n_{s_3} \in \{n_1,...,n_p\}$ such that $w_{j_{3}}=b$. 
Then, $w=f_{a,a,b}(w,n_{s_1},n_{s_2},n_{s_3})$ and the word $w_1:=f_{a,b,b}(w,n_{s_1},n_{s_2},n_{s_3})$ satisfies the following properties:
\begin{itemize}
\item $w \sim_{Q} w_1$;
\item for every $i \in Q^{c} \cap \{0,...,N-1\}$, $(w_{1})_{i} = w_i$; 
\item by Corollary \ref{cor:reformulationequation}, $w$ is in $L$ if, and only if, $w_1$ is in $L$;
\item $|w_1|_{a} = |w|_{a}-1 $.
\end{itemize}
We can keep on iterating this reasoning finitely many times, until we obtain a word which brings us back to the case where $|\restriction{w}{Q}|_{a}=|\restriction{w'}{Q}|_{a}$.
By applying this argument to every letter of the alphabet $A$, we construct a finite word $w'' \in A^*$ such that $w'' \sim_{\Q}w'$, for every $i \leq N$, $w''_i=w'_i$ and, for every $a \in B$, $|\restriction{w''}{Q}|_{a}=|\restriction{w'}{Q}|_{a}$: this situation has already been treated previously.

Finally, let us treat the case where $w$ and $w'$ do not have the same length.
Without any loss of generality, let us assume that $|w|<|w'|$.
Recall that we assumed that $w \sim_Q w'$. 
Since $(Q,Q^c)$ is a colouring of $\mb N$, we make a case distinction depending on which color $|w|$ belongs to.
Pick $m \in Q \cap \{0,...,|w|-1\}$, and set
\[a:=  \begin{cases}w_m  \text{   if $|w| \in Q$} \\ w'_{|w|} \text{   if $|w| \in Q^c$} \end{cases}. \]
and
\[j:=  \begin{cases}m  \text{   if $|w| \in Q$} \\ |w| \text{   if $|w| \in Q^c$} \end{cases}. \]
Since $(w,j)$ belongs to $(L_{Q} \otimes Q) \cup (L_{Q^{c}} \otimes Q^{c})$, we have by Corollary \ref{cor:reformulationequation} that $w$ is in $L$ if, and only if, $w.a$ is in $L$. Observe that, by construction, $w \sim_{\Q} w.a$, and for every $i \in Q^c$, $(w.a)_i=w'_i$. By iterating this argument finitely many times, we are able to construct a finite word $w'' \in A^*$ which has the same length than $w'$, and such that $w'' \sim_{\Q} w'$.
We can now apply the reasoning we conducted in the first part of the proof to $w''$ and $w'$ in order to conclude.
\end{proof}

We now generalize the argument to any finite colouring of $\mb N$.

\begin{proposition}\label{prop:transductionN1u}
Let $L \subseteq A^*$ be a language satisfying $\mc E_{ab=ba}$, $\mc E_{aab=abb}$ and $\mc E_{a=a.a}$, for every $a,b \in A$. Let us consider a finite colouring $\mc Q$ of $\mb N$ satisfying the condition from Corollary \ref{cor:reformulationequation}.
Then, for any words $w, w' \in A^*$ such that $w \sim_{\Q} w'$, we have
\[w \in L \Longleftrightarrow w' \in L.\]
\end{proposition}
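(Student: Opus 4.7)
My plan is to iterate the argument of Proposition~\ref{prop:transductionN1ucasfacile} colour by colour. Writing $\mc Q=(Q_1,\dots,Q_\ell)$, Corollary~\ref{cor:reformulationequation} provides, for every $i\in\{1,\dots,\ell\}$, three types of elementary moves on $A^{*}$ that preserve both membership in $L$ and the $\sim_{\mc Q}$-class of any word: (i) swap the letters at two positions $j_1,j_2\in Q_i$; (ii) given a triple $(j_1,j_2,j_3)\in Q_i^{3}$ of distinct positions at which a word $v$ reads $(a,a,b)$, replace it by $(a,b,b)$; (iii) when $|v|\in Q_i$ and some $j\in Q_i\cap\{0,\dots,|v|-1\}$ already satisfies $v_j=a$, replace $v$ by $v.a$. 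Preservation of $\sim_{\mc Q}$ is immediate (each move touches letters at positions in a single $Q_i$, or appends a letter already present in $\langle v,Q_i\rangle$ at a position in $Q_i$); preservation of $L$-membership is exactly Corollary~\ref{cor:reformulationequation}. It then suffices to exhibit a finite chain of such moves taking $w$ to $w'$.

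First I would equalize the lengths. Assume $|w|\le|w'|$; if $|w|<|w'|$, let $i$ be the unique colour with $|w|\in Q_i$ and set $a:=w'_{|w|}$. Since $|w|$ is a valid position of $w'$ lying in $Q_i$, we have $a\in\langle w',Q_i\rangle=\langle w,Q_i\rangle$, so there exists $j\in Q_i\cap\{0,\dots,|w|-1\}$ with $w_j=a$. A type-(iii) move then replaces $w$ by $w.a$, which is still $\sim_{\mc Q}$-equivalent to $w'$ (the appended letter lies in $\langle w,Q_i\rangle$, and $|w|\notin Q_m$ for $m\ne i$) and has the same $L$-membership. Iterating this step reduces to the case $|w|=|w'|=:N$.

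Once $|w|=|w'|=N$, I would treat each colour $Q_i$ in turn. Set $B_i:=\langle w,Q_i\rangle=\langle w',Q_i\rangle$; the letters of $w$ and of $w'$ read at the positions in $Q_i\cap\{0,\dots,N-1\}$ form two multisets of common cardinality $|Q_i\cap\{0,\dots,N-1\}|$ with common support $B_i$. I would first reconcile multiplicities using type-(ii) moves inside $Q_i$: if some $a\in B_i$ is strictly more abundant in $w$ than in $w'$ on $Q_i$, pick $b\in B_i\setminus\{a\}$ with the reverse strict inequality (one exists because the two totals agree) and apply $(a,a,b)\mapsto(a,b,b)$ at a triple in $Q_i$; this is legal because the current multiplicity of $a$ is at least $2$ and that of $b$ is at least $1$. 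When the two multisets coincide, finitely many type-(i) transpositions inside $Q_i$ permute the letters to match $w'$ on $Q_i\cap\{0,\dots,N-1\}$. Because every move acts on positions inside a single $Q_i$, progress already made on other colours is not disturbed, and processing all $i$ yields $w'$. The main technical delicacy I anticipate is verifying that the multiplicity-adjustment phase never drops a letter from $B_i$ (which would break the common-support invariant and the $\sim_{\mc Q}$-equivalence): this is precisely what the inequalities $\ge 2$ on the source letter and $\ge 1$ on the witness letter $b$ guarantee at each application of a type-(ii) move.
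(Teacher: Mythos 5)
Your proposal is correct and follows essentially the same route as the paper: both reduce the problem to a finite chain of the elementary moves supplied by $\mc E_{ab=ba}$, $\mc E_{aab=abb}$ and $\mc E_{a=a.a}$ (length equalization by appending letters already present on the colour of $|w|$, then multiplicity reconciliation via the $aab=abb$ move, then transpositions), applied colour by colour, with each move preserving both $L$-membership and the $\sim_{\mc Q}$-class. The only difference is that you inline the chain construction for a general colouring, whereas the paper delegates it to the two-colour case of Proposition~\ref{prop:transductionN1ucasfacile} and composes the resulting sequences.
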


\begin{proof}\label{proof:transductionN1u}
We consider a language satisfying this property, that is, $L \subseteq A^*$ such that there exists a finite colouring $\Q$ of $\mb N$ with $\ell$ colours, where $\ell \geq 1$, such that, for every $a,b \in A$,
\[ \bigcup_{i=1}^\ell A^{*}\otimes Q_i^2 \subseteq E_{L,f_{a,b},f_{b,a}},\]
\[ \bigcup_{i=1}^\ell A^{*}\otimes Q_i^3  \subseteq E_{L,f_{a,a,b},f_{a,b,b}},\]
and
\[ \bigcup_{i=1}^\ell  L_{Q_i} \otimes Q_i \subseteq E_{L,f_{a},f_{a}.a},\] 
by Corollary \ref{cor:reformulationequation}.
We consider two finite words $w,w' \in A^*$ such that $w \sim_{\Q} w'$.
We want to prove that $w$ is in $L$ if, and only if, $w'$ is in $L$. 

We can apply the reasoning we used in Proof \ref{proof:transductionN1ucasfacile} independently to each of the subsets $Q_1,...,Q_{\ell}$.
More precisely, for every $i \in \{1,...,\ell\}$, we consider
$\mc M_{\mc E,1,Q_{i}}$ as introduced in Proof \ref{proof:transductionN1ucasfacile} .
We can construct $\ell$ independent integers 
$(n_i)_{1 \leq i \leq \ell}$, 
$\ell$ independent families of pairs of maps
$((f^{i}_m,g^{i}_m)_{1 \leq m \leq n_i})_{1 \leq i \leq \ell}$ 
and $\ell$ independent finite sequence of words
$((v^{i}_m)_{1 \leq m \leq n_i})_{1 \leq i \leq l}$ 
satisfying the following properties: for every $i \in \{1,...,\ell\}$,
\begin{enumerate}
\item $\restriction{v^{i}_1}{Q_i} =\restriction{w}{Q_i} $ and $\restriction{v^{i}_{n_i}}{Q_i}=\restriction{w'}{Q_i}$;
\item for every $m \in \{1,...,n_i\}$, $(f^{i}_m,g^{i}_m) \in \mc M_{\mc E,1,Q_i}$;
\item for every $m \in \{1,...,n_i\}$, $f^{i}_{m}(v^{i}_m)=v^{i}_m$;
\item for every $m \in \{1,...,n_i-1\}$, $g^{i}_{m}(v^{i}_{m})=v^{i}_{m+1}$, and $g^{i}_n(v^{i}_{n_i})=v^{i}_{n_i}$.
\end{enumerate}

In particular,  \[w' = (\mathop{\bigcirc}\limits_{i = 1}^{\ell} \mathop{\bigcirc}\limits_{m = n_{i}}^{1} g^{i}_m)(w),\]
and thus by Proposition \ref{prop:transductionN1ucasfacile}, $w$ belongs to 
$L$ if, and only if, $w'$ belongs to $L$.
\end{proof}

We are now ready to prove that every language $L \subseteq A^*$ satisfying our equations is necessarily such that there exists a finite colouring $\Q$ of $\mb N$ such that $L$ saturates the finitely indexed equivalence relation $\sim_{\Q}$, and thus $L$ can be expressed as a Boolean combinations of the generators of $\mc B_1$.

\begin{corollary} \label{cor:completnessunaire}
Any language $L \subseteq A^*$ which satisfies the ultrafilter equations $\mc E_{ab=ba}$, $\mc E_{aab=abb}$ and $\mc E_{a=a.a}$, for every $a,b \in A$, is in $\mc B_1$.
\end{corollary}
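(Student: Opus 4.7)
The plan is to combine Proposition \ref{prop:transductionN1u} with the alternative presentation of $\mc B_1$ given in Proposition \ref{prop:caraB_kKQB}. Suppose $L \subseteq A^*$ satisfies the three families of equations. By Corollary \ref{cor:reformulationequation}, this gives us a finite colouring $\mc Q = (Q_1,\dots,Q_\ell)$ of $\mb N$ satisfying the subset conditions listed there. By Proposition \ref{prop:transductionN1u}, for any $w, w' \in A^*$, if $w \sim_{\mc Q} w'$ then $w \in L \Leftrightarrow w' \in L$; in other words, $L$ is saturated by the equivalence relation $\sim_{\mc Q}$.

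Now I would invoke the finite-index observation made right after Definition \ref{def:KQB}: the relation $\sim_{\mc Q}$ has only finitely many equivalence classes, since every class is of the form $K_{\mc Q,\B}$ for some $\B \in \mc P(A)^{\ell}$, and $\mc P(A)^\ell$ is finite. Saturation of $L$ by $\sim_{\mc Q}$ therefore lets us write
\[L = \bigcup_{\B \in \mc P(A)^\ell \, : \, K_{\mc Q,\B} \subseteq L} K_{\mc Q,\B},\]
which is a \emph{finite} union of languages of the form $K_{\mc Q,\B}$.

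Finally, Proposition \ref{prop:caraB_kKQB} (specialised to $k=1$) tells us that each $K_{\mc Q,\B}$ lies in $\mc B_1$, and since $\mc B_1$ is a Boolean algebra it is closed under finite unions, so $L \in \mc B_1$. This is essentially bookkeeping: the real work has already been done in Propositions \ref{prop:caraB_kKQB} and \ref{prop:transductionN1u}, so I do not anticipate any obstacle here. The only point worth being careful about is to ensure that the same finite colouring $\mc Q$ produced by Corollary \ref{cor:reformulationequation} can be used simultaneously for all three equation families, which is guaranteed by the common-refinement remark following Proposition \ref{prop:reformulationEuv}.
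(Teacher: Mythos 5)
Your proposal is correct and follows essentially the same route as the paper: obtain a single finite colouring $\mc Q$ via the reformulation of the equations, use Proposition \ref{prop:transductionN1u} to see that $L$ saturates the finitely indexed relation $\sim_{\mc Q}$, write $L$ as a finite union of classes $K_{\mc Q,\B}$, and conclude by Proposition \ref{prop:caraB_kKQB}. Your remark about using a common refinement so that one colouring works for all three equation families is exactly the point the paper addresses in the discussion following Proposition \ref{prop:reformulationEuv}.
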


\begin{proof}
Consider a language $L \subseteq A^*$ satisfying the equations $\mc E_{ab=ba}, \mc E_{aab=abb}$ and $\mc E_{a=a.a}$, for every $a,b \in A$.
By Proposition \ref{prop:reformulationEuv}, there exists a finite colouring $(Q_1,...,Q_{\ell})$ of $\mb N$, for some $\ell \geq 1$, such that, for every $a,b \in A$, 
\[ \bigcup_{i=1}^\ell A^{*} \otimes  Q_{i}^2 \subseteq E_{L,f_{a,b},f_{b,a}}, \]
\[ \bigcup_{i=1}^\ell A^{*} \otimes  Q_{i}^3 \subseteq E_{L,f_{a,a,b},f_{a,b,b}} \]
and
\[ \bigcup_{i=1}^\ell L_{Q_{i}} \otimes Q_{i} \subseteq E_{L,f_{a},f_{a}.a}.\]

Fix a word $w \in L$, then there exists a unique $\B \in \mc P(A)^\ell$ such that $w \in K_{\Q,\B}$.
By Proposition \ref{prop:transductionN1u}, for any $w' \in K_{\Q,\B}$, we have that $w' \in L$.
More generally, we deduce that the finitely indexed equivalence relation $\sim_{\mc Q}$ on $A^{*}$, is saturated by $L$. Therefore, there exists a finite family $\mathfrak{B} \subseteq \mc P(A)^{\ell}$ 
such that \[L = \bigcup_{\B \in \mathfrak{B}} K_{\mc Q,\B},\]
and by Proposition \ref{prop:caraB_kKQB}, we conclude that $L$ is in $\mc B_1$. 
\end{proof}

\section*{Conclusion}

In this article, we provided a detailed duality theoretic treatment of the Boolean algebra of languages corresponding to Boolean combinations of sentences written by using a block of $k$ existential quantifiers, letter predicates, and uniform numerical predicates. Several directions can be taken in order to continue this work. First, even though we provided a characterisation of the dual space corresponding to this Boolean algebra, one could try to make this characterisation even more explicit. The condition we give demands to consider every finite colouring of $\mb N^k$, which can make it a bit abstract to use in practice. We think that a result similar to Proposition \ref{prop:caraImc1} should exist in the case $k \geq 2$.
Second, note that, in this article, we restrict our attention to uniform numerical predicates. We do not say anything about the fragment $\mc B \Sigma [\mc N_k]$ defined by using arbitrary $k$-ary numerical predicates, nor about smaller fragments contained in $\mc B \Sigma [\mc N^{u}_k]$. It would be interesting to understand the duality theory for Boolean algebras obtained by applying a layer of existential quantifier to an arbitrary subalgebra of the Boolean algebra corresponding to quantifier-free formulas with $k$-ary uniform numerical predicates, and letter predicates.
Another question that could be asked, and which it is necessary to solve in order to understand $FO[\mc N]$, the fragment obtained by looking at every sentence that can be written in first-order logic on words, is how is it possible to take into account alternation of quantifiers. The main ingredient that allows us to conduct our study in this article is the fact that a block of existential quantifier commutes with finite disjunctions, this is what enables us to use the framework of modal algebra. However, whenever we consider an alternation of universal and existential quantifiers, this framework is no longer available and one has to think about a way to overcome this problematic.
We redirect the reader interested in this problem to a well written survey \cite{Zeitoun} on the state of the art on quantifier alternation hierarchy of first-order logic over words, in the case of regular languages, and why it is a difficult problem to solve.
We also refer to \cite{substitution} for an approach which is based on substitution of formulas and transductions.

\section*{Acknowledgments}
This was supported in part by the ERC grant DuaLL.

\bibliographystyle{unsrt}  
\bibliography{ref}

\end{document}